\documentclass[acmsmall,screen,nonacm]{acmart}
\usepackage{subcaption} 

\usepackage[utf8]{inputenc}
\usepackage[T1]{fontenc}
\usepackage{microtype}
\usepackage{cleveref}
\crefformat{footnote}{#2\footnotemark[#1]#3}
\usepackage{stmaryrd}
\usepackage{amsmath}
\usepackage{graphicx}
\usepackage{mathtools}
\usepackage{cancel} 
\usepackage[dvipsnames]{xcolor}
\usepackage{wrapfig}
\usepackage{interval}
\intervalconfig{ soft open fences } 
\usepackage{tikz-cd}
\usetikzlibrary{shapes.multipart}
\usetikzlibrary{patterns}
\usetikzlibrary{patterns.meta}
\usetikzlibrary{positioning}
\usetikzlibrary{decorations.pathreplacing,calligraphy}

\usepackage{xspace}
\usepackage{multirow}

\usepackage{lib/mathpartir}
\usepackage{lib/locallabel}
\usepackage{lib/pftools}
\usepackage{lib/monotone}
\usepackage{lib/iris}

\renewcommand{\implies}{\Rightarrow}

\renewcommand{\phi}{\varphi}
\renewcommand{\theta}{\vartheta}

\renewcommand{\epsilon}{\varepsilon}

\newcommand{\isetsep}{\;\ifnum\currentgrouptype=16 \middle\fi|\;}

\usepackage{listings}
\usepackage{lib/lstasm}
\DeclareRobustCommand{\AA}{\ensuremath{\mathbb{A}}}

\DeclareRobustCommand{\NN}{\ensuremath{\mathbb{N}}}

\DeclareRobustCommand{\ZZ}{\ensuremath{\mathbb{Z}}}

\DeclareRobustCommand{\El}{\ensuremath{\mathcal{E}}}

\DeclareRobustCommand{\Il}{\ensuremath{\mathcal{I}}}

\DeclareRobustCommand{\Ml}{\ensuremath{\mathcal{M}}}

\DeclareRobustCommand{\Vl}{\ensuremath{\mathcal{V}}}

\newcommand{\V}[1]{\ensuremath{\mathit{#1}}}
\newcommand{\X}[1]{\ensuremath{\mathrm{#1}}}
\newcommand{\K}[1]{\ensuremath{\mathsf{#1}}}
\newcommand{\A}[1]{\ensuremath{\mathsf{#1}}}

\newcommand{\I}[1]{\ensuremath{\mathtt{#1}}}
\newcommand{\SC}[1]{\textsc{#1}}

\newcommand{\eqdef}{\triangleq}
\newcommand{\la}{\leftarrow}
\newcommand{\ra}{\rightarrow}
\newcommand{\Ra}{\Rightarrow}

\newcommand{\ie}{\emph{i.e.,}}
\newcommand{\eg}{\emph{e.g.,}}
\newcommand{\etc}{\emph{etc.}}

\newcommand{\cheri}{CHERI}
\newcommand{\cheritree}{CHERI-TrEE}
\newcommand{\cerise}{Cerise}
\newcommand{\cerisier}{Cerisier}
\newcommand{\rocq}{Rocq}
\newcommand{\iris}{Iris}

\definecolor{ibmcolourblind1}{RGB}{100, 143, 255} 
\definecolor{ibmcolourblind2}{RGB}{120, 94, 240} 
\definecolor{ibmcolourblind3}{RGB}{220, 38, 127} 
\definecolor{ibmcolourblind4}{RGB}{254, 97, 0} 
\definecolor{ibmcolourblind5}{RGB}{255, 176, 0} 

\definecolor{newsealcolour}{RGB}{254, 97, 0}
\definecolor{newCerisierColor}{RGB}{120, 94, 240} 
\definecolor{diffColor}{RGB}{254, 97, 0} 
\definecolor{commentGrey}{RGB}{128, 128, 128} 
\newcommand{\newCerisier}[1]{{\color{newCerisierColor} #1}}
\newcommand{\commentOut}[1]{{\color{commentGrey} #1}}
\newcommand{\diff}[1]{{\color{diffColor} #1}}
\newcommand{\emphc}[1]{{\color{diffColor}#1}}

\newcommand{\sreg}{\K{EC}}

\newcommand{\otype}{o}
\newcommand{\sealperm}{\V{sp}}
\newcommand{\regcap}{\ensuremath{(p,b,e,a)}}
\newcommand{\sealrange}{\ensuremath{\left[\sealperm,\otype_{b},\otype_{e},\otype_{a}\right]}}
\newcommand{\sealcap}{\ensuremath{ \sealed{\otype}{\V{sc}}}}
\DeclareMathOperator{\sealPred}{\X{seal\_pred}}
\DeclareMathOperator{\safeseal}{\X{safe\_to\_seal}}
\DeclareMathOperator{\safeunseal}{\X{safe\_to\_unseal}}
\DeclareMathOperator{\safeattest}{\X{safe\_to\_deinit}}
\DeclareMathOperator{\hash}{\X{hash}}

\newcommand{\hashconcat}{\mathrel{||}}

\newcommand{\enclaveLive}[2]{ \V{#1} \mapsto_{\X{E}} \V{#2} }
\DeclareMathOperator{\enclaveLiveAuth}{\X{enclavesLive}}
\newcommand{\enclaveHist}[2]{ \V{#1} \mapsto^{\always}_{\X{E}} \V{#2} }
\DeclareMathOperator{\enclaveHistAuth}{\X{enclavesHist}}
\newcommand{\enclavePrev}[1]{\X{DeInitialized}(\V{#1})}
\DeclareMathOperator{\enclavePrevAuth}{\X{enclavesPrev}}

\DeclareMathOperator{\tidxofot}{\X{tidx\_of\_otype}}
\DeclareMathOperator{\sweepr}{\X{sweep}}
\DeclareMathOperator{\freshtidx}{\X{fresh\_tidx}}
\DeclareMathOperator{\overlap}{\X{overlap}}

\newcommand{\mathtype}[1]{\K{#1}}
\newcommand{\constrtype}[1]{\X{#1}}

\newcommand{\Addr}{\mathtype{Addr}}
\newcommand{\Perm}{\mathtype{Perm}}
\newcommand{\CCap}{\mathtype{Cap}}

\newcommand{\OType}{\mathtype{OType}}
\newcommand{\SealPerm}{\mathtype{SealPerm}}
\newcommand{\SealRange}{\mathtype{SRange}}

\newcommand{\Sealable}{\mathtype{Sealable}}
\newcommand{\SealedCap}{\mathtype{SCap}}
\newcommand{\Word}{\mathtype{Word}}

\newcommand{\TIndex}{\mathtype{TIndex}}
\newcommand{\EId}{\mathtype{EIdentity}}
\newcommand{\ENum}{\mathtype{ENum}}
\newcommand{\ETable}{\mathtype{ETable}}

\newcommand{\RegName}{\mathtype{RegName}}
\newcommand{\RegFile}{\mathtype{Regs}}
\newcommand{\Mem}{\mathtype{Mem}}

\newcommand{\ExecState}{\mathtype{ExecState}}
\newcommand{\ExecConf}{\mathtype{ExecConf}}
\newcommand{\MachineState}{\mathtype{MState}}

\newcommand{\InstructionArg}{\mathtype{InstrArg}}
\newcommand{\Instruction}{\mathtype{Instr}}

\newcommand{\Running}{\constrtype{Running}}
\newcommand{\Halted}{\constrtype{Halted}}
\newcommand{\Failed}{\constrtype{Failed}}

\newcommand{\AddrMax}{\constrtype{AddrMax}}
\newcommand{\OTypeMax}{\constrtype{OTypeMax}}

\newcommand{\reservedAddresses}{\AA}
\newcommand{\customEnclaves}{\Ml}
\newcommand{\systemInvariant}{\Il}
\newcommand{\customEnclavesContract}{\V{custom\_enclaves\_contract}}
\newcommand{\customEnclavesPred}{\V{custom\_enclaves\_predicates}}

\newcommand{\safeVl}{\Vl}
\newcommand{\safeEl}{\El}
\newcommand{\safeV}[1]{\safeVl\left( #1 \right)}
\newcommand{\safeE}[1]{\safeEl\left( #1 \right)}

\newcommand{\phyLogReg}[1]{\sim_{reg}^{\V{#1}}}
\newcommand{\phyLogMem}[1]{\sim_{mem}^{\V{#1}}}
\newcommand{\stripreg}{\X{strip\_reg}}
\newcommand{\stripword}{\X{strip}}
\newcommand{\iscurrent}{\X{is\_root}}
\newcommand{\versionMono}{\X{versionMono}}
\newcommand{\reachableCurrent}{\X{reachableCurrent}}
\newcommand{\vinit}{\V{v_{init}}}
\newcommand{\aflag}{\V{a_{flag}}}

\let\oldwp\wpre
\renewcommand*{\wpre}[2]{
  \oldwp{#1}{\begin{array}{@{}c@{}}#2\end{array}}
}

\newcommand{\ppc}{\V{p_{pc}}}
\newcommand{\bpc}{\V{b_{pc}}}
\newcommand{\epc}{\V{e_{pc}}}
\newcommand{\apc}{\V{a_{pc}}}
\newcommand{\vpc}{\V{v_{pc}}}

\newcommand{\instr}[1]{{\textup{\texttt{#1}}}}

\newcommand{\reg}[1]{\K{r_{#1}}}
\newcommand{\pc}{\K{pc}}

\newcommand{\perm}[1]{\textsc{\MakeLowercase{#1}}\xspace}
\newcommand\RW{\perm{rw}}
\newcommand\RWX{\perm{rwx}}

\newcommand\RO{\perm{ro}}
\newcommand\RX{\perm{rx}}
\newcommand\enter{\perm{e}}

\newcommand{\permflowsto}{\preccurlyeq}

\newcommand{\sealed}[2]{\{ #2 \}_{#1}}

\newcommand{\crange}[2]{\interval[right closed fence=)]{#1}{#2}}

\newcommand{\confv}{\sigma}
\newcommand{\instrsem}[1]{\llbracket #1 \rrbracket}

\newcommand{\rmapsto}{\Mapsto}
\newcommand{\amapsto}[1]{\mapsto_{#1}}
\newcommand{\ecmapsto}[1]{\texttt{EC}(#1)}
\newcommand{\pure}[1]{\left\lceil #1 \right\rceil}

\usepackage{stackengine}
\newcommand{\xrsquigarrow}[2]{%
    \mathrel{\stackunder[2pt]{\stackon[2pt]{$\rightsquigarrow$}{$\scriptscriptstyle#1$}}{%
        $\scriptscriptstyle#2$}}}

\newcommand{\contspecbase}[2]{
  \left\{ #1 \right\} \xrsquigarrow{#2}{} \scaleobj{0.7}{\bullet}}

\newcommand{\stepspecbase}[2]{
        \anglebracket{ #1 }
        \rightarrow {}
        \anglebracket{ #2 }
}

\newcommand{\lcap}[5]{(#1,#2,#3,#4)_{#5}}

\makeatletter
\newcommand*{\@rowstyle}{}
\newcommand*{\rowstyle}[1]{
  \gdef\@rowstyle{#1}%
  \@rowstyle\ignorespaces%
}
\newcolumntype{=}{
  >{\gdef\@rowstyle{}}%
}
\newcolumntype{+}{
  >{\@rowstyle}%
}
\makeatother
\newcolumntype{R}{+r}
\newcolumntype{C}{+c}
\newcolumntype{L}{+l}

\definecolor{smallcolour}{RGB}{0, 150, 0} 
\newcommand{\xxsmallhshort}[1]{\ensuremath{\scalebox{0.8}{\textcolor{smallcolour}{\ensuremath{\mathsf{(#1)}}}}}}
\newcommand{\xxsmallh}[1]{\xxsmallhshort{#1}}

\newif\ifappendix \appendixtrue
\newcommand{\appendixReplace}[2]{\ifappendix #1\else Appendix #2\fi}

\usepackage{flushend}
\begin{document}

\title{Cerisier: A Program Logic for Attestation in a Capability Machine}

\author{June Rousseau}
\orcid{0009-0003-6778-6597}
\affiliation{%
  \institution{Aarhus University}
  \city{Aarhus}
  \country{Denmark}
}
\email{june.rousseau@cs.au.dk}

\author{Denis Carnier}
\orcid{0000-0003-2148-5193}
\affiliation{%
  \institution{KU Leuven}
  \city{Leuven}
  \country{Belgium}
}
\email{denis.carnier@kuleuven.be}

\author{Thomas Van Strydonck}
\orcid{0000-0002-5262-1381}
\affiliation{%
  \institution{Fortanix}
  \city{Eindhoven}
  \country{Netherlands}
}
\email{thomas.vanstrydonck@fortanix.com}

\author{Steven Keuchel}
\orcid{0000-0001-6411-438X}
\affiliation{%
  \institution{KU Leuven}
  \city{Leuven}
  \country{Belgium}
}
\email{steven.keuchel@kuleuven.be}

\author{Dominique Devriese}
\orcid{0000-0002-3862-6856}
\affiliation{%
  \institution{KU Leuven}
  \city{Leuven}
  \country{Belgium}
}
\email{dominique.devriese@kuleuven.be}

\author{Lars Birkedal}
\orcid{0000-0003-1320-0098}
\affiliation{%
  \institution{Aarhus University}
  \city{Aarhus}
  \country{Denmark}
}
\email{birkedal@cs.au.dk}

\begin{abstract}
  A key feature in trusted computing is attestation, which allows encapsulated components (enclaves) to prove their identity to (local or remote) distrusting components.
  Reasoning about software that uses the technique requires tracking how trust evolves after successful attestation.
  This process is security-critical and non-trivial, but no existing formal verification technique supports modular reasoning about attestation of enclaves and their clients, or proving end-to-end properties for systems combining trusted, untrusted and attested code.

  We contribute Cerisier, the first program logic for modular reasoning about trusted, untrusted and attested code, fully mechanized in the Iris separation logic and the Rocq Prover.
  We formalize a recent proposal, CHERI-TrEE, to extend capability machines with enclave primitives, as an extension to the Cerise capability machine and program logic.
  Our program logic comes with a universal contract for untrusted code, which captures both capability safety and local enclave attestation.
  Like Cerise, this universal contract is phrased in terms of a logical relation defining capabilities' authority.
  We demonstrate Cerisier by proving end-to-end properties for three representative applications of trusted computing: secure outsourced computation, mutual attestation and a modeled trusted sensor component.
\end{abstract}
\maketitle

\keywords{capability machines, enclaves, attestation, logical relation, universal contracts, Iris, separation logic}  
\section{Introduction}
\label{sec:introduction}

Trusted Computing refers to techniques for allowing software components to
establish trust in one another \citep{maene_hardware-based_2018}.
The technology has applications in digital
rights management, anti-cheat measures in gaming, private contact discovery, etc.
A simple representative application is client code outsourcing computation to untrusted code, for example, from a kernel client to a local user process to reduce available privilege, or from a resource-constrained device to a large but untrusted cloud server.

In this paper, we are interested in systems featuring \emph{attestation} and a \emph{dynamic root of trust}, where a
component $C$ (an \emph{enclave}) can ask the system for an attestation
certificate. This certificate attests to $C$'s identity, which might consist of a
hash of $C$'s source code, information about its location in memory and the system
it is running on. The certificate can be used to prove $C$'s identity to other components,
assuming that they trust the system's Trusted Computing Base (TCB), which typically consists only of
hardware and firmware, but not the operating system
or application software. The certificate can be securely communicated to other components on the same system (local attestation) or cryptographically signed using a system-embedded cryptographic key for transmission to remote components (remote attestation).

Attestation involves a complex interplay and non-trivial trust relationship
between trusted components, enclave components
(initially untrusted, but become trusted after successful attestation), and
surrounding untrusted code and trusted hardware/firmware. It is
typically applied in security-critical scenarios and trusted components
are kept small for security reasons. These characteristics (\emph{small} but
\emph{complex} and \emph{critical}) suggest the field as a prime target for
rigorous software verification.
Until now, verification has remained restricted to only one of
the involved components (for example, the crypto used by the trusted
hardware/firmware \cite{zain_formally-verified_2025,sardar_demystifying_2021}),
which does not yield end-to-end properties of the combined system.

In this paper, we present Cerisier: the first rigorous program logic for
modular reasoning about trusted, untrusted and attested code, and producing end-to-end guarantees about the combined system.
We base ourselves on the CHERI-TrEE~\citep{cheriTree} trusted computing primitives for the CHERI-RISC-V capability machine~\citep[\S4]{cheri-isa-v9}, and on Cerise~\citep{cerise}, a
program logic implemented in Iris~\citep{iris} for a capability
machine modeled loosely after CHERI-RISC-V.

Capability machine instruction set architectures (ISAs) offer a primitive
notion of capabilities that
represent the authority to perform certain actions, e.g., read-write
access to a certain memory range, or the permission to
invoke a piece of code with access to private data and
capabilities, without exposing that private data to the caller.

CHERI-TrEE aims to maximally reuse existing CHERI-RISC-V primitives
and only adds a minimal set of new ones to support trusted computing.
Specifically, it reuses so-called sealed capability pairs for encapsulating
components, and sealed capabilities for secure communication.
It adds an \emph{exclusive memory ownership primitive} for components to verify their exclusive access to a memory region.
The CPU guarantees this exclusive access based on exhaustively scanning the memory for overlapping capabilities.
This happens at enclave initialization time, or at the request of a component to dynamically extend its memory footprint.
This essentially enables what is known as a dynamic root-of-trust: a component can start out as untrusted code, but at some point ask the CPU to initialize itself as an enclave.
CHERI-TrEE also adds a \emph{primitive for local attestation}, through which a
component can ask the CPU to vouch for its identity. Concretely, the CPU will
provide the component with certain private capabilities that can be verifiably
connected to the component's identity by distrusting local software.

Cerisier extends the Cerise program logic with support for CHERI-TrEE: adaptations of existing techniques for supporting sealed capabilities (based on program logics for symbolic cryptography \cite{sumiiLogRelEncryption}) and the exclusive memory ownership primitive (based on a program logic for garbage collection \citep{hurSeparationLogicPresence2011}).\footnote{We do not support sealed pairs for encapsulation, since Cerise already offers an alternative: sentry capabilities.}
Additionally, Cerisier adds a novel technique for reasoning about CHERI-TrEE's attestation feature by extending Cerise's universal contract.
This universal contract provides guarantees about untrusted code
\citep{georges_efficient_2021,huyghebaert_formalizing_2023}, formalized in terms of a logical relation.
Specifically, before using the universal contract, one can register predicates for a number of enclaves of interest, hereafter \emph{enclave sealing predicates}.
After verifying that the registered enclaves do indeed respect their registered enclave sealing predicates, the universal contract then ensures that capabilities produced by untrusted code will respect attested enclaves' enclave sealing predicates.
The resulting reasoning pattern is unusual: for completely arbitrary code, we get quite specific guarantees about the capabilities they can produce, namely that those capabilities can only be correctly attested if they satisfy the registered enclave sealing predicate (we present a concrete simple example in the next section).
The contributed technique quite generally captures intuitive reasoning on trusted execution systems and we believe it applies to general trusted computing systems even though our current development is specific to capability machines.

To summarize, in this paper we contribute:
\begin{itemize}
  \item An operational semantics for a capability machine with attestation primitives (\Cref{sec:opsem}).
  \item The first program logic with support for modular reasoning about attested enclaves, their clients and untrusted code, including support for sealing, memory revocation and local attestation (\Cref{sec:proglog}).
  \item A universal contract capturing both capability safety and local enclave attestation, formalized using a novel parameterized logical relation (\Cref{sec:enclave_res,sec:ftlr}).
  \item Three case studies inspired by the literature, in which we prove non-trivial end-to-end properties of representative applications of trusted computing (\Cref{sec:case_studies}).
\end{itemize}
All definitions and proofs are mechanized in \rocq{}~\citep{cerisier-artifact,cerisier-repository}, and available in the accompanying material~\citep{cerisier-arxiv}.
In the rest of the paper, the definitions marked in \textcolor{newCerisierColor}{blue} correspond to
novel contributions of this work, when it extends preexisting definitions.

\section{Trusted Computing Motivating Example: Secure Outsourced Computation}
\label{sec:soc}

\label{sec:motiv-exampl-secure}

To motivate our technical development, let us take a closer look at a representative application of trusted execution: a simple form of \emph{secure outsourced computation} (SOC) \citep{barbosa_foundations_2016,li_survey_2023}.
We assume that a client wants a computation-intensive task to be executed by untrusted code.
For ease of explanation, this section considers local attestation of a stateless computation task with public inputs and we hide some technical details behind pseudocode until subsequent sections.

\begin{figure}

\newcommand{\squarew}{4}
\newcommand{\squareh}{3}
\newcommand{\smallsquarew}{1.25}
\newcommand{\smallsquareh}{1}

\newcommand{\ssmallsquareh}{1}
\newcommand{\ssmallsquarew}{1.3}

\newcommand{\paddingwA}{0.6*\squarew}
\newcommand{\paddinghA}{0.3}
\definecolor{red1}{HTML}{EF2929}
\newcommand{\scalefig}{0.75}

\centering

\captionsetup[subfigure]{justification=centering}
\begin{subfigure}[t]{0.3\textwidth}
\centering
\begin{tikzpicture}[scale=\scalefig, every node/.style={transform shape}]
  \draw[pattern={Lines[angle=45,distance={6pt/sqrt(2)}]}, pattern color=red1, opacity=0.1]
  (0,0) -- (0,\squareh) -- (\squarew,\squareh) -- (\squarew,0) -- (0,0) ;
  \draw[]
  (0,0) -- (0,\squareh) -- (\squarew,\squareh) -- (\squarew,0) -- (0,0) ;

  \node at (0.35*\squareh, 0.75*\squareh) {\textbf{Adversary}};

  \draw[fill=white]
        (0,0) -- (0, \smallsquareh) -- (\smallsquarew, \smallsquareh) -- (\smallsquarew, 0) -- (0,0);
        \node at (0.5*\smallsquarew, 0.5*\smallsquareh) {\textbf{Client}};

  \draw[semithick,->,color=red]
        (0.8*\smallsquareh, 0.3*\smallsquareh) to[bend right] +(0.3*\squarew, 0);
\end{tikzpicture}
    \caption{Passing control to the adversary.}
    \label{fig:interaction_scenario_attest_tc:calling_adv}
\end{subfigure}
\begin{subfigure}[t]{0.3\textwidth}
  \centering
\begin{tikzpicture}[scale=\scalefig, every node/.style={transform shape}]
  \draw[pattern={Lines[angle=45,distance={6pt/sqrt(2)}]}, pattern color=red1, opacity=0.1]
  (0,0) -- (0,\squareh) -- (\squarew,\squareh) -- (\squarew,0) -- (0,0) ;
  \draw[]
  (0,0) -- (0,\squareh) -- (\squarew,\squareh) -- (\squarew,0) -- (0,0) ;
  \node at (0.35*\squareh, 0.75*\squareh) {\textbf{Adversary}};

  \draw[fill=white]
        (0,0) -- (0, \smallsquareh) -- (\smallsquarew, \smallsquareh) -- (\smallsquarew, 0) -- (0,0);
        \node at (0.5*\smallsquarew, 0.5*\smallsquareh) {\textbf{Client}};

  \draw[dashed, fill=white]
  (\paddingwA, 0.5*\squareh - 0.5\ssmallsquareh)
  -- (\paddingwA + \ssmallsquarew, 0.5*\squareh - 0.5\ssmallsquareh)
  -- (\paddingwA + \ssmallsquarew, 0.5*\squareh - 0.5\ssmallsquareh + \ssmallsquareh)
  -- (\paddingwA, 0.5*\squareh - 0.5\ssmallsquareh + \ssmallsquareh)
  -- (\paddingwA, 0.5*\squareh - 0.5\ssmallsquareh) ;
  \node[align=center] at ( \paddingwA + 0.5 *\ssmallsquarew, 0.5*\squareh)
  {\small \textbf{SOC}\\\textbf{enclave}};
  \draw[semithick,->,color=red]
  ( \paddingwA - 0.25*\ssmallsquarew, 0.5*\squareh + 0.35 *\ssmallsquareh)
  to[bend left] +(0.5*\ssmallsquarew, 0);
  \draw[semithick,<-,color=red]
  ( \paddingwA - 0.25*\ssmallsquarew, 0.5*\squareh - 0.35 *\ssmallsquareh)
  to[bend right] +(0.5*\ssmallsquarew, 0);

\end{tikzpicture}
    \caption{The adversary initialises the enclave.}
    \label{fig:interaction_scenario_attest_tc:adv_einit}
\end{subfigure}
  \begin{subfigure}[t]{0.3\textwidth}
  \centering
\begin{tikzpicture}[scale=\scalefig, every node/.style={transform shape}]
  \draw[pattern={Lines[angle=45,distance={6pt/sqrt(2)}]}, pattern color=red1, opacity=0.1]
  (0,0) -- (0,\squareh) -- (\squarew,\squareh) -- (\squarew,0) -- (0,0) ;
  \draw[]
  (0,0) -- (0,\squareh) -- (\squarew,\squareh) -- (\squarew,0) -- (0,0) ;
  \node at (0.35*\squareh, 0.75*\squareh) {\textbf{Adversary}};

  \draw[fill=white]
        (0,0) -- (0, \smallsquareh) -- (\smallsquarew, \smallsquareh) -- (\smallsquarew, 0) -- (0,0);
        \node at (0.5*\smallsquarew, 0.5*\smallsquareh) {\textbf{Client}};

  \draw[dashed, fill=white]
  (\paddingwA, 0.5*\squareh - 0.5\ssmallsquareh)
  -- (\paddingwA + \ssmallsquarew, 0.5*\squareh - 0.5\ssmallsquareh)
  -- (\paddingwA + \ssmallsquarew, 0.5*\squareh - 0.5\ssmallsquareh + \ssmallsquareh)
  -- (\paddingwA, 0.5*\squareh - 0.5\ssmallsquareh + \ssmallsquareh)
  -- (\paddingwA, 0.5*\squareh - 0.5\ssmallsquareh) ;
  \node[align=center] at ( \paddingwA + 0.5 *\ssmallsquarew, 0.5*\squareh)
  {\small \textbf{SOC}\\\textbf{enclave}};

  \draw[thick,->,color=ForestGreen]
  (\paddingwA + 0.15*\ssmallsquarew, 0.5*\squareh - 0.35*\ssmallsquareh)
  to [bend left]
  node [midway,anchor=north,fill=white,xshift=20,yshift=10] {\small $P_{\A{SOC}}(w): w = 42$}
  (0.8*\smallsquarew, 0.2*\smallsquareh);

\end{tikzpicture}
    \caption{The client attests the enclave.
    }
    \label{fig:interaction_scenario_attest_tc:custom_attest}
  \end{subfigure}

  \caption{
  \label{fig:interactions_scenarios_attest_tc}
  Interaction scenario for the secure outsourced computation (SOC) example.}%
\end{figure}


\begin{figure}
  {\small
  \begin{subfigure}[t]{0.40\textwidth}
    \begin{equation*}
      \begin{aligned}
        \mathrm{Enclave} \triangleq\enspace
        & (k_p, k_s) \gets \mathrm{GetAttestationKeys}() \\
        & r \gets 42 \quad \commentOut{ // \mathit{\ computation...}} \\
        & \textbf{return } (\mathrm{Attest}(r, k_s),~ k_p) \\[5pt]
        \mathrm{Adversary} \triangleq\enspace
        & \mathrm{enclave}^* \gets \mathrm{Initialize}(\mathrm{Enclave}) \\
        & \textbf{return } \mathrm{enclave}()
      \end{aligned}
    \end{equation*}
  \end{subfigure}
  \hfill
  \begin{subfigure}[t]{0.55\textwidth}
    \vspace{4pt}
    \begin{equation*}
      \begin{aligned}
        \mathrm{Client} \triangleq\enspace
        & (r^*, k_p) \gets \mathrm{Adversary}() \\
        & \textbf{if } \neg\, \mathrm{VerifyKey}(k_p, \mathrm{ID}_{\mathrm{soc}})\ \textbf{ then abort} \\
        & \textbf{if } \neg\, \mathrm{VerifyMessage}(r^*, k_p)\ \textbf{ then abort} \\
        & \textbf{assert } \mathrm{Value}(r^*) = 42
      \end{aligned}
    \end{equation*}
  \end{subfigure}
  }
  \caption{\label{fig:secure_outsourced_computation_pseudocode}Simplified pseudocode of Secure
    Outsourced Computation (SOC) using CHERI-TrEE primitives.
    The $\mathrm{Enclave}$ and $\mathrm{Client}$ components contain known code. Under normal circumstances the $\mathrm{Adversary}$ executes arbitrary code, here we fix it to capture its intended execution.}
\end{figure}

In our example, a trusted client interacts with an untrusted adversary and an enclave (initially untrusted), in a scenario depicted in \Cref{fig:interactions_scenarios_attest_tc}
and implemented by the pseudocode in \Cref{fig:secure_outsourced_computation_pseudocode}.
After the trusted client invokes the adversary
(\Cref{fig:interaction_scenario_attest_tc:calling_adv}),
the adversary initializes the enclave using a trusted execution primitive (\Cref{fig:interaction_scenario_attest_tc:adv_einit}).
This initialization triggers the encapsulation of the enclave: it gets exclusive access to its own memory and can only be invoked at its entry point.
Additionally, the enclave obtains a public/private key pair $(k_p, k_s)$ for attesting messages (for now, imagine a symbolic model of cryptographic signatures).
The untrusted adversary then passes control to the enclave
(\Cref{fig:interaction_scenario_attest_tc:adv_einit}) which will perform the heavy computation and emit a message containing the attested result~$r^*$ and the public attestation key~$k_p$.

After receiving this message from the untrusted adversary acting as an intermediary, the client validates the received message (\Cref{fig:interaction_scenario_attest_tc:custom_attest}).
Successful validation allows the client to trust that the message originated from a correctly initialized enclave with a certain identity $\mathrm{ID}_{\mathrm{SOC}}$.
This identity is the combination of the enclave's executable code with details about the system it runs on, summarized using a cryptographic hash.
The client can then verify that the corresponding enclave will only attest messages containing correct results, and conclude to trust the received result.

Let us now give a very rough sketch to explain how our approach enables us
to obtain an end-to-end guarantee
for this example.

First, we must prove that the SOC enclave code respects its enclave sealing predicate.
In this case, we've picked the predicate that states the enclave will compute 42.
Since the enclave's identity---and hence its code---is known after validation by the client,
we can employ standard Hoare-style reasoning using the rules of the logic to complete the proof.
Second, we must prove the specification (Hoare triple) for the client program. Again, since the
client program code is known, we can proceed using the proof rules of the logic.
However, note that the client invokes (and later awaits) the untrusted adversary.
In general, we will not know which code the adversary executes\footnote{In \Cref{fig:secure_outsourced_computation_pseudocode} we have shown concrete example code for the adversary to make the example easier to understand.},
thus, when we arrive at the invocation of the adversary,
we make use of a \emph{universal contract}, defined using the program logic,
 to obtain a Hoare triple for the unknown adversary's code.
To clarify, the universal contract can be viewed as a specification for arbitrary code.
In our approach, we parameterize this contract by a list of
identities of enclaves and their associated predicates, which the adversary
is allowed to initialize.
In this particular example, this list only contains the identity of the SOC enclave.
We then instantiate the contract with
the proof of the SOC enclave to obtain a property for the adversary's code.
Finally, we can prove that the assert of the client holds
because the value's attest must respect the enclave sealing predicate of the SOC enclave.

This section ignores many details and assumptions for ease of explanation.
In particular, \Cref{fig:secure_outsourced_computation_pseudocode} uses high-level primitives,
but in the rest of the paper, we use the instructions of the Cerisier machine.
In the next sections, we explain some background knowledge before providing more detailed explanations
of the Cerisier machine.

\section{Background and Technical Foundations}
\label{sec:background-cm-cerise-cheritree}

In this section, we recall the Cerise capability machine model and its program logic.
Moreover, we discuss the addition of sealed capabilities, the memory revocation primitives, and
hashing primitives, which are all used as a foundation for attestation.

\subsection{Background: Capability Machines and Cerise Operational Semantics}

Hardware capabilities are one technique to isolate multiple software components executing on a single system.
The ideas date back to seminal work by~\citet{dennis1966} and~\citet{carter_hardware_1994}, yet more recently, they have garnered renewed interest through the Capability Hardware Enhanced RISC Instructions (\cheri) project~\citep{cheri}.
\cheri{} ISAs offer fine-grained memory protection and fast protection-domain crossing through \textit{capabilities}, unforgeable tokens that represent authority over memory or other resources.
Since capabilities are unforgeable, they support the secure interaction between trusted code and untrusted, potentially adversarial code, by restricting the set of capabilities the untrusted code has access to, as depicted in \Cref{fig:cerise_interaction_scenario}.

\begin{figure}
  \small
{
\begin{minipage}[t]{0.45\textwidth}
\[
\arraycolsep=3pt
\begin{array}{lclcl}
  a & \in & \Addr & \eqdef & [0, \AddrMax] \\
  p & \in & \Perm & \mathrel{::=} & \perm{o} \mid \enter \mid \RO \mid \RX \mid \RW \mid \RWX \\
  c & \in & \CCap & \eqdef & \left\{ (p, b, e, a) \mid b, e, a \in \Addr \right\} \\
  w & \in & \Word & \eqdef & \ZZ + \CCap\\
  r &\in &\RegName & \mathrel{::=} & \pc \mid \reg{0} \mid \reg{1} \mid \ldots \mid \reg{31}
\end{array}
\]
\end{minipage}%
\vspace{-0.2em}
\begin{minipage}[t]{0.48\textwidth}
\[
\arraycolsep=3pt
\begin{array}{LCLCL}
  \V{reg} & \in & \RegFile & \eqdef & \RegName \rightarrow \Word \\
  m & \in & \Mem & \eqdef & \Addr \rightarrow \Word \\
  s & \in & \ExecState & \mathrel{::=} & \Running \mid \Halted \mid \Failed \\
  \confv & \in & \ExecConf & \eqdef & \RegFile \times \Mem
  \\
  \Sigma & \in & \MachineState & \eqdef & \ExecState \times \ExecConf
\end{array}
\]
\end{minipage}%
\vspace{0.2em}
\begin{minipage}[t]{\textwidth}
\[
\arraycolsep=3pt
\begin{array}{@{}lcl@{}}
  \rho \in \InstructionArg & \eqdef & \ZZ + \RegName
\\
  i \in \Instruction & \mathrel{::=} &
  \begin{array}[t]{@{}+l@{}}
               \instr{jmp}\; r_\mathrm{d} \mid
               \instr{jnz}\; r_\mathrm{d} \; r_\mathrm{cond} \mid
               \instr{fail} \mid
               \instr{halt} \mid
               \instr{mov}\; r_\mathrm{d}\; \rho_\mathrm{s} \mid
               \instr{add}\; r_\mathrm{d} \; \rho_\mathrm{1} \; \rho_\mathrm{2} \mid
  \\
               \instr{sub}\; r_\mathrm{d} \; \rho_\mathrm{1} \; \rho_\mathrm{2} \mid
               \instr{lt}\; r_\mathrm{d} \; \rho_\mathrm{1} \; \rho_\mathrm{2} \mid
               \instr{lea}\; r_\mathrm{d} \; \rho_\mathrm{s} \mid
               \instr{load}\; r_\mathrm{d}\; r_\mathrm{s} \mid
               \instr{store}\; r_\mathrm{d} \; \rho_\mathrm{s} \mid
               \instr{restrict}\; r_\mathrm{d} \; \rho_\mathrm{s} \mid
  \\
               \instr{subseg}\; r_\mathrm{d} \; \rho_{1}\; \rho_{2} \mid
               \instr{getp}\; r_\mathrm{d}\; r_\mathrm{s} \mid
               \instr{getb}\; r_\mathrm{d}\; r_\mathrm{s} \mid
               \instr{gete}\; r_\mathrm{d}\; r_\mathrm{s} \mid
               \instr{geta}\; r_\mathrm{d}\; r_\mathrm{s}
\end{array}
\end{array}
\]
\end{minipage}
}

\caption{
  \label{fig:opsem_grammar}
  The base Cerise capability machine's words, state, and instructions.
}

\end{figure}


\Cref{fig:opsem_grammar} defines the machine state and instruction syntax of the \cerise{} capability machine, loosely modeled after \cheri{}.
Chiefly, \cerise{} carefully distinguishes two kinds of machine words: regular integers $\ZZ$ for arithmetic and capabilities $\CCap$ for memory accesses and function closures.
A \cerise{} capability is a tuple $\regcap$, representing a memory region in the interval
$\crange{b}{e}$, a permission $p$ (such as read \perm{R}, write \perm{W}, execute \perm{X}), and a memory address $a$ which the capability points to.
The machine's state is represented as a pair ($\RegFile \times \Mem$).
It has three execution states: \Running, \Halted{} or \Failed.
The register file $\V{reg} : \RegFile$ and the memory $\V{mem} : Mem$ map register names and memory addresses to machine words, respectively.
\WFclear
\begin{wrapfigure}{R}{.25\textwidth}
\newcommand{\squarew}{2}
\newcommand{\squareh}{2}
\newcommand{\smallsquarew}{1}
\newcommand{\smallsquareh}{1}
\newcommand{\paddingg}{0.09}
\definecolor{red1}{HTML}{EF2929}
\newcommand{\scalefig}{0.75}
\centering

{\begin{tikzpicture}[scale=\scalefig, every node/.style={transform shape}]
  \draw[pattern={Lines[angle=45,distance={6pt/sqrt(2)}]}, pattern color=red1, opacity=0.1]
  (0,0) -- (0,\squareh) -- (\squarew,\squareh) -- (\squarew,0) -- (0,0) ;
  \draw[]
  (0,0) -- (0,\squareh) -- (\squarew,\squareh) -- (\squarew,0) -- (0,0) ;

  \node at (0.25*\squareh, 0.75*\squareh) {\textbf{U}};

  \draw[fill=white]
        (0,0) -- (0, \smallsquareh) -- (\smallsquarew, \smallsquareh) -- (\smallsquarew, 0) -- (0,0);

  \node at (0.5*\smallsquareh, 0.5*\smallsquareh) {\textbf{K}};

  \draw[semithick,->,color=red]
        (0.8*\smallsquareh, 0.3*\smallsquareh) to[bend right] +(0.3*\squarew, 0);
\end{tikzpicture}
\subcaption{ \label{fig:interaction_scenario:calling_adv}
  Scenario 1: passing control to untrusted code}
}

{
\begin{tikzpicture}[scale=\scalefig, every node/.style={transform shape}]
  \draw[pattern={Lines[angle=45,distance={6pt/sqrt(2)}]}, pattern color=red1, opacity=0.1]
  (0,0) -- (0,\squareh) -- (\squarew,\squareh) -- (\squarew,0) -- (0,0) ;
  \draw[]
  (0,0) -- (0,\squareh) -- (\squarew,\squareh) -- (\squarew,0) -- (0,0) ;

  \node at (0.25*\squareh, 0.75*\squareh) {\textbf{U}};

  \draw[fill=white]
        (0,0) -- (0, \smallsquareh) -- (\smallsquarew, \smallsquareh) -- (\smallsquarew, 0) -- (0,0);

  \node at (0.5*\smallsquareh, 0.5*\smallsquareh) {\textbf{K}};

  \draw[semithick,->,color=red,dashed]
        (0.8*\smallsquareh, 0.4*\smallsquareh) to[bend right] +(0.3*\squarew, 0);

  \draw[semithick,->,color=red]
        (0.8*\smallsquareh+0.30*\squarew, 0.7*\smallsquareh) to[bend right] +(-0.3*\squarew, 0);
\end{tikzpicture}
\subcaption{\label{fig:interaction_scenario:being_called_by_adv}
  Scenario 2: being called by untrusted code (possibly many times)}
}

\caption{\label{fig:cerise_interaction_scenario}Two scenarios: a trusted component (K) interacts with an untrusted context (U). }

\end{wrapfigure}

\begin{figure}

  \footnotesize
  \begin{mathpar}
  \inferrule[ExecSingle\label{ExecSingle}]{}
  {{\begin{array}{l}
    (\K{Running}, \confv) \rightarrow \left\{
    \begin{array}{l}
      \instrsem{\mathit{decode}(z)}(\confv)
      \quad \begin{array}[t]{ll}
        \mathrm{if}  \!\!\!\!&
        \confv.\mathrm{reg(\pc)} = (p, b, e, a) \wedge b \le a < e \wedge  p \in \{\RX, \RWX \} \wedge \confv.\mathrm{mem(\V{a})} = z \\[0.5em]
      \end{array}
      \\
      (\K{Failed},\, \confv) \qquad\quad\;\, \mathrm{otherwise}
    \end{array}
    \right.
    \end{array}}
  }
  \end{mathpar}
  \vspace{1em}

\small
\bgroup
\setlength\tabcolsep{0.54em}
\footnotesize
\noindent\makebox[\textwidth]{\begin{tabular}{|C|L|L|}
  \hline
  $i$ & $\instrsem{i}(\confv)$ & Conditions \\ \hline
  \ldots & \ldots & \ldots
  \\ \hline
  $\instr{load}\; r_d\; r_s$ & $\X{updPC}(\confv[\X{reg}.r_d \mapsto w])$
  & \!\!\!\begin{tabular}{l}
      $\confv.\X{reg}(r_s) = (p, b, e, a)$ and $w = \confv.\X{mem}(a)$ \\
      and $b \le a < e$ and
      $p \in \{ \RO,  \RX, \RW, \RWX \}$
    \end{tabular}
  \\ \hline
  $\instr{jmp}\; r$
  & {$(\K{Running}, \confv \left[ \X{reg}.\pc \mapsto \V{new} \right])$}
                               & $ \text{if } \confv.\X{reg}(r) = (\enter, b, e, a)
                                 \text{ then } \V{new} =(\perm{RX}, b, e, a)
                                 \text{ else } \V{new} =\confv.\X{reg}(r)$
  \\ \hline
  $\instr{getb}\; r_d\; r_s$
  & $\X{updPC}(\confv[\X{reg}.r_d \mapsto b])$
  & \!\!\!\begin{tabular}{l}
      $\confv.\X{reg}(r_s) = (\_, b, \_, \_)$
    \end{tabular}
  \\ \hline
\end{tabular}}
\egroup

\footnotesize
\begin{mathpar}
  \X{updPC}(\confv) = \left\{
    \begin{array}{ll}
      (\K{Running}, \confv[\X{reg}.\pc \mapsto (p, b, e, a+1)])
      & \text{if } \confv.\X{reg}(\pc) = (p, b, e, a)\\ 
      (\K{Failed}, \confv) & \text{otherwise}
    \end{array}
    \right.

    \mathit{decode} : \ZZ \rightarrow \Instruction
\color{black}
\end{mathpar}

\caption{\label{fig:opsem_excerpt}
Excerpt of the operational semantics for of a single instruction.
The complete operational semantics can be found in the Supplemental Materials, under
\appendixReplace{\Cref{sec:appendix_opsem}}{A}.
}
\end{figure}


\Cref{fig:opsem_excerpt} shows an excerpt of the operational semantics.
At each step of execution, the rule \textsc{ExecSingle} is applied if the machine is in the \Running{} state.
The machine first checks if the PC contains a valid, executable capability.
It then decodes the instruction $i$ stored at current address $a$, and updates machine state according to the instruction's semantics $\instrsem{i}(\sigma)$, detailed in the table.
Like other capability machines, \cerise{} instructions ensure that capabilities are unforgeable, i.e. they can only be derived from other capabilities with more authority.
Memory accesses can only be performed with memory capabilities, and the machine dynamically checks the access against their range and permissions.
For instance, executing the $\instr{load}\;\reg{d}\;\reg{s}$ instruction requires a capability in source register $\reg{s}$ with at least $\perm{R}$ permission, and an in-bounds address $a \in \crange{b}{e}$.
If so, the register file is updated to map destination register $\reg{d}$ to the word in memory at address $a$.

Controlled invocation is achieved with so-called \emph{sealed entry (sentry)} capabilities, with permission $\perm{E}$.
They can be thought of as opaque and immutable, only to be jumped to, roughly a low-level encapsulated closure.
For instance, the $\instr{jmp}$ instruction will ``upgrade'' a capability's $\perm{E}$ permission to  $\perm{RX}$ when invoked, thus giving the invokee access to the capability's memory range (where private state may be stored).

\subsection{Cerise Program Logic}

For reasoning, we start from the \cerise{} program logic \citep{cerise},
formalized and mechanized in \iris{}~\citep{iris} and \rocq{}.
It supports modular reasoning about systems combining both trusted and adversarial code.

The syntax of the program logic is presented in \Cref{fig:program_logic_syntax} and extends
\iris{}, an impredicative higher-order separation logic.
\begin{figure} 
\[
  \begin{array}{l}
    P, Q \in \iProp \mathrel{::=}
    \begin{array}[t]{lr}
      \TRUE \mid \FALSE \mid \forall x \ldotp P \mid \exists x \ldotp P \mid \ldots
      & \text{higher-order logic}
      \\
      \mid P \ast Q \mid P \wand Q \mid \pure{\phi} \mid \always P \mid \later P
      \mid \knowInv{}{P}
      & \text{separation logic and invariants}
      \\
      \mid \A{a} \mapsto \V{lw}
      \mid r \rmapsto \V{lw}
      & \text{machine resources}
      \\
      \mid \stepspecbase{P}{s\ldotp Q}{}
      \mid \contspecbase{P}{}
      & \text{program logic}
    \end{array}
  \end{array}
\]
\caption{Syntax of the \cerisier{} program logic.}
\label{fig:program_logic_syntax}
\end{figure}

%
%
It contains the usual higher-order logic quantifiers and connectives:
universal $\forall$ and existential $\exists$ quantification,
  conjunction $\land$ and disjunction $\lor$,~\etc{}, as well as
the separating conjunction $\ast$ and magic wand $\wand$.
The pure proposition $\pure{\varphi}$ holds
if the meta-proposition $\varphi$ holds in the meta-logic
(in our case, \rocq).
\iris{} comes with two kinds of propositions: ephemeral and persistent
propositions. Ephemeral propositions might be invalidated at some point, and are
therefore not duplicable. On the other hand, persistent propositions are, once
valid, always valid.
The persistently modality $\always$ describes the persistent part of the proposition $P$.
The later modality $\later$ is mostly technical.
It roughly means that the proposition $P$ holds \emph{after one logical step of execution}.
The invariant $\knowInv{}{P}$ states that $P$ holds for now and forever.
$P$ can be accessed for one step of execution, but must be restored by the end of the step.
We refer the reader to \citet{iris} for details.

To connect the logic with the machine's physical state, the program logic contains points-to separation logic predicates for registers ($r \rmapsto{} \V{w}$), memory addresses ($a \mapsto \V{w}$) and contiguous memory ranges ($\crange{b}{e} \mapsto l$ for a list of words $l$).
Furthermore, two different kinds of triples (both built on top of Iris' weakest-precondition) describe program specifications:
\begin{itemize}
  \item The single instruction execution $\stepspecbase{P}{s.\, Q}$ states that,
    starting in a state satisfying $P$, the machine reaches a state satisfying $Q$ after one step of execution, with execution state $s$.
    We omit $s$ if the machine remains in the $\Running$ state.

  \item The complete, safe execution $\contspecbase{P}{}$ states that,
        starting in a state satisfying $P$, the machine either diverges, or runs until it halts or fails without breaking registered invariants.
\end{itemize}
Each instruction comes with a program logic rule describing its behavior in the logic.
For example, \Cref{fig:loadwprules} shows the $\instr{load}$ instruction rule.
We refer to \citet{cerise} for a more detailed exposition.
\begin{figure}
  \begin{mathpar}
    \centering
    \inferH{} 
    {\X{ValidPC}(\ppc,\bpc,\epc,\apc) \\
      p \in \{\perm{RO}, \perm{RX}, \perm{RW}, \perm{RWX}\} \\
      b \leq a < e \\
      \X{decode}(n) = \instr{Load}\; \I{r_{dst}}\; \I{r_{src}}
    }
    {\stepspecbase
      {
        \begin{array}{l}
          \pc \rmapsto (\ppc,\bpc,\epc,\apc) \\
          {} \ast \apc \mapsto n
           \ast a \amapsto{} w \\
          {} \ast \I{r_{src}} \rmapsto (p,b,e,a)
          \ast \I{r_{dst}} \rmapsto - \\
        \end{array}
      }
      {
        \begin{array}{l}
          \pc \rmapsto (\ppc,\bpc,\epc,\diff{\apc + 1}) \\
          {} \ast \apc \mapsto n
          \ast a \mapsto w \\
          {} \ast \I{r_{src}} \rmapsto (p,b,e,a)
          \ast \I{r_{dst}} \rmapsto \diff{w} \\
        \end{array}
      }
    }
  \end{mathpar}%
  \begin{align*}
    \text{where}~\X{ValidPC}(p, b, e, a, v) &\triangleq p \in \{\perm{RX}, \perm{RWX}\} \land b \le a < e
  \end{align*}

\caption{WP rule for a valid $\instr{load}$.
Changes between the pre- and postcondition are marked in \diff{orange}.}
\label{fig:loadwprules}
\end{figure}



\cerise{} establishes \emph{capability safety}, a formal property that captures interactions between known, trusted code $K$ invoking  (\Cref{fig:interaction_scenario:calling_adv}) and being invoked by (\Cref{fig:interaction_scenario:being_called_by_adv}) unknown, untrusted code $U$.
Safety of such interactions is proved by establishing a Hoare triple for the known code using
program logic rules, and appealing to capability safety when unknown code is invoked.
Capability safety is formalized as a universal contract: a contract for arbitrary code, expressing that its execution will not exceed the authority of the capabilities it has access to, or break registered invariants.
The authority of capabilities is defined using a logical relation\footnote{Traditionally, logical relations are used to interpret type systems,
  but they can also be used to interpret the universal type used by an untyped
  language \cite{DBLP:conf/dagstuhl/Pitts10,DBLP:conf/eurosp/DevrieseBP16},
  which is essentially what is happening in Cerise.} $\safeVl$, which is defined using the program logic, and which we explain in the following.

\begin{figure}
\raggedright
\begin{align*}
\begin{array}[t]{ll}
\safeE{w} \triangleq~ &
    \setlength\arraycolsep{2pt}
    \begin{array}[t]{l}
      \contspecbase
        {\begin{array}{l}
          \pc \rmapsto w \ast
          \left(
          \displaystyle\Sep_{
          \substack{(r, w_{r}) \in \V{regs}}
          }
          r \rmapsto w_{r} \ast \safeV{w_{r}} \right)
        \end{array}
        }{}
    \end{array}
  \\[2.8em]
\safeV{w} \triangleq~
          &
            \left\{
  \arraycolsep=2.2pt\def\arraystretch{1.1}
  \begin{array}[c]{LCL}
    \rowstyle{\color{black}}
    \safeV{z} & \triangleq & \TRUE ~\text{for}~ z \in \mathbb{Z}
  \\
    \safeV{\perm{o},-,-,-} & \triangleq & \TRUE
    \\
    \safeV{\enter,b,e,a}  & \triangleq & \later\,\square\,\safeE{\RX,b,e,a}
  \\
    \safeV{\RW/\RWX,b,e,-} & \triangleq &
                                          \Sep_{a \in \crange{b}{e}} \knowInv{a}{\exists w \ldotp a \mapsto w \ast \safeV{w}
                                          }
    \\
    \safeV{\RO/\RX,b,e,-} & \triangleq &
       \Sep_{a \in \crange{b}{e}}
                                         \exists P \ldotp \knowInv{a}{\exists w \ldotp a \mapsto w \ast P(w)
                                         }
    \\ & & \quad\quad\quad\quad  {} \ast \later\square\left(\forall w \ldotp P(w) \wand \mathcal{V}(w)\right)
    \\
    \rowstyle{\color{newCerisierColor}}
    \safeV{[\V{ps},\otype_{\X{b}},\otype_{\X{e}},\otype_{\X{i}}]} & \eqdef
                                & (\pure{\perm{s} \in \V{ps}}
                                  \wand \safeseal(\otype_{\X{b}},\otype_{\X{e}})) \\
                   & & {} \ast (\pure{\perm{U} \in \V{ps}}
                       \wand \safeunseal(\otype_{\X{b}},\otype_{\X{e}})) \\
                   & & {} \ast (\pure{\V{ps} = \perm{SU}}
                       \wand \safeattest(\otype_{\X{b}},\otype_{\X{e}}))
    \\
    \rowstyle{\color{newCerisierColor}}
    \safeV{\sealcap} & \eqdef & \exists P \ldotp \sealPred(\otype,P) \ast P(\V{sc})
  \end{array}
            \right.
\end{array}
\end{align*}

where
\begin{align*}
  \begin{array}[t]{LCL}
    \rowstyle{\color{black}}
    \V{regs} & \eqdef & {\left\{
                        (r, w) \; \left | \; \right.
                        (r, w) \in \V{reg},\ r \neq\pc
                        \right\}}
    \\
    \rowstyle{\color{newCerisierColor}}
    \safeseal(\otype_{\X{b}},\otype_{\X{e}}) & \eqdef
                      & \Sep_{\otype_a \in \crange{\otype_b}{\otype_e}}
                        \exists P \ldotp
                        \pure{\forall w \ldotp \X{Persistent}(P(w))}
                        \ast \sealPred(\otype_a)(P)
                       \\ & & \quad\quad\quad\quad\quad {} \ast (\forall w \ldotp \safeV{w} \wand P(w))
    \\
    \safeunseal(\otype_{\X{b}},\otype_{\X{e}}) & \eqdef
                      & \Sep_{\otype_a \in \crange{\otype_b}{\otype_e}}
                        \exists P \ldotp \sealPred(\otype_a)(P) \ast (\forall w \ldotp P(w) \wand \safeV{w})
    \\
    \safeattest(\otype_{\X{b}},\otype_{\X{e}}) & \eqdef
                      & \Sep_{\otype_a \in \crange{\otype_b}{\otype_e}}
                        \exists \V{tidx} \ldotp
                        \pure{ \tidxofot(\otype_a) = \V{tidx} }
                        \\ && \quad\quad\quad\quad\quad
                        {} \ast \knowInv{\V{tidx}}{\exists I \ldotp \enclaveLive{tidx}{I} \lor \enclavePrev{tidx} }
\end{array}
\end{align*}
\caption{Logical relation defining ``safe to share'' ($\Vl$) and ``safe to execute'' ($\El$).
  The definitions marked in \newCerisier{blue} can be ignored for now, and will be explained in
  \Cref{sec:seal_caps}. $\safeattest$ is included for completeness and necessary for deinitialization
  of enclaves, but is not discussed further.
}
\label{fig:logrel}
\end{figure}


\begin{theorem}[\cerise{} Universal Contract]
  \label{thm:cerise-univ-contract}
  \begin{equation*}
    \proves
    \contspecbase
    {\begin{array}{l}
      \displaystyle\Sep_{
      \substack{r \in \RegName}
      }
      \left(\exists w\ldotp r \rmapsto w \ast \safeV{w}\right)
    \end{array}
  }{}
\end{equation*}
\end{theorem}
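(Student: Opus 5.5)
The universal contract follows from the \emph{fundamental theorem of logical relations} (FTLR) for the relation of \Cref{fig:logrel}: every executable capability that is safe to share is also safe to execute.

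\textbf{FTLR statement.} For every $(p,b,e,a)$ with $p \in \{\RX,\RWX\}$,
\[
\safeV{p,b,e,a} \proves \safeE{p,b,e,a}.
\]
The universal contract then follows directly. We split off the $\pc$ register from the big separating conjunction to obtain some $w_{\pc}$ with $\pc \rmapsto w_{\pc} \ast \safeV{w_{\pc}}$; the remaining registers give exactly the precondition of $\safeE{w_{\pc}}$.
\begin{itemize}
\item If $w_{\pc}$ is an executable capability, we apply the FTLR.
\item If $w_{\pc}$ is an integer, a non-executable capability, a seal range, or a sealed capability, then the first step is \textsc{ExecSingle} in its ``otherwise'' branch. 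The machine moves to \Failed, and the safe-execution triple holds trivially.
\end{itemize}

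\textbf{Proof of the FTLR.} We prove it by Löb induction. We assume $\later\,\always(\forall w.\ \safeV{w} \wand \safeE{w})$, restricted to executable $w$, and prove the statement for the current $\pc$.

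Unfolding $\safeE{\cdot}$, we must show that execution from a register file in which every register holds a safe word is safe. If $a \notin \crange{b}{e}$, execution fails immediately. Otherwise the $\RX$/$\RWX$ clause of $\safeV{\cdot}$ gives an invariant $\knowInv{a}{\exists w.\ a \mapsto w \ast P(w)}$. We open this invariant to read the instruction word $z$ at $a$, and then case split on $\X{decode}(z)$.

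For each instruction we prove a single-step lemma. Its conclusion is that, after the step, either the machine is in \Failed/\Halted, or every register holds a safe word and the touched invariants have been restored. The main cases are:
\begin{itemize}
\item \textbf{Load.} The read permission on the source capability gives $P(w)$ together with $\later\,\always(P \wand \safeVl)$, so the loaded word is safe.
\item \textbf{Store.} The written word is safe by the register hypothesis, and the $\RW$ clause requires exactly $\safeVl$ in the invariant.
\item \textbf{Restrict, subseg, lea.} These need a monotonicity lemma: $\safeV{p,b,e,a}$ implies $\safeV{p',b',e',a'}$ whenever $p' \permflowsto p$ and $\crange{b'}{e'} \subseteq \crange{b}{e}$. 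Proving it requires showing that the $\RW$ invariant entails the weaker $\RO$-style clause, and that $\enter$ is implied by $\RX$ via the induction hypothesis. The $\enter$ case is where the later modality is consumed.
\item \textbf{Jmp.} If the target is an $\enter$ capability, we unfold $\later\,\always\,\safeE{\RX,b,e,a}$ directly. If it is an executable capability, we use the Löb hypothesis. In both cases the later is stripped by the step itself.
\item \textbf{Remaining instructions.} The arithmetic and get-instructions, and the non-jump updates of $\pc$, produce integers (which are trivially safe) or keep $\pc$ within the same bounds. We continue by the Löb hypothesis after the $\X{updPC}$ step.
\end{itemize}

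\textbf{Main obstacle.} The hardest part is the load case when the source capability aliases the $\pc$ capability or the currently opened invariant, i.e.\ when the loaded address equals $a$. The invariant for $a$ is already open, so it cannot be opened a second time. We handle this by case analysis on address equality, reusing the already-open points-to and predicate $P$ for the aliased case. The store case, where the target address may equal $a$, is handled the same way.

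The $\RO$/$\RX$ clauses quantify over an existential $P$. We must therefore ensure that the persistent implication $P \wand \safeVl$ is available under the later that the step provides, using the guardedness of $\safeVl$ (defined as a guarded fixpoint) so that the recursion is well defined.
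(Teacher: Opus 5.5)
Your overall route matches the one the paper sketches for \cerise{} in \Cref{sec:ftlr_proof}. It proves an FTLR by L\"ob induction with a case analysis on the decoded instruction. It uses a monotonicity lemma, taking the L\"ob hypothesis as an argument, for \instr{restrict}, \instr{subseg} and \instr{lea}. Sentry jumps are discharged by the $\later\always\safeE{\RX,b,e,a}$ contained in $\safeV{\enter,b,e,a}$. The reduction of the contract to the FTLR is also fine. A minor imprecision: non-jump writes to $\pc$ (e.g.\ \instr{mov} or \instr{load} into $\pc$) need not keep its bounds. They do install a safe word, however, and if that word is not executable the next step fails, so your restricted induction hypothesis still suffices.

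There is one step that fails as you describe it: the aliased \instr{store}. You open the invariant for the pc address $a$ taken from the \emph{pc's} clause. When the pc is $\RX$, its predicate $P$ is existential, and all you know is $\later\always(\forall w.\,P(w)\wand\safeV{w})$. Now suppose another register holds an $\RW$ or $\RWX$ capability whose range contains $a$, and the instruction stores a word $w'$ through it to $a$. You only have $\safeV{w'}$, so you cannot re-establish $P(w')$ and cannot close the invariant you opened. Nor can you open the writable capability's invariant $\knowInv{a}{\exists w.\,a\mapsto w\ast\safeV{w}}$ instead: it lives under the same namespace, which is already masked out. Reusing the open points-to and $P$ works for loads, which only need the direction $P\wand\safeVl$, and for an $\RWX$ pc, where the predicate is $\safeVl$ itself. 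It does not work here.

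The fix is to choose which invariant for $a$ to open \emph{before} decoding. If some register, the pc included, holds a writable capability covering $a$, open that register's invariant. Its predicate is $\safeVl$, so it both yields safety of the fetched or loaded word and lets you close the invariant after a store. Otherwise open the pc's own invariant; then any \instr{store} to $a$ must fail, since no register carries write authority over $a$. Equivalently, choose the predicate for $a$ so that it also satisfies $\later\always(\forall w.\,\safeV{w}\wand P(w))$ whenever write authority over $a$ is present in the register file. This case split is what makes the mechanized \cerise{} FTLR go through.
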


Note that the \cerise{} universal contract is usually phrased in a different form, as a Fundamental Theorem of Logical Relations (FTLR) for $\safeVl$.
This FTLR is equivalent to the universal contract in Theorem~\ref{thm:cerise-univ-contract} but harder to explain, so we will not discuss it in this paper.

\Cref{fig:logrel} shows the definition of $\safeVl$ (``safe to share (with the adversary)'') and the supporting notion $\safeEl$ (``safe to execute (by the adversary)'').
Intuitively, a word $w$ is \emph{safe to share} if authority is available for anything the adversary can do with the word.
We say that a word $w$ is \emph{safe to execute} if the machine executes safely when $w$ is installed in the $\pc$ with safe words in all registers.
Integers $z \in \ZZ$ and null-permission ($\perm{O}$) capabilities are trivially safe.

A sentry capability ($\perm{E}$) is safe to share when the code it encapsulates is safe to execute.
The later modality $\later$ ensures that the apparently circular definition of $\safeVl$ is well-defined.
The persistent modality $\always$ expresses the fact that the sentry capability can be invoked several times.
A $\RW / \RWX$\footnote{
  The universal contract does not distinguish between $\RW / \RWX$,
  because it models an interpretation of capability safety where
  the security model makes no restrictions on the code blocks that an adversary
  has access to for execution.
}
capability gives access to the memory region $\crange{b}{e}$, and the (authority of the) words it contains.
They are safe to share only if the memory locations are owned by an invariant that also enforces safety of the words they contain.
Finally, similarly to $\RW / \RWX$, a $\RO / \RX$ capability gives access to the memory region in between its bounds, but its contents cannot be modified.
The logical relation expresses read-only access by requiring addressed memory to invariantly satisfy an arbitrary predicate $P$, that is only known to entail $\safeVl$.

Finally, \Cref{thm:cerise-univ-contract} expresses that if all registers contain safe to share values, the machine executes safely.
It provides the foundation for reasoning about invoking untrusted code from trusted code, where verifying the safety of register contents becomes a proof obligation.

\subsection{Sealed Capabilities}
\label{sec:seal_caps}

\begin{figure}
  \small
{
\begin{minipage}[t]{0.45\textwidth}
\[
\arraycolsep=3pt
\begin{array}{LCLCL}
  \rowstyle{\color{newCerisierColor}}
  \otype & \in & \OType & \eqdef & [0, \OTypeMax] \\
  \sealperm & \in & \SealPerm & \mathrel{::=} & \perm{SU} \mid \perm{S} \mid \perm{U} \mid \perm{o} \\
  \V{sr} & \in & \SealRange & \eqdef & \{
                                      \left[\V{sp}, \otype_{b}, \otype_{e}, \otype_{a}\right]
                                       \mid \\
         &     &             &        & \quad \otype_{b}, \otype_{e}, \otype_{a} \in \OType \} \\
\end{array}
\]
\end{minipage}%
\begin{minipage}[t]{0.48\textwidth}
\[
\arraycolsep=3pt
\begin{array}{LCLCL}
  \rowstyle{\color{newCerisierColor}}
  \V{sc} & \in & \Sealable & \eqdef & \CCap \mathbin{+} \SealRange \\
  \rowstyle{\color{newCerisierColor}}
  \V{sdc} & \in & \SealedCap & \eqdef & \OType \times \Sealable \\
  \rowstyle{\color{black}}
  w & \in & \Word & \eqdef & \ZZ + \cancel{\CCap}~ \newCerisier{\Sealable \mathbin{+}~\SealedCap} \\
  \rowstyle{\color{black}}
\end{array}
\]
\end{minipage}%
}
\[
\begin{array}{@{}lcl@{}}
  i \in \Instruction & \mathrel{::=} &\cdots \mid
  \begin{array}[t]{@{}+l@{}}
  \rowstyle{\color{newCerisierColor}}
               \instr{cseal}\; r_\mathrm{d}\; r_\mathrm{1}\; r_\mathrm{2} \mid
               \instr{cunseal}\; r_\mathrm{d}\; r_\mathrm{1}\; r_\mathrm{2} \mid
  \rowstyle{\color{black}}
  \end{array}
\end{array}
\]

  \footnotesize
  \noindent\makebox[\textwidth]{\begin{tabular}{|C|L|L|}
    \hline
  $i$ & $\instrsem{i}(\confv)$ & Conditions \\ \hline
  \rowstyle{\color{newCerisierColor}}
  $\instr{cseal}\; r_{d} \; r_{1} \; r_{2}$
  & $\X{updPC}(\confv[\X{reg}.r_{d} \mapsto \sealed{\otype_{a}}{\V{sc}}])$
  & \!\!\!\begin{tabular}{l}
    $\confv.\X{reg}(r_1) = [\V{sp},\otype_{b},\otype_{e},\otype_{a}]$
    and $\confv.\X{reg}(r_2) = \V{sc}$ \\
    and $\V{sp} \in \{\perm{S}, \perm{SU}\}$
    and $\otype_{b} \le \otype_{a} < \otype_{e}$ \\
    \end{tabular}
  \\ \hline
  $\instr{cunseal}\; r_{d} \; r_{1} \; r_{2}$
  & $\X{updPC}(\confv[\X{reg}.r_{d} \mapsto \V{sc}])$
  & \!\!\!\begin{tabular}{l}
    $\confv.\X{reg}(r_1) = [\V{sp},\otype_{b},\otype_{e},\otype_{a}]$
    and $\V{sp} \in \{\perm{U}, \perm{SU}\}$ \\
    and $\confv.\X{reg}(r_2) = \sealed{\otype_{a}}{\V{sc}}$
    and $\otype_{b} \le \otype_{a} < \otype_{e}$
    \end{tabular}
  \\ \hline
  \end{tabular}
  \rowstyle{\color{black}}
}
\caption{\label{fig:opsem_excerpt_sealing}
Excerpt of the operational semantics with capability sealing extensions, extending \Cref{fig:opsem_excerpt}.
}
\end{figure}


The idea of \cheri{} sealed capabilities dates back to \citet{morris_protection_1973} and implements a form of symbolic cryptography \citep{sumiiLogRelEncryption}, orthogonal to trusted execution.
The act of sealing makes a capability opaque and immutable: it cannot be modified or dereferenced until it is unsealed.
To support sealing, Cerisier extends Cerise operational semantics with two instructions,
$\instr{cseal}$ and $\instr{cunseal}$ to perform sealing and unsealing of capabilities.
The changes to the operational semantics are shown in \Cref{fig:opsem_excerpt_sealing}.

To seal a capability, the machine takes a capability $\V{sc}$
and a valid \emph{sealing capability} $\sealrange$,
a new type of capability holding the authority to seal (or unseal).
The permission $\V{sp}$ describes the sealing permission, for either sealing ($\perm{S}$) or unsealing ($\perm{U}$).
Instead of addresses, $\crange{\otype_{b}}{\otype_{e}}$ describes a range of so-called \emph{otypes},
acting as key identifiers.
Similarly to regular capabilities, the current otype $\otype_{a}$ needs to be in bounds to be used.
The action of sealing creates a \emph{sealed capability} $\sealed{\otype_{a}}{\V{sc}}$,
making it opaque and immutable.
Conversely, to unseal a capability, the machine takes a sealed capability $\sealed{\otype_{a}}{\V{sc}}$
and a valid (un-)sealing capability $\sealrange$ with unsealing permission ($\perm{U} \in \V{sp}$)
and matching otype, and retrieves the initial capability $\V{sc}$.

Sealing capabilities cannot be forged, they can only be monotonically derived from other sealing capabilities.
In Cerisier, new sealing capabilities, with fresh unique otypes, are generated when enclaves are initialized, as we detail in the next section.
The operational semantics also includes a couple of other new instructions: $\instr{getwtype}$ for querying the type of a word (integer, regular capability, sealing capability or sealed capability) and $\instr{getotype}$ for querying the otype $\otype$ of a sealed capability; their precise semantics can be found in the accompanying Rocq formalization.

By sharing restricted sealing capabilities with an adversary, one obtains a
form of asymmetric cryptography: A component holding a private sealing
capability $[\perm{SU}, \otype_{b}, \otype_{e}, \otype_{a}]$, can, e.g.,
disclose the unsealing authority
$[\perm{U}, \otype_{b}, \otype_{e}, \otype_{a}]$ to the adversary and keep the
full authority private to obtain a form of cryptographic signatures.
Conversely, making the sealing authority public, enables a form
of asymmetric encryption. In fact, the public/private key pair for attestation
suggested in \Cref{fig:secure_outsourced_computation_pseudocode}, corresponds in
reality to a single $\perm{SU}$ sealing capability used in this way for signing
messages.

Following \citet{stkTokens} we extend the Cerise program logic with support for sealing by adapting a technique by \citet{sumiiLogRelEncryption}.
The logical relation is extended by the definition marked in \textcolor{newCerisierColor}{blue} in \Cref{fig:logrel}.
Logically, we associate a value predicate $P$ with every otype $o$, such that, for any capability sealed with this otype, the predicate holds.
For instance, a possible predicate could be ``any capability sealed with otype $\otype$ will have its current address set to 42''.
The association is witnessed by a predicate $\sealPred(\otype,P)$.
When sharing $\otype$-sealed values with the adversary, there is an obligation to show that the
value being sealed respects $P$
; and dually, it gives the guarantee that $\otype$-sealed values received from the adversary also
respect $P$.
We defer further details to Supplemental Materials, under \appendixReplace{\Cref{app:sealing}}{E}.

\subsection{Exclusive Access and Memory Revocation}
\label{sec:mem_revocation}

As mentioned in \Cref{sec:introduction}, when an enclave is initialized, Cerisier implements the exclusive memory
ownership primitive of CHERI-TrEE by a memory sweep, to check that
the entire memory (as well as the register file) does not contain a
capability overlapping the new enclave's memory.
We model the memory sweep as described by the \cheritree{}'s operational semantics,
and we discuss its hardware implementation in \Cref{sec:discussion_memory-sweep}.

\begin{figure}
  \small

\[
\arraycolsep=3pt
\begin{array}{@{}lcl@{}}
  i \in \Instruction & \mathrel{::=} &\cdots \mid
  \begin{array}[t]{@{}+l@{}}
  \rowstyle{\color{newCerisierColor}}
    \instr{isunique}\; r_\mathrm{d}\; r_\mathrm{s} \mid
  \rowstyle{\color{black}}
  \end{array}
\end{array}
\]
  \footnotesize
  \noindent\makebox[\textwidth]{\begin{tabular}{|C|L|L|}
    \hline
  $i$ & $\instrsem{i}(\confv)$ & Conditions \\ \hline
  \rowstyle{\color{newCerisierColor}}
  $\instr{isunique}\; r_{d}\; r_{s}$
  & \!\!\!\begin{tabular}{l}$\X{updPC}(\confv[\X{reg}.r_{d} \mapsto z])$\end{tabular}
  & \!\!\!\begin{tabular}{l}
    $\left(
    \begin{tabular}{c}
      $\confv.\X{reg}(r_{s}) = (p,b,e,a)$ \\
      $\lor$
      $\confv.\X{reg}(r_{s}) = \{(p,b,e,a)\}_{\otype_{i}}$
    \end{tabular}
    \right)
    $\\
    and
    $\left(
    \begin{tabular}{ll}
      if & $\sweepr(\confv)(r_{s})$\\
      then & $z=1$ \quad else $z=0$
    \end{tabular}
    \right)$
  \end{tabular}
  \\ \hline
  \end{tabular}
  \rowstyle{\color{black}}
}

\footnotesize
\begin{mathpar}
  \color{newCerisierColor}
  \overlap(w_1)(w_2) =
  \left\{
  \begin{array}{ll}
    \crange{b_1}{e_1} \cap  \crange{b_2}{e_2}
    &\begin{array}{l}
      \text{if }
      \left(w_1 = (p_1,b_1,e_1,a_1) \lor w_1 = \sealed{\otype}{(p_1,b_1,e_1,a_1)}\right)
      \\ \text{and}
      \left(w_2 = (p_2,b_2,e_2,a_2) \lor w_2 = \sealed{\otype}{(p_2,b_2,e_2,a_2)}\right)
    \end{array}
    \\ \bottom & \text{otherwise}
  \end{array}
  \right.

  \sweepr(\confv)(r_s) =
  \begin{array}{l}
    \forall r \in \V{dom}(\confv.\X{reg} \setminus r_s).\;
    \neg \overlap(\confv.\X{reg}(r), \confv.\X{reg}(r_s))\\
     {} \land \forall a \in \V{dom}(\confv.\X{mem}).\;
    \neg \overlap(\confv.\X{mem}(a), \confv.\X{reg}(r_s))
  \end{array}

  \color{black}

\end{mathpar}
\caption{\label{fig:opsem_excerpt_extension_revocation}
Excerpt of the operational semantics with revocation-related extensions, extending \Cref{fig:opsem_excerpt}.
}
\end{figure}


%
In addition to enclave initialization,
the sweep can also be used after initialization through an $\instr{isunique}$
instruction, in order to support dynamically growing enclaves and memory-sharing
communication. The operational semantics is shown in
\Cref{fig:opsem_excerpt_extension_revocation}, by comparison to
\Cref{fig:opsem_grammar} and \Cref{fig:opsem_excerpt}. The instruction
$\instr{isunique}\; r_{d}\; r_{s}$ requires a (possibly sealed) capability
$(p,b,e,a)$ in source register $r_{s}$. The machine will sweep its entire state
$\confv$, i.e., the register file (excluding $r_{s}$) and the entire memory, to
check that no capabilities overlap the source capability. The result is stored
in destination register $r_{d}$ as an integer $1$ or $0$, indicating that the
source capability is unique, or has duplicates, respectively.

Logically, the memory sweep, akin to a stop-the-world garbage collection,
corresponds to a \emph{revocation} of ownership:
memory that was previously owned by the adversary
and that overlaps with the memory region of the capability in $r_{s}$
is logically revoked and reallocated to the enclave.
The revocation step is challenging to support in
Cerisier, because it is a highly non-modular operation: the revocation affects the
global state, whereas separation logics like Iris, which \cerisier{} builds
upon, support only local (so-called frame-preserving) operations, in order to
validate the celebrated frame rule of separation logic. Fortunately, we can
adapt and reuse a technique by \citet{hurSeparationLogicPresence2011} for
reasoning about garbage collection, which involves a similar global operation.
They introduce an additional layer of \emph{logical} memory, which allows
addresses to be reallocated when the unique memory ownership sweep succeeds. In
essence, the logical memory decorates the physical memory with version numbers,
which are incremented at each successful memory sweep operation. An invariant
links the physical state of the machine to logical memory, ensuring that only
the latest version number of each memory address is reachable from garbage
collection roots.

Because our support for revocation is an adaptation of a known technique to
our setting, we omit full details and simply summarize the program logic
interface for revocation. Essentially, we add versions to points-to predicates:
$a \amapsto{v} w$ expressing that address $a$ at version $v$ points to logical
word $\V{w}$. \emph{Logical words} are either integers or logical capabilities
$\lcap{p}{b}{e}{a}{v}$, which carry a version number (note that all addressed logical
addresses must be at the same version). Logical capabilities can only be used
with the points-to predicate for the same version. Existing \cerise{} program
logic rules are straightforwardly extended with versions, which they leave
unmodified.

\begin{figure}
  \begin{mathpar}
    \centering
    \inferH{IsUniqueValid}
    {\X{ValidPC}\lcap{\ppc}{\bpc}{\epc}{\apc}{\vpc} \\
      p \in \{\perm{RO}, \perm{RX}, \perm{RW}, \perm{RWX}\} \\
      \X{decode}(n) = \instr{isunique}\; \I{r_{dst}}\; \I{r_{src}}\\
      \V{a_{pc}} \notin \crange{b}{e}
    }
    {
      \stepspecbase{
        \begin{array}{l}
          \pc \rmapsto \lcap{\ppc}{\bpc}{\epc}{\apc}{\vpc} \\
          {} \ast \apc \amapsto{\vpc} n \\
          {} \ast \I{r_{src}} \rmapsto \lcap{p}{b}{e}{a}{v}\\
          {} \ast \I{r_{dst}} \rmapsto -
        \end{array}
      }
      {
        \begin{array}{l}
          \pc \rmapsto \lcap{\ppc}{\bpc}{\epc}{\emphc{\apc+1}}{\vpc}
          {} \ast \apc \amapsto{\vpc} n \\
          {} \ast \left(
          \begin{array}{lc}
            &\I{r_{dst}} \rmapsto \emphc{0}
            \ast \I{r_{src}} \rmapsto \lcap{p}{b}{e}{a}{v}\\
            \lor&
            \I{r_{dst}} \rmapsto \emphc{1}
            {} \ast \I{r_{src}} \rmapsto \lcap{p}{b}{e}{a}{\emphc{v+1}}\\
            &\emphc{{} \ast \crange{b}{e} \amapsto{v+1} \V{lw}}
          \end{array}
          \right)
        \end{array}
      }
    }
\end{mathpar}

\caption{WP rule for a valid case of the \instr{isunique} instruction with distinct PC, source, and destination registers. The postcondition asserts that either the sweep was unsuccessful, writing 0 to the destination register, or it was successful and we increment the version number of the capability and all addresses within its bounds.
}
\label{fig:isuniquewprules_full}
\end{figure}


%
The program logic rule for the
$\instr{isunique}$ instruction in Figure~\ref{fig:isuniquewprules_full} shows
the interface for memory revocation. Assuming we have the points-to predicate
for the entire logical memory region of the swept capability, the rule describes
what happens in two cases. In case of a successful sweep (\ie~the second part of
the disjunction), the logical version of the capability in the source register
is incremented, and we obtain a fresh points-to predicate for the updated memory
region. The points-to predicates of the old version may still exist, but are not
linked to the physical state anymore. The soundness proof of the rule requires justifying
that bumping the version number of the swept memory region does not invalidate
the invariant linking physical and logical state. Informally, this is the case
because the physical sweep ensures that there is no other capability overlapping
the memory region, and so no other capabilities have to be updated.

\subsection{Secure Hashing Primitives}
\label{sec:secure-hash-prim}

\begin{figure}
  \small

\[
\arraycolsep=3pt
\begin{array}{@{}lcl@{}}
  i \in \Instruction & \mathrel{::=} &\cdots \mid
  \begin{array}[t]{@{}+l@{}}
               \instr{hash}\; r_\mathrm{d}\; r_\mathrm{s} \mid
               \instr{hashconcat}\; r_\mathrm{d} \; \rho_\mathrm{1} \; \rho_\mathrm{2}
  \end{array}
\end{array}
\]

  \footnotesize
  \noindent\makebox[\textwidth]{\begin{tabular}{|C|L|L|}
    \hline
  $i$ & $\instrsem{i}(\confv)$ & Conditions \\ \hline
  \rowstyle{\color{newCerisierColor}}
  $\instr{hash}\; r_{d}\; r_{s}$
  & \!\!\!\begin{tabular}{l}$\X{updPC}(\confv[\X{reg}.r_{d} \mapsto \hash(w)])$\end{tabular}
  & \!\!\!\begin{tabular}{l} $\confv.\X{reg}(r_{s}) = w$ \end{tabular}
  \\ \hline
  $\instr{hashconcat}\; r \; \rho_1 \; \rho_2$
  & $\X{updPC}(\confv[\X{reg}.r \mapsto z])$
  & \!\!\!\begin{tabular}{l}
      for $i \in \{1, 2\}$, $z_i = \X{getWord}(\confv, \rho_i)$ \\
      and $z_i \in \ZZ$ and $z = z_1 \hashconcat z_2$
    \end{tabular}
  \rowstyle{\color{black}}
  \\ \hline
  \end{tabular}
  \rowstyle{\color{black}}
}

\begin{mathpar}
  \X{getWord}(\confv, \rho) = \left\{
    \begin{array}{ll}
      \rho & \text{if } \rho \in \ZZ \\
      \confv.\X{reg}(\rho) & \text{if } \rho \in \X{RegName}
    \end{array}
  \right.
\end{mathpar}
\caption{\label{fig:opsem_excerpt_extension_hashing}
Excerpt of the operational semantics containing secure hashing extensions and assumptions, extending \Cref{fig:opsem_excerpt}.
The notation $\hashconcat$ is the machine primitive of hash concatenation.
}
\end{figure}


As explained in the Introduction, attestation often involves hashing of a component's source code, etc.
Therefore Cerisier also includes primitives for hashing, and we use them
in \Cref{sec:case_study_mutual_attest} for implementing a mutual attestation scheme.
The operational semantics of the hashing instructions is shown in \Cref{fig:opsem_excerpt_extension_hashing}.
The $\instr{hash}\; r_{d}\; r_{s}$ instruction stores the hash of the word in $r_{s}$ in $r_{d}$ and
$\instr{hashconcat}$ takes two hash values $z_{1}$ and $z_{2}$, and computes their concatenation $z_{1} \hashconcat z_{2}$.

The hash computation is intended to model a secure hash algorithm, which we
axiomatize in the form of injective and collision-free operations on natural
numbers. These assumptions are not contradictory because \cerisier{} allows
unbounded integers, and it allow us to avoid cross-cutting but orthogonal
changes that would be required for a symbolic or probabilistic model. The
precise assumptions we use are:
\[
  \begin{array}{lcl}
    \hash(\V{mem_{1}}) = \hash(\V{mem_{2}}) & \implies & \V{mem_{1}} = \V{mem_{2}} \\
    (z_{1} \hashconcat z_{2}) = (z_{1}' \hashconcat z_{2}') & \implies & z_{1} = z_{1}' \land z_{2} = z_{2}'.
  \end{array}
\]
In practice, these conditions ensure that when a statically computed enclave identity
(\Cref{fig:secure_outsourced_computation_pseudocode}, $\mathrm{ID}_{\mathrm{soc}}$)
matches a dynamically computed enclave identity
(\Cref{fig:secure_outsourced_computation_pseudocode}, $k_p$),
we may conclude that the hashed memory regions must be equal.

\section{Attestation}
\label{sec:attestation}

In this section we detail the operational semantics of the new attestation-related instructions
included in Cerisier. The instructions are inspired by similar ones in \cheritree{}.
Next, we show how the secure outsourced computation example from \Cref{sec:soc}
is implemented in Cerisier, before we turn to reasoning about attestation and
clients of attested enclaves in the program logic.
Finally, we discuss how to reason about unknown adversarial code using the novel
Cerisier universal contract, the proof of which is outlined in the final subsection.

\subsection{Operational Semantics}
\label{sec:opsem}
\begin{figure}
  \small
{
\setlength{\abovedisplayskip}{0pt}%
\setlength{\abovedisplayshortskip}{0pt}%
\setlength{\belowdisplayskip}{0pt}%
\begin{minipage}[t]{0.35\textwidth}
\[
\arraycolsep=3pt
\begin{array}{LCLCL}
  \rowstyle{\color{newCerisierColor}}
  \V{ecn} & \in & \ENum & \eqdef & \NN \\
  \V{I} & \in & \EId & \eqdef & \NN \\
  \V{tidx} & \in & \TIndex & \eqdef & \NN \\
\end{array}
\]
\end{minipage}%
\begin{minipage}[t]{0.55\textwidth}
\[
\arraycolsep=3pt
\begin{array}{LCLCL}
  \rowstyle{\color{newCerisierColor}}
  \V{etbl} & \in & \ETable & \eqdef & \TIndex \rightarrow  \EId \\
  \rowstyle{\color{black}}
  \confv & \in & \ExecConf & \eqdef & \RegFile \times \Mem  \newCerisier{~\times~\ETable \times \ENum}
\end{array}
\]
\end{minipage}%
}
\[
\begin{array}{@{}lcl@{}}
  i \in \Instruction & \mathrel{::=} &\cdots \mid
  \begin{array}[t]{@{}+l@{}}
  \rowstyle{\color{newCerisierColor}}
               \instr{einit}\; r_\mathrm{s} \mid
               \instr{edeinit}\; r_\mathrm{s} \mid
               \instr{estoreid}\; r_\mathrm{d}\; r_\mathrm{1}\; r_\mathrm{2} \mid
               \instr{hash}\; r_\mathrm{d}\; r_\mathrm{s} \mid
               \instr{hashconcat}\; r_\mathrm{d} \; \rho_\mathrm{1} \; \rho_\mathrm{2}
  \rowstyle{\color{black}}
  \end{array}
\end{array}
\]

  \footnotesize
  \noindent\makebox[\textwidth]{\begin{tabular}{|C|L|L|}
    \hline
  $i$ & $\instrsem{i}(\confv)$ & Conditions \\ \hline
  \rowstyle{\color{newCerisierColor}}
  %
  $\instr{einit}\; r_{1}\; r_{2}$
  & \!\!\!\begin{tabular}{l}
    {$\X{updPC}(\confv[$} \\
    \quad{$\X{mem}.b \mapsto (\RW,b',e',a')$}, \\
    \quad{$\X{mem}.b' \mapsto [\perm{su},\otype_{a},\otype_{a+2},\otype_{a}]$}, \\
    \quad{$\X{etbl}.\V{tidx} \mapsto I$}, \\
    \quad{$\X{EC} \mapsto \confv.\X{EC} + 1$}, \\
    \quad{$\X{reg}.r_{1} \mapsto (\enter, b, e, b+1)$}, \\
    \quad{$\X{reg}.r_{2} \mapsto 0$])} \\
  \end{tabular}

  & \!\!\!\begin{tabular}{l}
    $r_{1} \neq \pc $
    and $\confv.\X{reg}(r_{1}) = (\RX,b,e,a)$
    and $b < e$ \\
    and $\confv.\X{reg}(r_{2}) = (\RW,b',e',a')$
    and $b' < e'$ \\
    and $\sweepr(\confv)(r_{1})$
    and $\sweepr(\confv)(r_{2})$ \\
    and $\left(\forall l \in \crange{b+1}{e}.\;
    \confv.\X{mem}(l) \in \ZZ\right)$\\
    and $I = \hash(b) \hashconcat \hash(\confv.\X{mem}(\; \crange{b+1}{e} \; ))$ \\
    and $\V{tidx} = \freshtidx(\confv)$
    and $\otype_{a} = \confv.\X{EC}*2$\\
    \end{tabular}
  \\ \hline
  $\instr{edeinit}\; r$
  & \!\!\!\begin{tabular}{l}$\X{updPC}(\confv[\X{etbl}.\X{tidx} \mapsto \emptyset ])$ \end{tabular}
  & \!\!\!\begin{tabular}{l}
    $\confv.\X{reg}(r) = [\perm{su},\otype_{a},\otype_{a+2},\_]$ \\
    and $\X{tidx} = \tidxofot(\otype_{a})$\\
    and $\confv.\X{etbl}(tidx) = I$\\
    \end{tabular}
  \\ \hline
  $\instr{estoreid}\; r_d \; r_s$
  & \!\!\!\begin{tabular}{l}$\X{updPC}(\confv[\X{reg}.r_{d} \mapsto I ])$ \end{tabular}

  & \!\!\!\begin{tabular}{l}
    $\confv.\X{reg}(r_{s}) = \otype_{s}$
    and $\X{tidx} = \tidxofot(\otype_{s})$\\
    and $\confv.\X{etbl}(tidx) = I$\\
    \end{tabular}
  \\ \hline
  \end{tabular}
  \rowstyle{\color{black}}
}

\footnotesize
\begin{mathpar}
  \color{newCerisierColor}
  \tidxofot(\otype_a) = \left\{
    \begin{array}{ll}
      \otype_a \slash 2 & \text{if $\V{is\_even}(\otype_a)$}\\
      (\otype_a-1) \slash 2 & \text{otherwise}
    \end{array}
  \right.

  \freshtidx(\confv) = \confv.\X{EC}
  \color{black}
\end{mathpar}
  
\caption{\label{fig:opsem_excerpt_extension_attestation}
Excerpt of the operational semantics with attestation-related extensions, extending \Cref{fig:opsem_excerpt}.
}
\end{figure}


The new attestation-related instructions in the \cerisier{} ISA are $\instr{einit}$, $\instr{estoreid}$, and $\instr{edeinit}$ for enclave initialization, attestation, and deinitialization, respectively.
Their operational semantics are defined in \Cref{fig:opsem_excerpt_extension_attestation}.
They interact with an \emph{enclave table} that tracks registered enclaves’ identities.
A special register $\sreg$, not directly accessible by the user, always contains the next free
available table index.
We first discuss the $\instr{einit}$ instruction, whose effects on the machine state are depicted in \Cref{fig:einit}.

\begin{figure}[ht]
  \centering

  \newcommand{\haddr}{0.5}
  \newcommand{\hcontent}{1}
  \newcommand{\wcontent}{2.5}
  \newcommand{\wpadding}{0.5}
  \newcommand{\wpaddingtext}{0.25}
  \newcommand{\wpaddingdata}{1}
  \newcommand{\wdata}{\wpadding}

  \newcommand{\hpaddingcode}{1}
  \newcommand{\hdata}{0}
  \newcommand{\hcode}{\hdata+\hcontent+\hpaddingcode}

  \begin{subfigure}{0.45\columnwidth}
    \includegraphics[width=\columnwidth]{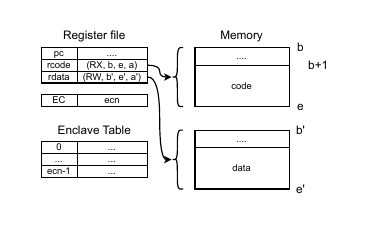}

\caption{\label{fig:einit_before}
  Cerisier machine state prior to \instr{einit}}
  \end{subfigure}
  \begin{subfigure}{0.45\columnwidth}
    \includegraphics[width=\columnwidth]{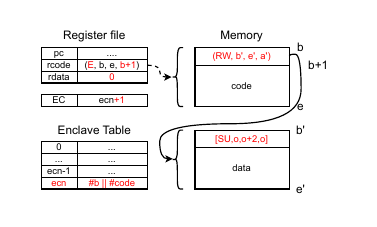}

\caption{\label{fig:einit_after}
  Cerisier machine state following \instr{einit}}
  \end{subfigure}

\caption{\label{fig:einit}
  Representation of the effects in memory of the \instr{einit} instruction. }
\end{figure}

Enclave initialization using the $\instr{einit}\; \reg{1}\; \reg{2}$ instruction expects the enclave to consist of two parts: a code and data region.
As \Cref{fig:einit} shows, $\instr{einit}$ requires a non-empty \perm{RW} capability for the data region $\crange{b'}{e'}$, and a non-empty $\perm{RX}$ capability for the code region at $\crange{b}{e}$.
It will check unique ownership of the two regions, similar to what was done for \instr{isunique} in \Cref{sec:secure-hash-prim}.
Following successful sweeps, enclave initialization proceeds without further validation.

The enclave's memory layout is updated as follows.
The data region capability $(\RW, b', e', a')$ is stored at the code region's base address $b$.
The rest of the code region $\crange{b+1}{e}$ is expected to contain the enclave instructions.
A sentry capability $(\enter, b, e, b+1)$ to the first of these instructions will be stored in $\reg{1}$, to be safely invoked by the context.
Two otypes $\otype_{a} := 2*\V{tidx}$ and $\otype_{a}+1$ are reserved for the enclave to use for encryption and signing (see below).
This derivation from the fresh table index $\V{tidx}$ ensures the otypes are fresh and exclusively accessible to the enclave.
A sealing capability $[\perm{SU}, \otype_{a}, \otype_{a}+2, \otype_{a} ]$ for the reserved otypes is written to the first address $b'$ of the data region.

Given an enclave in this layout, \cerisier{} will compute the enclave identity using the hashing primitives described in \Cref{sec:secure-hash-prim}.
Specifically, it will hash the code region of the enclave, excluding the data capability; \ie~the
contents of the memory range $\crange{b+1}{e}$. We additionally include the address $b$ in the
enclave identity.
The data region contents is not included, but the enclave code can later do integrity checks over it if necessary.

After computing the new enclave's identity, \cerisier{} updates its machine state accordingly.
The enclave table is updated at the table index $\V{tidx}$ to contain the new enclave with its identity
and the $\X{EC}$ register is bumped to point to the next free table index.

To allow enclaves to prove their identity to distrusting clients, enclaves receive exclusive access to sealing capabilities for two otypes, after initialization.
Those clients can verify the identity of the enclave using such an otype by querying the machine state using the $\instr{estoreid}$ instruction.
The instruction $\instr{estoreid}\; \V{r_{d}}\; \V{r_{s}}$ reads an otype $\otype$ from the source register $\V{r_{s}}$, queries the enclave table for the identity of the enclave at index $\lfloor \otype/2 \rfloor$, if it exists, and stores the identity in the destination register $\V{r_d}$.
Based on this identity, the client can decide whether it trusts the enclave, typically by comparing it to pre-computed identities of trusted code.

The $\instr{edeinit}\; \V{r_{s}}$ instruction enables early deinitialization using the enclave's
sealing capability. Once deinitialized, an enclave $\V{tidx}$ cannot reappear, as the $\V{EC}$ acts as a bump allocator.
While the original \cheritree{} design includes it, we make no use of it in our case studies.
Nevertheless, with the universal contract (\Cref{sec:ftlr}), we prove that it does not introduce any new attack vectors.

\subsection{SOC Example in Cerisier}
\label{sec:soc_pseudocode}

To demonstrate use of the \cerisier{} attestation primitives, this section revisits the SOC example from \Cref{sec:motiv-exampl-secure}.
\Cref{fig:soc_pseudocode_asm} shows the example in pseudocode \cerisier{} assembly.
The complete listing can be found in the Supplemental Materials, under
\appendixReplace{\Cref{sec:appendix_soc_code}}{B}.

Recall that after the adversary initializes the enclave,
the machine ensures exclusive ownership of the enclave's memory region, installs the (fresh) enclave keys in the data region, creates a new entry in the enclave table, and returns a sentry capability that can be invoked.
When the SOC enclave is invoked, it first fetches the encryption and signing keys from the sealing capability $[\perm{SU}, \otype, \otype + 2, \otype]$ installed in the data section.
Since our enclave only uses signatures, not encryption, we discard $\otype$, and split the sealing capability for $\otype + 1$ into a private signing key (with permission \perm{S}), and a public key (with permission \perm{U}), which will attest that the signed message originates from the enclave.
Finally, the enclave signs the result using the private signing key, and shares it together with its public signing key.

\begin{figure}[t]
  {\small
    \begin{subfigure}[t]{0.50\textwidth}
      \vspace{3pt}
      \begin{equation*}
        \begin{aligned}
          \mathrm{Enclave} \triangleq\enspace
          & keys \gets \mathrm{Data}[0] \\
          & \commentOut{// \; keys := [\perm{SU}, \otype, \otype + 2, \otype]} \\
          & ot \gets \mathrm{getb}(keys) + 1 \\
          & sign\_keys \gets \mathrm{lea}(\mathrm{subseg}(keys, ot, ot + 1), 1) \\
          & \commentOut{// \; sign\_keys := [\perm{SU}, \otype + 1, \otype + 2, \otype+ 1]} \\
          & sign\_pub\_key \gets \mathrm{restrict}(sign\_keys, \perm{U}) \\
          & \commentOut{// \; sign\_pub\_key := [\perm{U}, \otype + 1, \otype + 2, \otype + 1]} \\
          & r \gets 42 \quad \commentOut{ // \mathit{\ computation...}} \\
          & \{r\}_{ot} \gets \mathrm{seal}(r, sign\_keys) \\
          & \textbf{return } (\{r\}_{ot}, sign\_pub\_key)
        \end{aligned}
      \end{equation*}
    \end{subfigure}
    \hfill
    \begin{subfigure}[t]{0.49\textwidth}
    \begin{equation*}
      \begin{aligned}
        ID_{SOC} \triangleq \enspace
        & \mathrm{hash}(\mathrm{Addr}(\mathrm{Enclave}), \mathrm{Enclave}) \\[8pt]
        \mathrm{Client} \triangleq\enspace
        & (\{r\}_{ot}, sign\_pub\_key) \gets \mathrm{Adversary}() \\
        & ot \gets \mathrm{getotype}(\{r\}_{ot}) \\
        & I \gets \mathrm{estoreid}(ot) \\
        & \textbf{if } I \not\eq ID_{SOC} \textbf{ then abort} \\
        & r \gets \mathrm{unseal}(\{r\}_{ot}, sign\_pub\_key) \\
        & \textbf{assert } r = 42 \\[8pt]
        \mathrm{Adversary} \triangleq \enspace
        & \mathrm{enclave^*} \gets \mathrm{einit}(\mathrm{Enclave}, \mathrm{Data}) \\
        & \textbf{return } \mathrm{enclave}()
      \end{aligned}
    \end{equation*}
  \end{subfigure}
  }
\caption{\label{fig:soc_pseudocode_asm}SOC using Cerisier primitives, adapted from \Cref{fig:secure_outsourced_computation_pseudocode}.
The $\instr{lea}$, $\instr{subseg}$, and $\instr{restrict}$ primitives modify the sealing capability's address, bounds, and permissions, respectively.
In the adversary's initialization of the enclave, $\mathrm{Data}$ can be any \perm{RW} memory capability at least one address wide to hold the sealing capability.
Pseudocode adapted from the full listing in the Supplemental Materials, under
\appendixReplace{\Cref{sec:appendix_soc_code}}{B}.
}
\end{figure}
Separately, the client will contain the (pre-computed) static identity of the SOC enclave to check the enclave's attest.
Eventually, the client is resumed with a sealed result and public signing key from the SOC enclave.
The former is really a word sealed with the SOC's signing otype $\mathrm{ot}$, while the latter is an unsealing capability $[\perm{U}, ot, ot + 1, ot]$.
Using the \instr{estoreid} instruction, the client first queries the enclave table to get the dynamic identity corresponding to the otype of the sealed result.
Then, it verifies that the enclave that signed the value is indeed the SOC enclave.
Finally, the sealed value $\{r\}_{ot}$ is unsealed, and verified to be 42.
As the SOC enclave only signs (seals) the value 42, we know that the \instr{assert} will not fail.

\subsection{Attestation Reasoning}
\label{sec:proglog}
Formally capturing the rather special intuitive reasoning model for attestation, as described in \Cref{fig:interactions_scenarios_attest_tc}, is the key challenge for this paper.
Initially, only the client is trusted, that is, we consider a system that is only known to contain the trusted code and unknown, adversarial code (which may or may not contain an enclave).
When the client successfully validates an attested message, the trust model changes.
The validation tells us that an enclave with the specified identity was successfully initialized on the system and has received exclusive authority over the otypes.
Then, we additionally verify the attested enclave, assuming the information captured in the attested identity.
If we can verify that the enclave properly protects the received sealing capabilities and only uses them to sign messages containing correct results, we finally learn something about the received message.

Our approach to formalize this reasoning is to extend the universal contract with an additional guarantee about sealed capabilities produced by adversarial code.
As discussed in \Cref{sec:seal_caps}, the program logic allows associating an otype with a sealing predicate that must be satisfied by all capabilities sealed with the otype.
If a client learns by querying the enclave table that a pair of otypes $(o_1,o_2)$ corresponds to a specific enclave identity $I$, our program logic guarantees that the sealing predicate corresponding to $o_i$ must be a specific pair of predicates $P_{o_1},P_{o_2}$.

For all enclaves of interest, these enclave sealing predicates $P_{o_1},P_{o_2}$ must be provided
to the universal contract in an argument $\customEnclaves$,
mapping enclave identities to the sealing predicates they enforce.
A global \emph{enclave identity invariant} $\systemInvariant(\customEnclaves)$ ties the sealing predicates and the identities of the custom enclaves together,
ensuring that for any enclave in $\customEnclaves$ ever initialized, the sealing predicate will effectively be $P_{o_1},P_{o_2}$.

In addition, each entry in $\customEnclaves$ must be accompanied by program logic proofs that any enclave instance with identity $I$ respects the sealing predicates $P_{o_1}, P_{o_2}$ for the otypes $o_1,o_2$ it receives at initialization.
This requires that the enclave itself uses its sealing capabilities only to seal capabilities satisfying $P_{o_1}$ and $P_{o_2}$, but also that the enclave adequately protects its private state and the sealing capabilities to prevent the adversary from using them to seal other capabilities.

\subsection{Reasoning about Attested Enclaves' Clients}
\label{sec:enclave_res}
To model this intuitive reasoning, our program logic (\Cref{fig:program_logic_syntax}) can represent knowledge about the currently registered enclaves and their identity.
For brevity, we restrict ourselves to the assertions and relations that are necessary to reason
about initially untrusted enclaves communicating with initially trusted clients, even though
\cerisier{} offers additional rules for alternative trust models.
Moreover, we do not discuss enclave deinitialization.

Enclave clients can establish trust in the identity of enclaves using the $\instr{estoreid}$ instruction.
The persistent $\enclaveHist{tidx}{I}$ resource represents the knowledge about an enclave with identity $I$ that has been initialized at table index \V{tidx}.\footnote{The resource $\enclaveHist{tidx}{I}$ does not guarantee that the enclave is still live, i.e.\ it may have already deinitialized itself. The resource continues to be useful because messages from the enclave may still reach communication partners after deinitialization.}

\begin{figure}
  \centering

\begin{mathparpagebreakable}

    \inferH{EStoreIdAttestValid}
    {\X{ValidPC}\lcap{\ppc}{\bpc}{\epc}{\apc}{\vpc} \\
      \X{decode}(n) = \instr{estoreid}\; \I{r_d}\; \I{r_s}\\
      \tidxofot(\otype_a) = \V{tidx} \\
    }
    {
      \stepspecbase
      {
        \begin{array}{l}
          \pc \rmapsto \lcap{\ppc}{\bpc}{\epc}{\apc}{\vpc}
          \ast \ecmapsto{\V{ecn}}
          \\
          {} \ast \apc \amapsto{\vpc} n
          \ast \I{r_s} \rmapsto \otype_a
          \ast \I{r_d} \rmapsto \_ \\
        \end{array}
      }
      {s \ldotp
        \begin{array}{c}
          \left(
          \begin{array}{l}
            \exists I \ldotp
            \pure{s = \Running} \\
            {} \ast \pc \rmapsto \lcap{\ppc}{\bpc}{\epc}{\diff{\apc+1}}{\vpc} \\
            {} \ast \ecmapsto{\V{ecn}} \ast \apc \amapsto{\vpc} n \\
            {  }\ast \I{r_s} \rmapsto \otype_a
            \ast \I{r_d} \rmapsto \diff{I} \\
            \rowstyle{\color{diffColor}}
            {} \ast \enclaveHist{tidx}{I}
            \ast \pure{0 \leq \V{tidx} < \V{ecn}  }
            \rowstyle{\color{black}}
          \end{array}
          \right) \\
          \lor 
          \left(
          \begin{array}{l}
            \pure{s = \Failed}
            \ast \ldots
          \end{array}
          \right)
        \end{array}
      }
    }
\end{mathparpagebreakable}

\caption{\label{fig:wprules_enclaves}
  Rules for the new enclave instructions.
  The ``$\ldots$'' represent precondition resources being returned.
}
\end{figure}

Enclave resources $\enclaveHist{tidx}{I}$ can be obtained by clients using the \ref{EStoreIdAttestValid} rule in \Cref{fig:wprules_enclaves}.
The rule specifies the contract for the enclave attestation instruction \instr{EStoreId}.
The instruction queries the CPU for the identity of the enclave corresponding to a given otype $o_a$.
It has two possible outcomes, depending on whether this otype corresponds to the enclave sealing capabilities of an initialized enclave.
If attestation succeeds, the identity of enclave $I$ was found in the enclave table and is stored in the destination register $\reg{d}$.
A token $\enclaveHist{tidx}{I}$ is created, witnessing that an entry $\V{tidx}$ for the otype $\otype_{a}$, with identity $I$ exists in the enclave table.

To use the knowledge learned using \instr{EStoreId}, enclave clients can combine the $\enclaveHist{tidx}{I}$ resource with user-provided information about a set of enclaves of interest.
This information is represented in the user-defined map of \emph{custom enclaves}
$\customEnclaves : \EId \rightharpoonup \List(\Word) \times Addr \times \iProp \times \iProp$, which maps each known enclave identity $I$ to its
code $\customEnclaves(I).\V{code}$,
the base address $\customEnclaves(I).\V{addr}$ and
the sealing predicates for encryption $\customEnclaves(I).\V{P_{enc}}$ and signing $\customEnclaves(I).\V{P_{sign}}$.
The map is assumed to be well-formed,
\ie{}
the identity should correspond to the hash of the code and base address:
\[\forall I \ldotp \hash(\customEnclaves(I).\V{addr}) \hashconcat
\hash(\customEnclaves(I).\V{code}) = I.\]

\begin{figure}[]
\raggedright

\begin{align*}
  \begin{array}[t]{lcl}
    \systemInvariant(\customEnclaves) & \eqdef & \exists \V{ecn} \ldotp \xxsmallh{A}~ \ecmapsto{\V{ecn}} 
                                          {} \ast \xxsmallh{B}~ \left( \Sep_{\otype \in \crange{0}{2*\V{ecn} + 1}} \exists P \ldotp \sealPred(\otype)(P) \right) \\
   & & {} \ast \xxsmallh{C}~ \customEnclavesPred(\customEnclaves)(\V{ecn})
 \end{array}
  \end{align*}
\begin{align*}
  \begin{array}[t]{l}
   \text{where} \; \customEnclavesPred(\customEnclaves)(\V{ecn}) \eqdef \!\!
   \\ \quad
   \always \forall I \in \dom(\customEnclaves), 0 \leq \V{tidx} < \V{ecn}, \otype \ldotp
    \pure{\tidxofot(\otype) = \V{tidx}} \ast
    \enclaveHist{tidx}{I} \wand
   \\ \quad
     \sealPred(\otype)( \customEnclaves(I).\V{P_{enc}} )
     \ast \sealPred(\otype + 1)( \customEnclaves(I).\V{P_{sign}} )
 \end{array}
\end{align*}
\caption{\label{fig:system_invariant}
  Global invariant $\systemInvariant$ parameterised by user-provided map $\customEnclaves$
  of enclaves of interest (\emph{custom} enclaves).}
\end{figure}

\begin{figure}[]
  \begin{multline*}
    \customEnclavesContract(\customEnclaves) \eqdef
      \forall I \in \V{dom}(\customEnclaves),
      \V{data}, b, e, v, p', b', e', a', v', \otype \ldotp \\
      \left(\begin{aligned}
        &\pure{ b = \customEnclaves(I).\V{addr} } \ast \pure{ \V{tidx} = \tidxofot(\otype)  } \\
        &{} \ast \xxsmallh{1}~ b \amapsto{v} \lcap{p'}{b'}{e'}{a'}{v'}
        \ast \crange{b+1}{e} \amapsto{v} \customEnclaves(I).\V{code} \\
        &\ast \xxsmallh{2}~ b' \amapsto{v'} \left[\perm{SU},\otype, \otype + 2, \otype\right]
        \ast \crange{b'+1}{e'} \amapsto{v'} \V{data} \\
        &\ast \xxsmallh{3}~\sealPred\V{(\otype, \customEnclaves(I).\V{P_{enc}})}
          \ast \xxsmallh{4}~\sealPred\V{(\otype + 1, \customEnclaves(I).\V{P_{sign}})}\\
        &\ast \xxsmallh{5}~\enclaveLive{tidx}{I}
      \end{aligned}  \right)
      \wand \safeV{\lcap{\enter}{b}{e}{a}{v}}
  \end{multline*}

\caption{\label{fig:enclave_contract}
  Contract of custom enclaves, parameterised by user-provided map $\customEnclaves$,
  stating that all enclaves of interest are safe to execute,
  and that they respect the sealing predicate defined in $\customEnclaves$.
}
\end{figure}


Given this user-provided map, the system-wide \emph{enclave identity invariant} $\systemInvariant(\customEnclaves)$ in \Cref{fig:system_invariant} ties the sealing predicates and the identities of the custom enclaves together.
It owns the resource $\ecmapsto{\V{ecn}}$ tracking the number of registered enclaves \xxsmallh{A}, and states that the seal predicates for all otypes up-to the current enclave counter must be initialized \xxsmallh{B}.
Moreover, $\customEnclavesPred$ \xxsmallh{C} states that for all identities in the custom enclaves map $\customEnclaves$, the associated sealing predicates are registered for the enclave's otypes.

Enclave clients that have obtained a $\enclaveHist{tidx}{I}$ resource for the enclave owning a given otype $\otype$, may use the enclave identity invariant to learn that $\sealPred(\otype)( \customEnclaves(I).\V{P_{enc}})$, stipulating that safe capabilities sealed with $\otype$ necessarily satisfy the predicates associated with $I$.
Usually, the sealed capabilities that the client uses have to be safe,
because the adversary acts as intermediary between the enclave and the client.
The case studies in \Cref{sec:case_studies} explain how this is used in practical examples.

\subsection{Reasoning about the Adversary with the Universal Contract}
\label{sec:ftlr}

The $\instr{EStoreId}$ instruction, the enclave identity invariant, and $\enclaveHist{tidx}{I}$ provide enclave clients with extra information they can rely on when receiving sealed values from untrusted code.
However, this information comes at the cost of an additional proof obligation when invoking the adversary.
In particular, we need to prove that the enclaves of interest registered in the custom enclaves map, actually respect the sealing predicates we registered for them.

This condition on registered enclaves is expressed as $\customEnclavesContract$ in \Cref{fig:enclave_contract}.
This contract essentially requires that when an enclave is initialized by adversarial code and its identity matches that of an enclave in the custom enclaves map, then it must respect the sealing predicates registered for it.
More technically, the $\instr{einit}$ instruction does not invoke the initialized enclave itself, but instead creates a sentry capability for it, with exclusive access to the sealing capability for the enclave's otypes.
The custom enclaves contract essentially describes the state of the enclave at that point and requires that it is safe to share with the adversary:
the code region contains the capability pointing to the data region, followed by the instructions themselves \xxsmallh{1},
the data region contains the sealing capability containing the encryption and
signing otypes, followed by the content of the data \xxsmallh{2}.
Importantly, the enclave receives exclusive ownership of this memory in the form of points-to predicates.
The sealing predicates associated with the enclave's keys correspond to those found
in $\customEnclavesContract$ (\xxsmallh{3} - \xxsmallh{4}).
Additionally, the enclave receives the $\enclaveLive{tidx}{I}$ resource representing authority to deinitialize itself \xxsmallh{5}.
Proving this contract amounts to showing that the enclave only ever seals values that satisfy the predicates defined in $\customEnclaves$, and that it does not leak its private keys to untrusted parties.

Apart from this assumption about the registered enclaves, the \cerisier{} universal contract is similar to the \cerise{} one, described in \Cref{thm:univ-contract}.
\begin{theorem}[\cerisier{} Universal Contract]
  \label{thm:univ-contract}
  Let $\customEnclaves$ be a map of custom enclaves,
  such that $\customEnclavesContract(\customEnclaves)$ holds.
  Then,
  $\knowInv{}{\systemInvariant(\customEnclaves)} \proves
      \contspecbase
        {\begin{array}{l}
          \displaystyle\Sep_{
          \substack{r \in \RegName}
          }
          \left(\exists w\ldotp r \rmapsto w \ast \safeV{w}\right)
        \end{array}
        }{}$
\end{theorem}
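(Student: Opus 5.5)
We would prove the contract in its Fundamental Theorem form: for every word $w$, $\safeV{w} \wand \safeE{w}$ under $\knowInv{}{\systemInvariant(\customEnclaves)}$ and the hypothesis $\customEnclavesContract(\customEnclaves)$. The argument is by Löb induction, as in \cerise{}. The statement of \Cref{thm:univ-contract} then follows by instantiating this with the word in $\pc$.

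Unfolding $\safeEl$, we must show that from a state where $\pc \rmapsto w$ and all other registers hold safe words, the machine runs safely. If $w$ is not a valid executable capability, the machine fails immediately, which is safe. Otherwise $w = \lcap{\RX/\RWX}{b}{e}{a}{v}$. Its safety gives an invariant for $a$, from which we read the instruction. We then case-split on the decoded instruction. For each instruction we prove a single-step lemma: open the relevant invariants, apply the corresponding WP rule, and re-establish safety of every written register and memory cell, together with the new $\pc$. We then close the invariants and conclude with the Löb hypothesis under the $\later$.

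The old \cerise{} cases carry over with version numbers threaded through. The sealing cases use $\safeseal$ and $\safeunseal$: sealing transfers $\safeVl$ into $P$, and unsealing transfers $P$ back into $\safeVl$, with agreement of the persistent $\sealPred$ predicates. The $\instr{isunique}$ case uses the revocation interface of \Cref{fig:isuniquewprules_full}. After a successful sweep, the fresh version is owned exclusively, so we can allocate fresh invariants asserting safety of its contents, since the old contents were already safe. The $\instr{estoreid}$ case only yields an integer and a persistent token. The $\instr{edeinit}$ case uses $\safeattest$ to flip $\enclaveLive{tidx}{I}$ to $\enclavePrev{tidx}$ inside its invariant.

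The main obstacle is $\instr{einit}$. Here we open $\systemInvariant(\customEnclaves)$ and allocate $\sealPred$ for the two fresh otypes $2\V{ecn}$ and $2\V{ecn}+1$, which is possible because part (B) covers only otypes below $2\V{ecn}+1$. We also bump $\ecmapsto{\V{ecn}}$ and create the $\enclaveLive{}{}$ and $\enclaveHist{}{}$ tokens. We then case-split on whether the computed identity $I$ lies in $\dom(\customEnclaves)$.

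If it does, the collision-freedom axioms on $\hash$ and $\hashconcat$, together with well-formedness of $\customEnclaves$, give that the code region contents equal $\customEnclaves(I).\V{code}$ and that $b = \customEnclaves(I).\V{addr}$. We choose the seal predicates $\customEnclaves(I).\V{P_{enc}}$ and $\customEnclaves(I).\V{P_{sign}}$, re-establish part (C) for the new index, and apply $\customEnclavesContract$ to obtain safety of the sentry.

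If it does not, we choose $\safeVl$ itself as the seal predicate for both otypes. Part (C) is vacuous for the new index. Safety of the sentry then comes from the Löb hypothesis, since the enclave is just arbitrary code with safe memory; we re-invariantize the swept memory as in the $\instr{isunique}$ case.

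The delicate points are two. First, the uniqueness sweeps must be used to obtain exclusive points-to predicates at bumped versions for both regions, which is what $\customEnclavesContract$ requires. Second, the freshly created $\sealPred$ resources must agree with whatever $\customEnclavesPred$ later demands. This agreement holds because both are fixed at the moment the index is allocated.
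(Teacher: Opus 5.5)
Your overall route is the paper's. You use L\"ob induction with a case split on the first instruction, and you discharge the sealing, \instr{estoreid} and \instr{edeinit} cases from the corresponding clauses of $\safeVl$. You then split \instr{einit} on whether the measured identity $I$ lies in $\dom(\customEnclaves)$. In one branch you use $\safeVl$ as seal predicate plus the induction hypothesis; in the other you use $\customEnclavesContract$, fed with points-to predicates obtained from the version bump. Your explicit use of injectivity of $\hash$ and $\hashconcat$ to get $b = \customEnclaves(I).\V{addr}$ and the code equality is a step the paper leaves implicit. However, two steps of the \instr{einit} case do not go through as you state them.

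\textbf{Freshness of the otypes.} Freshness of $2\V{ecn}$ and $2\V{ecn}+1$ does not follow from part (B). As displayed, (B) ranges over $\crange{0}{2*\V{ecn}+1}$, which contains $2\V{ecn}$. By your own reading, that otype therefore already carries a registered predicate, and agreement of $\sealPred$ then forbids registering $\customEnclaves(I).\V{P_{enc}}$ for it. More fundamentally, an invariant not mentioning an otype gives you no right to allocate a seal predicate for it. You need a positive allocation resource for all not-yet-used otypes, for instance owned by $\systemInvariant(\customEnclaves)$ for every otype from $2\V{ecn}$ on. \instr{einit} consumes two of these before the invariant is closed at $\V{ecn}+1$. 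The paper only asserts that a predicate of choice can be registered.

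\textbf{The branch $I \notin \dom(\customEnclaves)$.} Choosing $\safeVl$ as seal predicate only yields $\safeseal$ and $\safeunseal$ for the capability $[\perm{SU},\otype,\otype+2,\otype]$ written at $b'$. Because it has permission $\perm{SU}$, its safety also requires $\safeattest$, i.e.\ an invariant $\exists I' \ldotp \enclaveLive{tidx}{I'} \lor \enclavePrev{tidx}$. You create $\enclaveLive{tidx}{I}$ but never say where it goes. As the paper does, you must allocate this invariant with that token. Otherwise the data region is not safe memory, and the L\"ob hypothesis does not apply to the new sentry.

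\textbf{A smaller point on \instr{isunique}.} Your argument only works when $r_s$ holds an $\RO/\RX/\RW/\RWX$ capability, the only case where $\safeVl$ yields invariants with safe contents. For a sentry or a sealed capability in $r_s$, you cannot rebuild safety of a bumped word from the old contents, so those cases need a separate argument.
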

The contract captures all possible behaviors of arbitrary, adversary code, obeying the guarantees provided by the capability machine, and thus captures capability and attestation safety.
It is a \emph{universal contract}~\cite{huyghebaert_formalizing_2023} since it has no assumptions about the particular code that is executed, i.e., it applies to arbitrary code.
As such, the contract can be used to reason about invocations of untrusted adversarial code.

Notice that proving the contract $\customEnclavesContract$ typically requires recursively invoking the universal contract, for example, to reason about the enclave returning control to its untrusted caller.
Fortunately, this recursion can be resolved by proving the contract using L\"ob induction.
This proof principle, inherited from Iris, allows proving a predicate $P$ by showing that $\later P \proves P$~\cite{iris}, i.e., to prove $P$, one may assume that $P$ holds at later logical steps.

\subsection{Proving the Universal Contract}
\label{sec:ftlr_proof}
The original \cerise{} universal contract is proven by L\"ob induction and case analysis on the first instruction executed by the adversary.
For every instruction in the ISA, one shows that the necessary authority follows from the safety of capabilities that are available.
Furthermore, one proves that the machine state after executing this instruction is also safe, which is done in one of two ways.
In most cases, it suffices to prove that the resulting machine state still contains only safe-to-share values and invoke the L\"ob induction hypothesis.

However, in cases where control is potentially passed to non-adversarial code, e.g., when invoking a sentry capability, we may reach a state where not all capabilities are known to be safe to share.
In such cases, we require another argument to prove the safety of the resulting state.
After invoking a sentry capability, it is converted into an $\perm{RX}$ capability that
may not be safe to share (see \Cref{fig:opsem_excerpt}, in particular, the $\instr{jmp}$ instruction).
However, in this case, we can appeal to the safe-to-execute fact for this $\perm{RX}$ capability that we get from the safety of the original $\perm{E}$ capability (see \Cref{fig:logrel}).

The \cerisier{} proof of the universal contract proceeds along similar lines, adding cases for all the new instructions.
Cases for $\instr{getwtype}$, $\instr{getotype}$, $\instr{estoreid}$, $\instr{hash}$, and $\instr{hashconcat}$ are trivial, since they only produce integers and do not modify system state.
The cases for $\instr{cseal}$ and $\instr{cunseal}$ appeal to the $\safeseal{}$ and $\safeunseal{}$ properties that we learn from the safety of the sealing capabilities (see \Cref{fig:logrel}) to justify that the sealed or unsealed capabilities produced by the instruction remain safe.
Similarly, the case for $\instr{edeinit}$ relies on the authority to deinitialize an enclave that is included in safety of the sealing capability (see \Cref{fig:logrel}).
For brevity, we do not explain the case for $\instr{isunique}$ which involves dealing with the revocation machinery from \Cref{sec:mem_revocation}.

By far the most interesting case is the $\instr{einit}$ instruction, essentially because it creates a new situation where control is passed from the adversary to potentially trusted code.
The operational semantics provides us with a measured identity, a new table entry and a sealing capability for two fresh otypes.
For the fresh otypes, we can register a sealing predicate of choice.
In the case where the measured identity $I$ is not present in the custom enclaves map $\customEnclaves$, we can take $\mathcal{V}$ as sealing predicates and store the enclave ownership predicate in a suitable invariant.
This amounts to treating the new enclave as untrusted, and indeed, we can then prove all produced capabilities safe and appeal to the induction hypothesis.

However, if the measured identity is present in $\customEnclaves$, the invariant $\systemInvariant(\customEnclaves)$ forces us to register the corresponding sealing predicates (rather than $\mathcal{V}$), and the new sealing capability cannot be considered safe-to-share.
It is in this situation a new form of security-boundary-crossing appears, as the untrusted adversarial code metamorphoses into a trusted enclave.
Fortunately, we can make use of $\customEnclavesContract$, which tells us that the freshly defined enclave is safe, at least if we can satisfy the preconditions from \Cref{fig:enclave_contract}.

Many of these preconditions simply describe the resulting machine state after $\instr{einit}$ and are hence easy to satisfy.
However, the points-to predicates for the code and memory region need to be created freshly, intuitively by revoking pre-existing authority over the corresponding memory.
For every address $a$ in these regions, the memory revocation infrastructure from \Cref{sec:mem_revocation} allows us to do this by bumping the current version $v$ of $a$ to $v+1$, allocating $(a,v+1)$ as a fresh logical address and reestablishing correspondence of physical and logical memory by designating $(a,v+1)$ as current.
The memory sweep's positive result provides the necessary evidence that no more references to $(a,v)$ exist in the system.

\subsection{Adequacy}
\label{sec:adequacy}

Our specifications are written and proven in the \cerisier{} program logic.
To draw end-to-end conclusions about executions of our case studies in terms of only the operational
semantics, we prove an \emph{adequacy theorem}.

\begin{theorem}[End-to-end theorem for enclave attestation --- Informal]
  \label{thm:informal_adequacy}
  Suppose a good initial machine state with an initialized client and an (arbitrary) adversary, but no enclave.
  Further suppose that for some custom enclave map $\customEnclaves$,
  the specification $\systemInvariant(\customEnclaves) \proves \contspecbase{P}{}$ holds,
  where $P$ corresponds to the interpretation of the initial state in the program logic.

  Then, the $\instr{assert}$ in the client program will never fail.
\end{theorem}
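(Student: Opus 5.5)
We reduce the statement to Iris adequacy in the standard way: we build a model of the initial machine state, run the program logic specification, and read off a pure conclusion about execution traces. Concretely, we will encode ``the $\instr{assert}$ never fails'' as a safety property: the client code jumps to a designated failure label (or executes $\instr{fail}$) exactly when the assertion is violated. It therefore suffices to show that no reachable configuration has the $\pc$ pointing at that failing branch. Equivalently, we strengthen the specification by registering an invariant stating that the client's code region holds its known instructions, and a ghost ``assert flag'' resource recording that the check succeeded. Failure of the assert is then contradictory with the invariant.

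The first step is initialization. Given the concrete initial state $\confv_0$ (registers, memory, empty enclave table, $\X{EC}=0$), we use the Iris adequacy theorem for the weakest precondition underlying $\contspecbase{\cdot}{}$. This lets us allocate the ghost state: the authoritative register, memory and version maps (all addresses at version $0$), the enclave-table ghost maps (empty $\enclaveHistAuth$, $\enclaveLiveAuth$, $\enclavePrevAuth$), and the seal-predicate map. From these we produce the points-to predicates $r \rmapsto w$, $a \amapsto{0} w$ and $\ecmapsto{0}$. Next we establish $\knowInv{}{\systemInvariant(\customEnclaves)}$ with $\V{ecn}=0$. Part \xxsmallh{B} requires only the seal predicates for otypes in $\crange{0}{1}$, which we allocate with arbitrary predicates. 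Part \xxsmallh{C} is vacuous, since there is no $\V{tidx}<0$. We then repackage the remaining resources as the interpretation $P$ of the initial state. This covers the client's code as an invariant, and the adversary's memory as $\safeVl$ invariants, which holds because the adversary initially holds only integers and capabilities to its own region. From the hypothesis $\systemInvariant(\customEnclaves) \proves \contspecbase{P}{}$ we then obtain the weakest precondition for the initial configuration.

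The second step extracts the pure fact. Iris adequacy gives that every reachable state satisfies all opened invariants; in particular, the invariant guarding the client's assert region holds. Inside it, the client's code at the failing branch is only reachable after a check whose negation is excluded by the specification's invariant. By the soundness of Iris invariants, a state in which the $\pc$ reaches the failure point while the invariant forbids it would yield $\FALSE$, which is impossible. We conclude that no execution reaches the failing assertion.

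We expect the main obstacle to be the construction of the initial ghost state together with the physical/logical memory correspondence invariant from \Cref{sec:mem_revocation}. We must show that, with all versions at $0$, every physical capability is ``current'' and reachable only at its current version, and that the enclave-table ghost state agrees with the empty $\ETable$. We would handle this by a direct lemma that the initial state trivially satisfies the state interpretation, proved by unfolding the definitions over the finite domain. The remaining adequacy plumbing is routine.
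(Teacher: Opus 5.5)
Your proposal has a genuine gap in the extraction step. The only safety content of $\contspecbase{P}{}$ is that registered invariants are preserved. The machine is explicitly allowed to end in $\Failed$: the adversary may fail at any time, and the client itself jumps to \instr{fail} when an attestation check fails. So Iris adequacy gives you nothing about which failing states, or which $\pc$ values, are reachable. An invariant cannot supply this either. Register points-to predicates, including $\pc \rmapsto w$, are owned by the running code and handed wholesale to the adversary. An invariant over the client's code region, or a purely ghost ``assert flag'', constrains resources, not control flow. Your claim that the failing branch is ``excluded by the invariant'' therefore has nothing to stand on. The paper instead makes an assertion failure observable as a violation of an invariant on physical memory. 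The client reaches the $\instr{assert}$ routine through the sentry in its linking table (invariant $\X{assertInv}$), and the routine records failures in a flag at address $\aflag$. The conclusion of the formal theorem is $\V{mem'}(\aflag) = 0$ for every reachable state, read off from $\knowInv{}{\aflag \amapsto{\vinit} 0}$ together with the state interpretation.

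Even with a memory flag, you are missing the step the paper identifies as non-trivial. With revocation, a logical points-to $a \amapsto{v} w$ is tied to physical memory only while $v$ is the current version of $a$. A successful sweep in $\instr{isunique}$ or $\instr{einit}$ bumps versions, after which $\knowInv{}{\aflag \amapsto{\vinit} 0}$ says nothing about the physical contents of $\aflag$. The paper handles this with reserved addresses. The specification is proved under $\reservedAddresses = \{\aflag\}$, whose addresses trusted code cannot revoke and the adversary is never given. The memory correspondence $\phyLogMem{vm}$ then pins the version of every reserved address to $\vinit$, and it is exactly this clause that turns the invariant into a physical fact at every reachable state. Your last paragraph only treats the easy half of the correspondence, namely that the initial state satisfies it. Relatedly, $P$ is fixed by the statement as the raw points-to predicates at version $\vinit$. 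The flag, assert and system invariants are hypotheses that adequacy must allocate. Establishing $\safeVl$ for the adversary's region is the client proof's job, not something to repackage into $P$.
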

We provide the formal version of \Cref{thm:informal_adequacy} in the Supplemental Material, under \appendixReplace{\Cref{sec:appendix_adequacy}}{G}.

When proving the end-to-end properties of our case studies \Cref{sec:case_studies},
we always assume a good initial state (which can be obtained, \eg{} after booting the machine),
and the program logic specification is then proved with a $\customEnclaves$ carefully chosen.
When the program interacts with untrusted code, the program logic specification
requires proving that $\customEnclavesContract(\customEnclaves)$ holds
for all registered custom enclaves in $\customEnclaves$.

\section{Case Studies}
\label{sec:case_studies}

In this section, we present case studies of three systems relying on enclaves, implemented using \cerisier{} features and verified using the \cerisier{} program logic.
Two case studies represent trusted execution applications: secure outsourced computation (SOC, \Cref{sec:case_study_soc}) and trusted sensors in a stateful enclave (\Cref{sec:case_study_sensor_readout}).
Both model representative applications of well-known trusted execution applications, studied in the systems literature by, for example, \citet{barbosa_foundations_2016,li_survey_2023} for the former, and \citet{shepherd_establishing_2017,hu_cvshield_2020} for the latter.
Our third case study concerns mutual attestation, where two mutually distrusting enclaves establish trust in each other.
To avoid circularity, such enclaves' identities cannot include the identity that they use to attest each other.
In the case study, we model MAGE: a technique proposed by \citet{MAGE_mutual_attest} to resolve the circularity by attesting enclave code separately from an identity table used to attest other enclaves (\Cref{sec:case_study_mutual_attest}).

\subsection{Verifying the SOC Example}
\label{sec:case_study_soc}
The correctness of the SOC example from \Cref{fig:soc_pseudocode_asm} follows from our adequacy result (\Cref{thm:informal_adequacy})
by instantiating the client's program with the SOC client and proving the Cerisier specification.
The derived property guarantees that the assertions never fail,
we can thus conclude that the unsealed result is 42.
\newcommand{\otres}{\V{ot\_res}}
In a bit more detail, our proof follows the approach sketched in \Cref{sec:soc}.

  First, we instantiate the custom enclave map $\customEnclaves$ to contain a single entry that maps
  the SOC enclave's static identity $\V{ID_{SOC}}$ to the signing predicate $\lambda w \ldotp w = 42$.
  We need to prove that the property associated with the enclave identity holds,
  formally by proving that $\customEnclavesContract(\customEnclaves)$ holds.
  In essence, we must argue that the enclave does not leak its private signing key, and that any
  values signed equal 42.

  Then, we need to prove the specification of the SOC client.
  When the client calls the adversary, we invoke the universal contract \Cref{thm:univ-contract}.
  When the adversary yields to the known client, we expect to receive a sealed result $\sealed{\V{ot}}{\V{r}}$ and the corresponding unsealing key $[ \perm{U}, \V{ot}, \V{ot}+1, \V{ot}]$.
  Moreover, $\safeV{\sealed{\V{ot}}{\V{r}}}$ holds because $\sealed{\V{ot}}{\V{r}}$ was passed by the adversary.
  For the call to \instr{estoreid}, we apply \ref{EStoreIdAttestValid}.
  In the success case, we obtain the enclave resource $\enclaveHist{\V{tidx}}{I}$,
  for the table index $\V{tidx}$ linked to the otype $\V{ot}$, for some identity $I$.
  If $I = \V{ID_{SOC}}$, the system invariant combined with our $\customEnclavesPred$ and $\enclaveHist{\V{tidx}}{I}$ yields $\sealPred(\V{ot})(\lambda w \ldotp w = 42)$.
By combining $\sealPred(\V{ot})(\lambda w \ldotp w = 42)$ and $\safeV{\sealed{\V{ot}}{\V{r}}}$, we know that $\V{r} = 42$.
After unsealing, we can thus safely conclude that the assert holds.

\subsection{Mutual Attestation}
\label{sec:case_study_mutual_attest}
Naively attesting two (or more) enclaves with each other induces a circularity problem since they must each contain a table of static identities for the other enclave(s).
Nevertheless, many applications require a mutual attest to establish secure communication.
\citet{MAGE_mutual_attest} propose a solution to break the circularity.
First, all enclaves involved in the mutual attestation compute a \emph{pre-identity} which hashes only the code, not the identity table.
These pre-identities are gathered in an \emph{identity table}, that is located in memory after the enclaves' code.
When an enclave $E_1$ wants to attest an enclave $E_2$, it can no longer simply look up the expected identity of $E_2$ in its identity table, only the pre-identity.
However, $E_1$ can still reconstruct $E_2$'s identity at runtime by combining $E_2$'s pre-identity with its own copy of the pre-identity table (which should be equal to $E_2$'s copy).
Concretely, it will hash its pre-identity table and combine the result with $E_2$'s pre-identity.
The result is the identity of $E_2$ and can be used to attest $E_2$.
Note that this method requires some algebraic assumptions about \instr{hash},
i.e., hashing the concatenation of the memory contents is equal to concatenating the hash of each, separately:
$\hash(m_{1} \mathrel{++} m_{2}) = \hash(m_{1}) \hashconcat \hash(m_{2})$.

Our mutual attestation case study models this proposed MAGE protocol.
The full code can be found in the Rocq artifact, and simplified pseudocode is provided in the
Supplemental Materials, under \appendixReplace{\Cref{sec:appendix_mutual_attest}}{C}.
Two enclaves A and B, compute 42 and 43, respectively, and attest each other.
When A executes, it signs 42, and yields to the adversary.
It expects the adversary to call B, passing the signed integer.
When B is executed, it expects the adversary to pass the signed value from A.
B then attests A, from which it learns that the signed value must be 42.
Now, B signs 43, and yields to the adversary.
When the adversary then executes A, it expects the signed value from B.
A finally attests B, and learns that the signed value was 43.
Mutual attestation is thus established.
%

Correctness is analogous to the SOC example \Cref{sec:case_study_soc}.
In addition, for the enclaves to mutually attest each other,
$\customEnclavesContract$ holds for enclave A (resp. B)
iff it holds for enclave B (resp. A).
This follows from the L\"ob induction principle.

\subsection{Trusted Sensor Readout}
\label{sec:case_study_sensor_readout}
\label{sec:trust-sensor-read}

The previous two examples only read from memory without writing to it.
We now consider an example that writes to its data section.
The example is an adaptation of the trusted sensor readout example in  CHERI-TrEE~\citep{cheriTree}.
The code can be found in the Supplemental Materials, under \appendixReplace{\Cref{sec:appendix_sensor_readout}}{D}.
Since we do not model I/O, a memory read acts as a stand-in for reading from an MMIO sensor.
The setup is as follows:
the adversary creates a reader enclave that reads from the sensor address, while a separate transformer enclave processes the raw sensor data.
The transformer enclave attests the reader enclave.
Furthermore, a verifier can then attest the transformer enclave to be assured that the values it receives are indeed processed sensor readings.
For both enclaves, we use the initial entry points, created after the adversary executes \instr{einit}, to initialize private state.
We create another sentry to serve as a second entry point.

The initialization of the reader enclave expects a capability $(\perm{RW}, \V{mmio}_b, \V{mmio}_e, \V{mmio}_a)$ to access the sensor.
After ensuring exclusive access with \instr{isunique}, the reader initializes the sensor (here modeled by writing 21 to the sensor's address), and stores the sensor capability in its data section after the sealing keys.
It then creates a sentry capability to a $\V{read}$ entry point in its own code section, signs this sentry, and returns it with the public signing key to the adversary.
The $\V{read}$ entry point fetches the sensor capability from its data, and loads and returns the value at address $\V{mmio}_a$.

The initialization of the transformer enclave expects the sealed sentry to the sensor's $\V{read}$ entry point and the sensor's public signing key which it uses to attest the sensor enclave.
Like the reader enclave, it stores the sentry in its data section and returns a new sentry to a $\V{use}$ entry point.
This second entry point calls the $\V{read}$ entry point of the sensor and processes the value (multiply by 2) before returning.
Finally,
the main program expects a signed sentry to the transformer enclave's $\V{use}$ entry point, which it attests, and asserts that it returns 42.

During the proof, invariants specify the integrity of the data sections.
For instance, after initialization the sensor enclave's data contains a RW-capability pointing to 21.
However, this invariant must be allocated during $\instr{einit}$ before any code runs.
Hence, we make use of a one-shot update to stipulate in a single invariant the shape of the data section before and after calling the enclave.
The second entry point $\V{read}$, has persistent knowledge that the data section has been initialized.

The sensor enclave only ever signs the $\V{read}$ entry point, but in the sealing predicate we specify a triple that must hold for the capability.
Furthermore, we must show that the sealed sentry capability to $\V{read}$ is safe to share with the adversary, i.e., that it satisfies the sealing predicate, or in other words, that it satisfies its triple.
The transformer enclave is verified similarly to the sensor enclave, and the
verification of the main program is also analogous to the SOC example in \Cref{sec:case_study_soc}.

\section{Discussion}

\subsection{Implementing CHERI-TrEE Primitives and Exclusive Memory Ownership}
\label{sec:discussion_memory-sweep}

The CHERI-TrEE authors~\citep{cheriTree} implement the primitives in hardware as an ISA extension to the open-source RISC-V Proteus processor~\citep{bognar23proteus} and describe how the memory sweep integrates into the processor's 5-stage instruction pipeline.
They also implement the primitives in software on the ARM Morello~\citep{morello_2022} capability machine as a trusted hypervisor component providing the primitives as hypercalls.
Both implementations perform an un-optimized eager memory sweep and in this paper, we follow suit to keep verification simple.

Even though this is costly in practice, the CHERI-TrEE authors~\citep{cheriTree} envision it to be acceptable in certain embedded system scenarios.
They also point to work on reducing the cost of the sweep~\citep{joannou2023efficienttaggedmemory} and how it could be run concurrently on multi-core systems~\citep{esswood_cherios_2021}.
The exclusive memory ownership guarantees can also be implemented with other schemes.
For instance, CheriOS~\citep{esswood_cherios_2021} lets a trusted nanokernel manage and hand out \emph{reservations}, from which one can obtain guaranteed exclusively-owned memory.
It uses both a (fast) virtual-memory-remapping and a (slow) GC-like mechanism (similar to CHERIoT's~\citep{cheriot} and Cornucopia~\citep{filardo2020cornucopia,filardo2024cornucopiareloaded}) to reclaim memory.

\subsection{Reasoning about Attestation Generally}
\label{sec:discussion_reasoning-attestation}

Our technical development involves several non-trivial techniques not previously present in Cerise, e.g., for reasoning about sealed capabilities and memory revocation.
Nevertheless, the main innovation is our approach to reasoning about attestation, discussed in Sections~\ref{sec:proglog}-\ref{sec:ftlr}.
The core idea is to treat attestation formally as an additional form of control transfer from untrusted to trusted code.

This core idea is visible in the universal contract (see \Cref{sec:ftlr}), which justifies safety of a control transfer to untrusted code, provided any control transfer from the adversary back to trusted code is safe.
The contract of custom enclaves in \Cref{fig:enclave_contract} models exactly that requirement for the transfer that conceptually happens from untrusted to trusted code when an enclave is freshly attested: safety of the callback produced by $\instr{einit}$ must follow from ownership of the authority the attested enclave obtains (particularly exclusive memory ownership and access to enclave attestation seals).
This is similar to the ``safe to share''-requirement $\safeV{\enter,b,e,a}$ for an enter capability in \Cref{fig:logrel}, since an enter capability lets the adversary transfer control back to trusted code, and this transfer must be proven safe when invoking the universal contract.
\Citet{huyghebaert_formalizing_2023} have shown that the control transfers of ISA security primitives in non-capability-machine settings (particularly RISC-V PMP ecalls and interrupts) can be treated similarly.

Reasoning about other systems offering attestation will require a program logic that can express the guarantees offered by underlying mechanisms (e.g. the lazy GC-like mechanism of \citet{esswood_cherios_2021} or SGX-like lookaside tables \cite{noorman_sancus_2017,brasser_tytan_2015,costan_intel_2016}).
Insofar as those techniques involve a form of revocation of memory ownership, we expect that one should be able to adapt the logical memory technique of \citet{hurSeparationLogicPresence2011}.
After defining a suitable program logic, a universal contract should be defined, proven and applied,
which has assumptions about the control transfer from untrusted code to freshly-attested trusted enclaves,
formalized with a form of contract for custom enclaves.
We believe that our approach generalizes to other attestation systems,
because any attestation system involves a form of control transfer from untrusted code to attested code.

\section{Related Work}

We have discussed the most closely related work along the way.
In this section, we discuss other related work; we restrict ourselves to work that considers verification of low-level software using trusted computing primitives (not just software that implements them)  and verification of capability machine security properties.

Some other work verifies architectural and micro-architectural properties about models of trusted
execution environments (TEE)
\mbox{(micro-)architectures} or their implementation in hardware and firmware.
However, none of these papers can be used to verify end-to-end results about TEE enclaves (or equivalent) and their client(s) in the presence of arbitrary adversary code.
\citet{witharana_formal_2024} use model-checking to verify some security properties about a model of Intel TDX \mbox{(micro-)architectures}.
\citet{nunes_vrased_2019} use a combination of verification techniques to verify hardware and software components of the VRASED TEE system, but \citet{bognar_mind_2022} later discovered various errors in their verification approach.
They did not aim to prove end-to-end results about software relying on attestation.
\citet{ferraiuolo_komodo_2017} use Dafny to prove information flow properties about enclaves in their Komodo system, but as other mentioned work and unlike our work, they do not aim to prove that application software uses the TEE primitives correctly.
Although such work strengthens confidence in correctness of TEE implementations, the authors do not consider how to effectively reason about software making use of TEE primitives.

The \cerise{} capability machine and program logic has been discussed and extended with various features \cite{cerise,van_strydonck_proving_2022,georges_efficient_2021,georges_temps_2022,hammond_morello-cerise_2025} but no attestation primitives have been considered until now.
Other authors have used different approaches for proving weaker properties on larger ISAs \cite{nienhuis_rigorous_2020,bauereiss_verified_2022}.

\section{Conclusion and Future Work}
\label{sec:conclusion-future-work}

We have presented Cerisier, the first program logic for modular reasoning about
trusted, untrusted, and attested code. We have proved Cerisier sound with respect to
a novel operational semantics of a capability machine model with primitives for
attestation and sealing, and we have used Cerisier for defining a universal contract, which
captures both capability safety and local enclave attestation. The universal
contract is formalized using a novel kind of parameterized logical relation. Finally, we have
used Cerisier to prove non-trivial end-to-end properties of representative examples
of trusted computing applications relying on local attestation.
Our work is fully mechanized in \rocq{}, comprising \textasciitilde 35k LoC,
including 2k LoP for the proof of the FTLR case for $\instr{einit}$ sketched in \Cref{sec:ftlr_proof}.
We believe our formal techniques capture intuitive reasoning of trusted execution systems,
and thus that our techniques can inform the design of program logics for other
trusted execution systems.
There are several interesting other lines of future work.

Remote attestation is an important feature of trusted computing that \cerisier{} does not support.
This would require to change our operational semantics and program logic, introducing either a probabilistic \citep[see, e.g., ][]{abate_ssprove_2021} or symbolic \citep[see, e.g.,][]{sumiiLogRelEncryption} model of cryptography.
The resulting changes would be important, but orthogonal to the contribution of this paper.

Similarly to trusted computing, confidential computing derives confidentiality guarantees
about the systems' data and I/O behavior.
We believe that \cerisier{}'s reasoning techniques extend to confidential computing,
and could be based on Cerise's existing binary program logic, which has previously been demonstrated to support confidentiality results.

Finally, it is worth noting that Cerisier currently does not support I/O,
but support for memory-mapped I/O (MMIO) \citep{mmio_cerise} would enable \emph{trusted} and \emph{confidential I/O}, i.e., establishing trust in enclaves performing I/O.
The trusted sensor implementation from Section~\ref{sec:trust-sensor-read} could then be adapted to read from an external sensor rather than from memory.

\begin{acks}
We thank the anonymous PLDI reviewers for their valuable feedback and suggestions.
Denis Carnier holds a PhD fellowship (1S46525N) of the Research Foundation – Flanders (FWO).
This work was supported in part by a Villum Investigator grant (no. 25804 and VIL73403),
Center for Basic Research in Program Verification (CPV), from the VILLUM Foundation.
This research is partially funded by the Internal Funds KU Leuven,
the Cybersecurity Research Program Flanders, and by ERC grant (UniversalContracts, 101040088).
Views and opinions expressed are however those of the author(s) only and do not necessarily reflect
those of the European Union or the European Research Council Executive Agency.
Neither the European Union nor the granting authority can be held responsible for them.

\section*{Data-Availability Statement}
Our results are publicly available~\citep{cerisier-artifact, cerisier-repository}.
  The artifact contains the mechanization in \rocq{} of all definitions, case studies, theorems and proofs discussed in the paper.
  The Supplemental Material is also publicly available~\citep{cerisier-arxiv}.

\end{acks}

\bibliography{biblio.bib}

\ifappendix
\else
\end{document}
\fi

\appendix
\newpage

\section{Operational Semantics}
\label{sec:appendix_opsem}
\begin{figure}[ht!]
  \vspace{-1em}

\small
\bgroup
\setlength\tabcolsep{0.54em}
\noindent\makebox[\textwidth]{\begin{tabular}{|C|L|L|}
  \hline
  \rowstyle{\color{black}}
  $i$ & $\instrsem{i}(\confv)$ & Conditions \\ \hline
  \instr{fail} & $(\K{Failed}, \confv)$ & \\ \hline
  \instr{halt} & $(\K{Halted}, \confv)$ & \\ \hline
  $\instr{mov}\; r\; \rho$
  & $\X{updPC}(\confv[\X{reg}.r \mapsto w])$
  & \!\!\!\begin{tabular}{l}
      $w = \X{getWord}(\confv, \rho)$
    \end{tabular}
  \\ \hline
  $\instr{load}\; r_1\; r_2$ & $\X{updPC}(\confv[\X{reg}.r_1 \mapsto w])$
  & \!\!\!\begin{tabular}{l}
      $\confv.\X{reg}(r_2) = (p, b, e, a)$ and $w = \confv.\X{mem}(a)$ \\
      and $b \le a < e$ and
      $p \in \{ \RO,  \RX, \RW, \RWX \}$
    \end{tabular}
  \\ \hline
  $\instr{store}\; r\; \rho$
  & $\X{updPC}(\confv[\X{mem}.a \mapsto w])$
  & \!\!\!\begin{tabular}{l}
    $\confv.\X{reg}(r) = (p, b, e, a)$ and $b \le a < e$ \\
    and $p \in \{\RW, \RWX\}$
    and $w = \X{getWord}(\confv, \rho)$\\
    \end{tabular}
  \\ \hline
  $\instr{jmp}\; r$
  & {$(\K{Running}, \confv \left[ \X{reg}.\pc \mapsto \V{new} \right])$}
  & $ \V{new} = \X{updPcPerm}(\confv.\X{reg}(r)) $
  \\ \hline
  \multirow{2}{*}{$\instr{jnz}\; r_\mathrm{dst}\; r_\mathrm{cond}$}
      & {$(\K{Running}, \confv \left[ \X{reg}.\pc \mapsto \V{new} \right])$}
      & $\V{new} = \X{updPcPerm}(\confv.\X{reg}(r_\mathrm{dst}))$ and
        $\confv.\X{reg}(r_\mathrm{cond}) \neq 0$
  \\ \cline{2-3}
  & $\X{updPC}(\confv)$
  & $\confv.\X{reg}(r_\mathrm{cond}) = 0$
  \\ \hline
  $\instr{restrict}\; r\; \rho$
  & $\X{updPC}(\confv[\X{reg}.r \mapsto w])$
  & \!\!\!\begin{tabular}{l}
      $\confv.\X{reg}(r) = (p, b, e, a)$ \\ and
      $p' = \X{decodePerm}(\X{getWord}(\confv, \rho))$
      and $p' \permflowsto p$ \\
      and $w = (p', b, e, a)$
    \end{tabular}
  \\ \hline
  \rowstyle{\color{newCerisierColor}}
  $\instr{restrict}\; r\; \rho$
  & $\X{updPC}(\confv[\X{reg}.r \mapsto w])$
  & \!\!\!\begin{tabular}{l}
    $\confv.\X{reg}(r) = [\V{sp},\otype_{b},\otype_{e},\otype_{a}]$ \\
    and $\V{sp'} = \X{decodeSealPerm}(\X{getWord}(\confv, \rho))$ \\
    and $\V{sp'} \permflowsto \V{sp}$
    and $w = [\V{sp'},\otype_{b},\otype_{e},\otype_{a}]$
    \end{tabular}
  \\ \hline
  \rowstyle{\color{black}}
  $\instr{subseg}\; r\; \rho_1 \; \rho_2$
  & $\X{updPC}(\confv[\X{reg}.r \mapsto w])$
  & \!\!\!\begin{tabular}{l}
      $\confv.\X{reg}(r) = (p, b, e, a)$ and for $i \in \{1,2\}$, \\
    $z_i = \X{getWord}(\confv, \rho_i)$ and $z_i \in \ZZ$ \\
    and $b \le z_1 < \X{AddrMax}$ and $0 \le z_2 \le e$ \\
    and $p \neq \enter$ and $w = (p, z_1, z_2, a)$
    \end{tabular}
  \\ \hline
  \rowstyle{\color{newCerisierColor}}
  $\instr{subseg}\; r\; \rho_1 \; \rho_2$
  & $\X{updPC}(\confv[\X{reg}.r \mapsto w])$
  & \!\!\!\begin{tabular}{l}
      $\confv.\X{reg}(r) = [\V{sp},\otype_{b},\otype_{e},\otype_{a}]$ and for $i \in \{1,2\}$, \\
    $z_i = \X{getWord}(\confv, \rho_i)$ and $z_i \in \ZZ$ \\
    and $\otype_{b} \le z_1 < \X{OTypeMax}$ and $0 \le z_2 \le \otype_{e}$ \\
    and $w = [\V{sp},z_{1},z_{2},\otype_{a}]$
    \end{tabular}
  \\ \hline
  \rowstyle{\color{black}}
  $\instr{lea}\; r \; \rho$
  & $\X{updPC}(\confv[\X{reg}.r \mapsto w])$
  & \!\!\!\begin{tabular}{l}
      $\confv.\X{reg}(r) = (p, b, e, a)$ and $z = \X{getWord}(\confv, \rho)$ \\
      and $p \neq \enter$
      and $w = (p, b, e, a + z)$
          \end{tabular}
  \\ \hline
  \rowstyle{\color{newCerisierColor}}
  $\instr{lea}\; r \; \rho$
  & $\X{updPC}(\confv[\X{reg}.r \mapsto w])$
  & \!\!\!\begin{tabular}{l}
      $\confv.\X{reg}(r) = [\V{sp},\otype_{b},\otype_{e},\otype_{a}]$ and $z = \X{getWord}(\confv, \rho)$ \\
      and $w = [\V{sp},\otype_{b},\otype_{e},\otype_{a} + z]$
          \end{tabular}
  \\ \hline
  \rowstyle{\color{black}}
  $\instr{add}\; r \; \rho_1 \; \rho_2$
  & $\X{updPC}(\confv[\X{reg}.r \mapsto z])$
  & \!\!\!\begin{tabular}{l}
      for $i \in \{1, 2\}$, $z_i = \X{getWord}(\confv, \rho_i)$ \\
      and $z_i \in \ZZ$ and $z = z_1 + z_2$
    \end{tabular}
  \\ \hline
  $\instr{sub}\; r \; \rho_1 \; \rho_2$
  & $\X{updPC}(\confv[\X{reg}.r \mapsto z])$
  & \!\!\!\begin{tabular}{l}
      for $i \in \{1, 2\}$, $z_i = \X{getWord}(\confv, \rho_i)$ \\
      and $z_i \in \ZZ$ and $z = z_1 - z_2$
    \end{tabular}
  \\ \hline
  $\instr{lt}\; r \; \rho_1 \; \rho_2$
  & $\X{updPC}(\confv[\X{reg}.r \mapsto z])$
  & \!\!\!\begin{tabular}{l}
      for $i \in \{1, 2\}$, $z_i = \X{getWord}(\confv, \rho_i)$ \\
      and $z_i \in \ZZ$ and if $z_1 < z_2$ then $z = 1$ else $z = 0$
    \end{tabular}
  \\ \hline
  $\instr{getp}\; r_1\; r_2$
  & $\X{updPC}(\confv[\X{reg}.r_1 \mapsto z])$
                               & \!\!\!\begin{tabular}{l}
    $\left(
    \begin{tabular}{c}
      $\confv.\X{reg}(r_2) = (p, \_, \_, \_)$
      and $z = \X{encodePerm}(p)$
    \end{tabular}
    \right)
    $\\
    \newCerisier{
    or
    $\left(
    \begin{tabular}{c}
    $\confv.\X{reg}(r_2) = [\V{sp}, \_, \_, \_]$ \\
    and $z = \X{encodeSealPerm}(\V{sp})$
    \end{tabular}
    \right)$
    }
    \end{tabular}
  \\ \hline
  $\instr{getb}\; r_1\; r_2$
  & $\X{updPC}(\confv[\X{reg}.r_1 \mapsto b])$
  & \!\!\!\begin{tabular}{l}
      $\confv.\X{reg}(r_2) = (\_, b, \_, \_)$
    \newCerisier{or $\confv.\X{reg}(r_2) = [\_, b, \_, \_]$}
    \end{tabular}
  \\ \hline
  $\instr{gete}\; r_1\; r_2$
  & $\X{updPC}(\confv[\X{reg}.r_1 \mapsto e])$
  & \!\!\!\begin{tabular}{l}
      $\confv.\X{reg}(r_2) = (\_, \_, e, \_)$
    \newCerisier{or $\confv.\X{reg}(r_2) = [\_, \_, e, \_]$}
    \end{tabular}
  \\ \hline
  $\instr{geta}\; r_1\; r_2$
  & $\X{updPC}(\confv[\X{reg}.r_1 \mapsto a])$
  & \!\!\!\begin{tabular}{l}
      $\confv.\X{reg}(r_2) = (\_, \_, \_, a)$
    \newCerisier{or $\confv.\X{reg}(r_2) = [\_, \_, \_, a]$}
    \end{tabular}
  \\ \hline
  \rowstyle{\color{newCerisierColor}}
  $\instr{getotype}\; r_1\; r_2$
  & $\X{updPC}(\confv[\X{reg}.r_1 \mapsto z])$
  & \!\!\!\begin{tabular}{l}
      if $\confv.\X{reg}(r_2) = \sealed{\otype}{w}$
      then $z = \otype$ else $z = -1$
    \end{tabular}
  \\ \hline
  $\instr{getwtype}\; r_1\; r_2$
  & $\X{updPC}(\confv[\X{reg}.r_1 \mapsto z])$
  & \!\!\!\begin{tabular}{l}
    $\confv.\X{reg}(r_2) = w$ and
    $z = \X{encodeWordType}(w)$
    \\
    \end{tabular}
  \\ \hline
  sealing instructions & \ldots & \ldots \\ \hline
  enclave instructions & \ldots & \ldots \\ \hline
  %
  \rowstyle{\color{black}}
  \_ & $(\K{Failed}, \confv)$ & otherwise
  \\ \hline
\end{tabular}}
\egroup

\caption{Operational semantics: execution of a single instruction.}
\end{figure}

\begin{figure}[ht!]
\small
\bgroup
\setlength\tabcolsep{0.54em}
\noindent\makebox[\textwidth]{\begin{tabular}{|C|L|L|}
  \hline
  $i$ & $\instrsem{i}(\confv)$ & Conditions \\ \hline
  \rowstyle{\color{newCerisierColor}}
  %
  $\instr{cseal}\; r_{d} \; r_{1} \; r_{2}$
  & $\X{updPC}(\confv[\X{reg}.r_{d} \mapsto \sealed{\otype_{a}}{\V{sc}}]$
  & \!\!\!\begin{tabular}{l}
    $\confv.\X{reg}(r_1) = [\V{sp},\otype_{b},\otype_{e},\otype_{a}]$
    and $\confv.\X{reg}(r_2) = \V{sc}$ \\
    and $\V{sp} \in \{\perm{S}, \perm{SU}\}$
    and $\otype_{b} \le \otype_{a} < \otype_{e}$ \\
    \end{tabular}
  \\ \hline
  $\instr{cunseal}\; r_{d} \; r_{1} \; r_{2}$
  & $\X{updPC}(\confv[\X{reg}.r_{d} \mapsto \V{sc}]$
  & \!\!\!\begin{tabular}{l}
    $\confv.\X{reg}(r_1) = [\V{sp},\otype_{b},\otype_{e},\otype_{a}]$
    and $\confv.\X{reg}(r_2) = \sealed{\otype_{a}}{\V{sc}}$ \\
    and $\V{sp} \in \{\perm{U}, \perm{SU}\}$
    and $\otype_{b} \le \otype_{a} < \otype_{e}$ \\
    \end{tabular}
  \\ \hline

  %
  $\instr{einit}\; r_{1}\; r_{2}$
  & \!\!\!\begin{tabular}{l}
    {$\X{updPC}(\confv[$} \\
    \quad{$\X{mem}.b \mapsto (\RW,b',e',a')$}, \\
    \quad{$\X{mem}.b' \mapsto [\perm{su},\otype_{a},\otype_{a+2},\otype_{a}]$}, \\
    \quad{$\X{etbl}.\V{tidx} \mapsto I$}, \\
    \quad{$\X{EC} \mapsto \confv.\X{EC} + 1$}, \\
    \quad{$\X{reg}.r_{1} \mapsto (\enter, b, e, b+1)$}, \\
    \quad{$\X{reg}.r_{2} \mapsto 0$])} \\
  \end{tabular}

  & \!\!\!\begin{tabular}{l}
    $r_{1} \neq \pc $
    and $\confv.\X{reg}(r_{1}) = (\RX,b,e,a)$
    and $b < e$ \\
    and $\confv.\X{reg}(r_{2}) = (\RW,b',e',a')$
    and $b' < e'$ \\
    and $\sweepr(\confv)(r_{1})$
    and $\sweepr(\confv)(r_{2})$ \\
    and $\left(\forall l \in \crange{b+1}{e}.\;
    \confv.\X{mem}(l) \in \ZZ\right)$\\
    and $I = \hash(b) \hashconcat \hash(\confv.\X{mem}(\; \crange{b+1}{e} \; ))$ \\
    and $\V{tidx} = \freshtidx(\confv)$
    and $\otype_{a} = \confv.\X{EC}*2$\\
    \end{tabular}
  \\ \hline
  $\instr{edeinit}\; r$
  & \!\!\!\begin{tabular}{l}$\X{updPC}(\confv[\X{etbl}.\X{tidx} \mapsto \emptyset ])$ \end{tabular}
  & \!\!\!\begin{tabular}{l}
    $\confv.\X{reg}(r) = [\perm{su},\otype_{a},\otype_{a+2},\_]$ \\
    and $\X{tidx} = \tidxofot(\otype_{a})$
    and $\confv.\X{etbl}(tidx) = I$\\
    \end{tabular}
  \\ \hline
  $\instr{estoreid}\; r_d \; r_s$
  & \!\!\!\begin{tabular}{l}$\X{updPC}(\confv[\X{reg}.r_{d} \mapsto I ])$ \end{tabular}

  & \!\!\!\begin{tabular}{l}
    $\confv.\X{reg}(r_{s}) = \otype_{s}$
    and $\X{tidx} = \tidxofot(\otype_{s})$\\
    and $\confv.\X{etbl}(tidx) = I$\\
    \end{tabular}
  \\ \hline
  $\instr{isunique}\; r_{d}\; r_{s}$
  & \!\!\!\begin{tabular}{l}$\X{updPC}(\confv[\X{reg}.r_{d} \mapsto z])$\end{tabular}
  & \!\!\!\begin{tabular}{l}
    $\left(
    \begin{tabular}{c}
      $\confv.\X{reg}(r_{s}) = (p,b,e,a)$ \\
      $\lor$
      $\confv.\X{reg}(r_{s}) = \{(p,b,e,a)\}_{\otype_{i}}$
    \end{tabular}
    \right)
    $\\
    and
    $\left(
    \begin{tabular}{ll}
      if & $\sweepr(\confv)(r_{s})$\\
      then & $z=1$ \quad else $z=0$
    \end{tabular}
    \right)$
  \end{tabular}
  \\ \hline
  $\instr{hash}\; r_{d}\; r_{s}$
  & \!\!\!\begin{tabular}{l}$\X{updPC}(\confv[\X{reg}.r_{d} \mapsto \hash(w)])$\end{tabular}
  & \!\!\!\begin{tabular}{l} $\confv.\X{reg}(r_{s}) = w$ \end{tabular}
  \\ \hline
  $\instr{hashconcat}\; r_{d} \; \rho_1 \; \rho_2$
  & $\X{updPC}(\confv[\X{reg}.r_{d} \mapsto z])$
  & \!\!\!\begin{tabular}{l}
      for $i \in \{1, 2\}$, $z_i = \X{getWord}(\confv, \rho_i)$ \\
      and $z_i \in \ZZ$ and $z = z_1 \hashconcat z_2$
    \end{tabular}
  \\ \hline
  \rowstyle{\color{black}}
  \ldots & \ldots & \ldots
  \\ \hline
\end{tabular}}
\egroup
\caption{Execution of a single instruction for sealing, unsealing and enclaves.}
\end{figure}

\begin{figure}[ht!]
  \small
  \begin{mathpar}
  \inferrule[ExecSingle]{}
  {{\begin{array}{l}
    (\K{Running}, \confv) \rightarrow \left\{
    \begin{array}{l}
      \instrsem{\mathit{decode}(z)}(\confv)
      \quad \begin{array}[t]{ll}
        \mathrm{if}  \!\!\!\!&
        \confv.\mathrm{reg(\pc)} = (p, b, e, a) \wedge b \le a < e \wedge {} \\
        & p \in \{\RX, \RWX \} \wedge \confv.\mathrm{mem(\V{a})} = z \\[0.5em]
      \end{array}
      \\
      (\K{Failed},\, \confv) \qquad\quad\;\, \mathrm{otherwise}
    \end{array}
    \right.
    \end{array}}
}

  \X{updPC}(\confv) = \left\{
    \begin{array}{ll}
      (\K{Running}, \confv[\X{reg}.\pc \mapsto (p, b, e, a+1)])
      & \text{if } \confv.\X{reg}(\pc) = (p, b, e, a)\\ 
      (\K{Failed}, \confv) & \text{otherwise}
    \end{array}
    \right.

  \X{getWord}(\confv, \rho) = \left\{
    \begin{array}{ll}
      \rho & \text{if } \rho \in \ZZ \\
      \confv.\X{reg}(\rho) & \text{if } \rho \in \X{RegName}
    \end{array}
  \right.

  \X{updPcPerm}(w) = \left\{
    \begin{array}{ll}
      (\RX, b, e, a) & \text{if } w = (\enter, b, e, a) \\
      w & \text{otherwise}
    \end{array}
  \right.

  \color{newCerisierColor}
  \overlap(w_1)(w_2) =
  \left\{
  \begin{array}{ll}
    \crange{b_1}{e_1} \cap  \crange{b_2}{e_2}
    &\begin{array}{l}
      \text{if }
      \left(w_1 = (p_1,b_1,e_1,a_1) \lor w_1 = \sealed{\otype}{(p_1,b_1,e_1,a_1)}\right)
      \\ \text{and}
      \left(w_2 = (p_2,b_2,e_2,a_2) \lor w_2 = \sealed{\otype}{(p_2,b_2,e_2,a_2)}\right)
    \end{array}
    \\ \bottom & \text{otherwise}
  \end{array}
  \right.

  \sweepr(\confv)(r_s) =
  \begin{array}{l}
    \forall r \in \V{dom}(\confv.\X{reg} \setminus r_s).\;
    \neg \overlap(\confv.\X{reg}(r), \confv.\X{reg}(r_s))\\
     {} \land \forall a \in \V{dom}(\confv.\X{mem}).\;
    \neg \overlap(\confv.\X{mem}(a), \confv.\X{reg}(r_s))
  \end{array}

  \tidxofot(\otype_a) = \left\{
    \begin{array}{ll}
      \otype_a \slash 2 & \text{if $\V{is\_even}(\otype_a)$}\\
      (\otype_a-1) \slash 2 & \text{otherwise}
    \end{array}
  \right.

  \freshtidx(\confv) = \confv.\X{EC}
\color{black}
  \end{mathpar}

\caption{Operational semantics helper.}
\end{figure}

\begin{figure}[ht!]

  \begin{subfigure}[t]{0.45\textwidth}
    \centering
    \begin{tikzcd}[tips=false,column sep=1em,row sep=1em]
      & \perm{RWX} \ar{dl} \ar{dr} & & \\
      \perm{RW} \ar{dr} & & \perm{RX} \ar{dl} \ar{d} & \\
      & \perm{RO} \ar{d} & \textsc{e} \ar{dl} & \\
      & \perm{o} & & \\
    \end{tikzcd}
    \caption{Lattice defining the $\permflowsto$ relation for memory capability permissions.}
  \end{subfigure}
  \hfill
  \begin{subfigure}[t]{0.45\textwidth}
    \centering
    \begin{tikzcd}[tips=false,column sep=1em,row sep=1em]
      & \perm{SU} \ar[dr] \ar[dl] &  \\
      \perm{S} \ar[dr]  &  & \perm{U} \ar[dl] \\
      & \perm{o}  &
    \end{tikzcd}
    \caption{Lattice defining the $\permflowsto$ relation for sealing capability permissions.}
  \end{subfigure}

  \caption{
    Lattice of permission.
    We have $p_{1} \permflowsto p_{2}$ if there is a path going up from $p_{1}$ to $p_{2}$ in
    the diagram.
  }

\end{figure}


\clearpage

\section{SOC Code}
\label{sec:appendix_soc_code}
\begin{figure}[!ht]
  \begin{capasmsmall}
; Expected initial register file:
; PC := (RWX, sensor, sensor_end, sensor+1)
; r0 := (RWX, adv, adv_end, adv)
main:
   ; create callback sentry
   mov r1 pc
   lea r1 (callback - main)
   restrict r1 E
   ; jump to adversary
   jmp r0
callback:
   ; PC := (RX, main, main_end, callback)
   ; r0 := <expects sealed value>
   ; r3 contains the failing capability, if any check fails
   mov r3 pc
   mov r4 r3
   lea r3 (fails - callback)
   ; check that r0 contains a capability
   getotype r2 r0
   sub r2 r2 (-1)
   mov r5 pc
   lea r5 4
   jnz r5 r2
   jmp r3
   ; attestation
   getotype r2 r0
   estoreid r4 r2
   ; check otype(w_res) against identity of the enclave
   sub r4 r4 hash_enclave
   jnz r3 r4
   ; get returned value and assert it to be 42
   unseal r1 r1 r0
   mov r0 r5
   geta r4 r1
   mov r5 42
   mov r1 r3
   lea r1 1
   load r1 r1
   ASSERT r1 r4 r5
   halt
fails:
   fail
data:
   (RO, b_lt, e_lt, b_lt); linking table_cap
   (RWX, main, main_end, data) ; writable cap of the main program
main_end:
\end{capasmsmall}
\caption{Full Cerisier code of the SOC client}
\label{fig:code_trusted_compute}
\end{figure}


\begin{figure}[!ht]
  \begin{capasmsmall}
; Expected initial register file:
; PC := (RWX, enclave, enclave_end, enclave + 1)
enclave:
   (RW, data_enclave, data_enclave_end, data_enclave)
   ; get signing sealing key
   mov r1 pc
   lea r1 (-1)
   load r1 r1
   getb r2 r1
   geta r3 r1
   sub r2 r2 r3
   lea r1 r2
   load r1 r1
   gete r3 r1
   sub r2 r3 1
   subseg r1 r2 r3
   ; store the result (42) in a o-permission capability and sign it
   mov r2 pc
   geta r3 r2
   sub r3 42 r3
   lea r2 r3
   restrict r2 O
   lea r1 1
   seal r2 r1 r2
   ; share the signed value and the unsealing key to the adversary
   restrict r1 (Unseal)
   jmp r0
enclave_end:
data_enclave:
   (SU, ot_enc_tc, ot_enc_tc+2, ot_enc_tc)
data_enclave_end:
\end{capasmsmall}
\caption{Full Cerisier code of the SOC enclave}
\label{fig:code_enclave_trusted_compute}
\end{figure}


\begin{figure}[!ht]
  \begin{capasmsmall}
    ; PC contains ()
    ; r1 contains the callback to main
    ; everything else contains garbage values
adv_start:
    ; create the enclave capability
    mov r2 pc
    getb r3 pc
    add r3 r3 (enclave-adv_start)
    getb r4 pc
    add r4 r4 (enclave_end-adv_start)
    subseg r2 r3 r4
    lea r2 (enclave-adv_start+1)
    mov r5 r2
    restrict r2 RX
    mov r0 0
    ; restrict adv such that it doesn't intersect with the enclave
    getb r4 pc
    subseg pc r4 r3
    ; get the data capability
    lea r5 -1
    load r4 r5
    store r5 0
    mov r5 0
    ; initialise the enclave
    einit r2 r4
    ; store r1 in r31
    mov r31 r1
    ; prepare callback, and calls it to get the value
    mov r0 pc
    lea r0 4
    restrict r0 (E, Global)
    jmp r2
    ; callback
    ; < r1 contains unsealing cap >
    ; < r2 contains sealed cap >
    mov r0 r2
    jmp r31
enclave:
   (RW, data_enclave, data_enclave_end, data_enclave)
....
enclave_end:
data_enclave:
....
data_enclave_end:
adv_end:
\end{capasmsmall}
\caption{Full Cerisier code of a possible adversary for the SOC example}
\label{fig:code_adversary_trusted_compute}
\end{figure}


\clearpage

\section{Mutual Attestation Code}
\label{sec:appendix_mutual_attest}
\begin{figure}[!ht]
  \begin{capasmsmall}
; Expected initial register file:
; PC := (RWX, main, main_end, main)
; r0 := < sealed cap of enclave A >
; r1 := < unsealing cap for A >
; r2 := < sealed cap of enclave B >
; r3 := < unsealing cap for B >
main:
    ; --- mutual_attestation_main_attest_or_fail r0 #A ---
    ; wsealed_A := get_reg(r0)
    ; otype_wsealed_A := get_otype(wsealed_A)
    ; id_A := attest(otype_wsealed_A)
    ; if ( id_A != #A ) then fail else continue

    ; --- mutual_attestation_main_get_confirm_or_fail r0 r1 ---
    ; unseal_A := get_reg(r1)
    ; unsealed_A := unseal(wsealed_A, unseal_A)
    ; if (get_base(unsealed_A) 

    ; --- mutual_attestation_main_attest_or_fail r2 #B ---
    ; wsealed_B := get_reg(r2)
    ; otype_wsealed_B := get_otype(wsealed_B)
    ; id_B := attest(otype_wsealed_B)
    ; if ( id_B != #B ) then fail else continue

    ; --- mutual_attestation_main_get_confirm_or_fail r2 r3 ---
    ; unseal_B := get_reg(r2)
    ; unsealed_B := unseal(wsealed_B, unseal_B)
    ; if (get_base(unsealed_B) 

    ; assert (get_addr(unsealed_A) == 1)
    ; assert (get_addr(unsealed_B) == 1)
    ; halt
main_end:
\end{capasmsmall}
\caption{Cerisier pseudocode of the Mutual Attestation client}
\label{fig:code_mutual_attest_main}
\end{figure}


\begin{figure}[!ht]
  \begin{capasmsmall}
; Expected initial register file:
; PC := (RWX, enclave_A, enclave_A_end, enclave_A)
; r0 := < return pointer >
enclave_A:
    ; --- create {(O,a,a+1,42)}_signed_A ; with a
    ; --- create return to callback_enclave_A ---
    ; --- return(r0, {(O,a,a+1,42)}_signed_A) ---

callback_enclave_A:
; PC := (RWX, enclave_A, enclave_A_end, callback_enclave_A)
; r0 := < return pointer >
; r1 := < sealed cap of enclave B >
; r2 := < unsealed cap for B >
    ; idT_B := data_A(B)
    ; idT := hash(data_A::data_A_end)
    ; expected_id_B := ( idT_B|| idT)

    ; wsealed_B := get_reg(r1)
    ; otype_wsealed_B := get_otype(wsealed_B)
    ; id_B := attest(otype_wsealed_B)
    ; if ( id_B != expected_id_B ) then fail else continue

    ; unseal_B := get_reg(r2)
    ; unsealed_B := unseal(wsealed_B, unseal_B)
    ; if (get_base(unsealed_B) \% 2 != 0) then fail else continue
    ; if (get_addr(unsealed_B) != 43) then fail else continue

    ; --- create {(O,a+1,a+2,1)}_signed_A ; with a
    ; --- return(r0, {(O,a+1,a+2,1)}_signed_A, pub_sign_A) ---
enclave_A_end:

data_A:
    #[enclave_A::enclave_A_end]
    #[enclave_B::enclave_B_end]
data_A_end:
\end{capasmsmall}
\caption{Cerisier pseudocode of the Mutual Attestation Enclave A}
\label{fig:code_mutual_attest_A}
\end{figure}

\begin{figure}[!ht]
  \begin{capasmsmall}
; Expected initial register file:
; PC := (RWX, enclave_B, enclave_B_end, enclave_B)
; r0 := < return pointer >
; r1 := < sealed cap of enclave B >
; r2 := < unsealed cap for B >
enclave_B:
    ; idT_A := data_B(A)
    ; idT := hash(data_B::data_B_end)
    ; expected_id_A := ( idT_A || idT)

    ; wsealed_A := get_reg(r1)
    ; otype_wsealed_A := get_otype(wsealed_A)
    ; id_A := attest(otype_wsealed_A)
    ; if ( id_A != expected_id_A ) then fail else continue

    ; unseal_A := get_reg(r2)
    ; unsealed_A := unseal(wsealed_A, unseal_A)
    ; if (get_base(unsealed_A) \% 2 != 0) then fail else continue
    ; if (get_addr(unsealed_A) != 42) then fail else continue

    ; --- create {(O,a,a+1,43)}_signed_B ; with a\%2=0 ---
    ; --- create {(O,a+1,a+2,1)}_signed_B ; with a\%2=0 ---
    ; --- return(r0, {(O,a,a+1,42)}_signed_B, {(O,a+1,a+2,1)}_signed_B, pub_sign_B) ---
enclave_B_end:

data_B:
    #[enclave_A::enclave_A_end]
    #[enclave_B::enclave_B_end]
data_B_end:
\end{capasmsmall}
\caption{Cerisier pseudocode of the Mutual Attestation Enclave B}
\label{fig:code_mutual_attest_B}
\end{figure}


\clearpage

\section{Sensor Readout Code}
\label{sec:appendix_sensor_readout}
\begin{figure}[!ht]
\begin{capasmsmall}
; Expected initial register file:
; PC := (RWX, main, main_end, main)
; r0 := (RWX, adv, adv_end, adv)
main:
   mov r1 pc
   lea r1 (callback - main)
   restrict r1 E
   ; jump to adversary
   jmp r0
callback:
   ; PC := (RX, main, main_end, callback)
   ; r0 := <expects sealed sentry to client use entry point>
   ; r1 := <expects client enclave's public signing key>
   mov r5 pc
   ; r5 contains failing capability, if any check fails
   lea r5 (fails - callback)
   ; check that r0 contains a sealed capability
   getotype r2 r0
   sub r3 t2 (-1)
   mov r4 pc
   lea r4 4
   jnz r4 r3
   jmp r5
   ; attestation
   estoreid r3 r2 ; r2 still contains the otype
   ; check the identity of the client enclave
   sub r3 r3 hash_client_enclave
   jnz r5 r3
   ; jump to client enclave
   unseal r1 r1 r0
   mov r0 pc
   lea r0 3 ; r0 = return_point
   jmp r1
return_point:
   mov r1 r5
   mov r4 r2
   mov r5 42
   lea r1 1
   load r1 r1
   ASSERT r1 r4 r5
   halt
fails:
   fail
data:
   (RO, b_lt, e_lt, b_lt); linking table_cap
   (RWX, main, main_end, data) ; writable cap of the main program
main_end:
\end{capasmsmall}
\caption{Full Cerisier code of the trusted sensor readout client}
\label{fig:code_trusted_sensor_readout_main}
\end{figure}


\begin{figure}[!ht]
\begin{capasmsmall}
; Expected initial register file:
; PC := (RWX, sensor, sensor_end, sensor + 1)
; r1 := <expects rw capability to sensor's mmio address>
sensor:
   (RW, data_sensor, data_sensor_end, data_sensor)
   mov r2 pc
   lea r2 (fails - (sensor + 1))
   getwtype r3 r1
   sub r3 r3 wcaptype
   jnz r2 r3     ; jump to fails if r1 is not a capabilitu
   getp r3 r1
   sub r3 r3 RW
   jnz r2 r3     ; jump to fails if r1 does not have RW permissions
   isunique r3 r1
   sub r3 1 r3
   jnz r2 r3     ; jump to fails if not unique
   store r1 21   ; initialize sensor
   mov r3 r2
   lea r3 (sensor - fails)
   load r3 r3
   getb r4 r3
   geta r5 r3
   sub r4 r4 r5
   jnz r2 r4     ; jump to fails if cursor of data capability is not data_sensor
   lea r3 1
   store r3 r1   ; store mmio capability
   lea r3 (-1)
   load r1 r3
   lea r1 1
   lea r2 (read - fails)
   restrict r2 e ; construct sentry capability for read
   seal r2 r1 r2 ; sign the sentry capability
   geta r3 r1
   gete r4 r1
   subseg r1 r3 r4
   restrict r1 u ; signing public key
   jmp r0
read:
   mov r1 PC
   lea r1 (sensor - read)
   load r1 r1   ; load data capability
   lea r1 1
   load r1 r1   ; load sensor capability
   load r2 r1   ; r_t2 = sensor value (21)
   geta r1 r1   ; r_t3 = mmio_a <mmio address>
   jmp r0
fails:
   fail
sensor_end:
data_sensor:
  (SU, ot_sens_enc, ot_sens_enc+2, ot_sens_enc)
  (RW, mmio_b, mmio_e, mmio_a)
data_sensor_end:
\end{capasmsmall}
\caption{Full Cerisier code of the sensor enclave}
\label{fig:code_trusted_sensor_readout_sensor}
\end{figure}


\begin{figure}[!ht]
\begin{capasmsmall}
; Expected initial register file:
; PC := (RWX, transformer, transformer_end, transformer + 1)
; r1 := <expects sensor enclave's public signing key>
; r2 := <expects sealed sentry to sensor enclave's read entry point>
transformer:  (RW, data_transformer, data_transformer_end, data_transformer)
              mov r3 pc
              lea r3 (fails - (transformer + 1))
              unseal r2 r1 r2 ; unseal read entry point
              geta r4 r1
              estoreid r1 r4  ; attest sensor enclave
              sub r1 r1 hash_sensor_enclave
              jnz r3 r1
              lea r3 (transformer - fails)
              load r1 r3      ; load data capability
              getb r4 r1
              geta r5 r1
              sub r4 r4 r5
              lea r1 r4
              lea r1 1
              store r1 r2     ; store read sentry in private data
              lea r1 (-1)
              load r2 r1      ; load sealing keys
              lea r2 1
              lea r3 (use - transformer)
              restrict r3 E   ; create sentry to use
              seal r1 r2 r3   ; sign the sentry
              geta r3 r2
              gete r4 r2
              subseg r2 r3 r4
              restrict r2 U   ; public signing key
              jmp r0
use:          mov r1 pc
              lea r1 (transformer - use)
              load r1 r1      ; load data capability
              getb r2 r1
              geta r3 r1
              sub r2 r2 r3
              lea r1 r2
              lea r1 1
              load r1 r1      ; load read sentry
              mov r3 r0       ; save return pointer in register unclobbered by read
              mov r0 pc
              lea r0 3        ; r0 = return_point
              jmp r1          ; jump to read
return_point: add r2 r2 r2    ; heavy computation on sensor value
              mov r0 r3
              mov r3 0
              jmp r0          ; return to caller
fails:        fail
transformer_end: data_transformer:
  (SU, ot_xform_enc, ot_xform_enc+2, ot_xform_enc)
  (RW, mmio_b, mmio_e, mmio_a)
data_transformer_end:
\end{capasmsmall}
\caption{Full Cerisier code of the transformer enclave}
\label{fig:code_trusted_sensor_readout_transformer}
\end{figure}


\clearpage

\section{Support for Capability Sealing in the Logical Relation}
\label{app:sealing}
The resource $\sealPred(\otype)(P)$ is a persistent resource keeping track of the predicate $P$ registered for the otype $\otype$.
Technically, it is defined as an instance of \iris{} saved predicates, indexed by otypes.
Only a single predicate can be registered for a given otype:
\begin{lemma}[Seal predicate agree]
  \label{lem:seal-pred-agree}
\[
  \sealPred(\otype)(P_{1}) \ast \sealPred(\otype)(P_{2})
  \wand \later ( P_{1} \equiv P_{2} )
\]
\end{lemma}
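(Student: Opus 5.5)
The plan is to unfold $\sealPred$ into the Iris saved-predicate construction, as stated just above the lemma, and then reduce the statement to the standard saved-predicate agreement lemma. Concretely, I would take $\sealPred(\otype)(P)$ to be $\exists \gamma \ldotp \pure{\V{map}(\otype) = \gamma} \ast \X{saved\_pred\_own}(\gamma, P)$, for a fixed global map from otypes to ghost names. Equivalently, it is an element of an agreement camera over $\later(\Word \to \iProp)$ stored at a per-otype location. Under either encoding, the two hypotheses $\sealPred(\otype)(P_1)$ and $\sealPred(\otype)(P_2)$ refer to the same ghost location, because the ghost name is determined by $\otype$ alone. If the encoding instead existentially quantifies the ghost name, I would first use the global $\gname$-map invariant (or a persistent authoritative map fragment) to show that both names coincide.

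With a shared ghost name $\gamma$, I would combine the two fragments using validity of the composition, $\ownGhost{\gamma}{\X{to\_agree}(\nextinv(P_1))} \ast \ownGhost{\gamma}{\X{to\_agree}(\nextinv(P_2))} \proves \valid(\ldots)$. Agreement validity then gives the internal equality $\nextinv(P_1) \equiv \nextinv(P_2)$. From the contractive constructor $\nextinv$ I obtain $\later(P_1 \equiv P_2)$, which is the equality of predicates as functions $\Word \to \iProp$, taken pointwise. Iris packages exactly this step as the lemma $\X{saved\_pred\_agree}$, which states $\X{saved\_pred\_own}(\gamma,\Phi_1) \ast \X{saved\_pred\_own}(\gamma,\Phi_2) \wand \later(\Phi_1\,x \equiv \Phi_2\,x)$. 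I would apply it for an arbitrary $x$ and then introduce the universal quantifier, using that $\later$ commutes with $\forall$ and that function equality in the OFE is pointwise.

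The main obstacle is the first step: making sure both ownership assertions talk about the \emph{same} ghost name. If $\sealPred$ is defined with a per-otype saved predicate allocated dynamically, which happens when the invariant $\systemInvariant$ registers fresh otypes at $\instr{einit}$, then the names live in an authoritative map $\ownGhost{\gamma_{\X{map}}}{\circ\{\otype \mapsto \gamma\}}$. In that case I would first combine the two map fragments. By agreement on the values of the gmap of agreed ghost names, I get $\gamma_1 = \gamma_2$, and only then would I apply saved-predicate agreement. The remaining work is routine bookkeeping with internal equalities under $\later$.
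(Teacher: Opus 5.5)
Your proposal is correct and is essentially the intended argument. Since $\sealPred$ is an Iris saved predicate whose ghost name is registered per otype in a ghost map, you combine the two map fragments to identify the ghost names and then apply saved-predicate agreement; the case you treat in your last paragraph is the relevant one, not the fixed meta-level map of your first paragraph. The final passage from $\forall w \ldotp \later(P_1(w) \equiv P_2(w))$ to $\later(P_1 \equiv P_2)$, by commuting $\later$ with $\forall$ and using pointwise equality of functions, is sound.
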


A sealed capability $\sealcap$ is safe to share if the capability $\V{sc}$ respects the sealing predicate associated with the otype $o$ (see \Cref{fig:logrel}).
If one already knows the predicate for $\otype$ (\ie~$\sealPred(\otype)(P)$), then it follows from \Cref{lem:seal-pred-agree} that the sealed value $\V{sc}$ respects $P$:
\begin{lemma}[Safe to share sealed agree]
  \label{lem:safe-to-share-sealed-agree}
\[
  \sealPred(\otype)(P) \ast \safeV{\sealcap} \wand \later P(\V{sc})
\]
\end{lemma}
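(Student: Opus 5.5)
The plan is to unfold $\safeV{\sealcap}$ according to its defining clause in \Cref{fig:logrel}, and then transport the predicate via \Cref{lem:seal-pred-agree}.

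\textbf{Unfolding.} By definition, $\safeV{\sealcap}$ gives some predicate $Q$ together with
\[
  \sealPred(\otype,Q) \ast Q(\V{sc}).
\]
We keep the hypothesis $\sealPred(\otype)(P)$ from the premise. Since $\sealPred$ is persistent, duplicating it costs nothing, although we only need one copy.

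\textbf{Agreement.} Apply \Cref{lem:seal-pred-agree} to $\sealPred(\otype)(P)$ and $\sealPred(\otype)(Q)$. This yields $\later(P \equiv Q)$, an equality of predicates (a function-valued $\iProp$ equivalence) under a later.

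\textbf{Rewriting.} Specialise the equivalence at the argument $\V{sc}$ to get $\later(P(\V{sc}) \equiv Q(\V{sc}))$. We have $Q(\V{sc})$ now, so $\later Q(\V{sc})$ follows by later-introduction. Iris internal equality is compatible with $\later$: from $\later(A \equiv B)$ and $\later B$ we obtain $\later A$. To see this, combine $\later(A\equiv B) \ast \later B \vdash \later((A \equiv B) \ast B)$, then rewrite inside the later using the internal-equality rewriting principle (\texttt{internal\_eq\_rewrite}). This gives $\later P(\V{sc})$, as required.

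\textbf{Main obstacle.} The only real subtlety is this step of rewriting under the later. The agreement of saved predicates holds only up to a later, so we cannot conclude $P(\V{sc})$ itself, only $\later P(\V{sc})$. That is exactly why the statement is guarded. In Rocq this amounts to an \texttt{iNext}, followed by \texttt{iRewrite} with the specialised equality, using \texttt{f\_equiv}/\texttt{discrete\_fun\_equivI} to apply the function equality to $\V{sc}$.
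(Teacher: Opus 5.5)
Your proposal is correct and is the same argument the paper gives: it states that the lemma follows from \Cref{lem:seal-pred-agree}. As in your plan, you unfold the sealed clause of $\safeVl$ to obtain some $Q$ with $\sealPred(\otype)(Q) \ast Q(\V{sc})$, use agreement to get $\later(P \equiv Q)$, and rewrite under the later to conclude $\later P(\V{sc})$.
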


Whether a sealing capability ${[\V{sp},\otype_{\X{b}},\otype_{\X{e}},\otype_{\X{i}}]}$ is safe to share, depends on the permission $\V{sp}$.
If it has permission to seal then we require $\safeseal(\otype_{\X{b}},\otype_{\X{e}})$: the sealing predicates for otypes in $\crange{\otype_b}{\otype_b}$ must hold for any values the adversary might have access to, \ie{} safe-to-share values.
Dually, if $\V{sp}$ includes permission to unseal, then we require $\safeunseal(\otype_b,\otype_e)$: the sealing predicates for otypes in $\crange{\otype_b}{\otype_b}$ must imply safety of sealed values, since the adversary can unseal them.

Finally, if $\V{sp}$ includes both unsealing and sealing permission, then the capability can also be used to $\instr{deinit}$ the enclave associated to the sealing capability (if any).
As such, we additionally require $\safeattest(\otype_{\X{b}},\otype_{\X{e}})$: if the otype \otype{} is associated to an enclave index \V{tidx}, then an invariant must own either the deinitialization authority for \V{tidx} or \V{tidx} is already deinitialized.

\section{Technical Definitions of the Program logic}
\label{sec:appendix_lversion_invariant}
\subsection{State Interpretation for Revocation}
Formally, the invariant maintains two relations.
The correspondence $\V{lr \phyLogReg{vm} phr}$ between
the logical register file $\V{lr}$
and the physical register file $\V{phr}$
for a given view map $\V{vm}$
states that
\begin{enumerate}
  \item for all the registers in the logical register file,
        the projection of the logical word to physical words
        matches with the words in the physical register file;
  \item all logical words in the register file
        always point to the current version of the addresses they can access.
\end{enumerate}
The correspondence $\V{lm \phyLogMem{vm} phm}$ between
the logical memory $\V{lm}$
and the physical memory $\V{phm}$
for a given view map $\V{vm}$
states that
\begin{enumerate}
  \item for all logical addresses in the logical memory,
        the version number is always lower or equal to the current version number,
        and the current version of the same address also exists in the logical memory;
  \item for all addresses in the view map,
        the logical address, for the current version, exists in the logical memory,
        it contains a word that is a root,
        and which projection to physical word matches with the content in the physical memory;
  \item for all reserved addresses in $\reservedAddresses$, the version number is fixed to some version
        $\vinit$.
\end{enumerate}
\[
  \begin{array}{l}
    \V{\versionMono(lm)(vm)} \eqdef \! \! \\
    \quad
    \forall \V{(a,v)} \in \V{dom(lm)}.\;
    v \le \V{vm[a]}
    \land
    (a, \V{vm[a]}) \in \V{dom(lm)}
    \vspace{1em}
    \\
    \V{\reachableCurrent(phm)(lm)(vm)} \eqdef \!\! \\
    \quad
    \left(
      \begin{array}{l}
      \forall a \in \V{dom(vm)}.\; \exists \V{lw}.\;
      \V{lm(a,vm[a])} = \V{lw} \\
      {} \land \V{\iscurrent(vm)(lw})
      \land
      \V{phm[a] = \stripword(lw)}
      \end{array}
      \right)
    \vspace{1em}
    \\
    \V{lr \phyLogReg{vm} phr} \eqdef \!\! \\
    \quad
    \V{\stripreg(lr) = phr} \land
    \left( \forall (\_,\V{lw}) \in \V{lr}.\; \V{\iscurrent(vm)(lw}) \right)
    \\
    \vspace{0em}
    \\
    \V{lm \phyLogMem{vm} phm} \eqdef \!\! \\
    \quad
    \left(
    \begin{array}{l}
      \V{\versionMono(lm)(vm)} \\
      {} \land \V{\reachableCurrent(phm)(lm)(vm)} \\
      {} \land
      \left(
      \forall a \in \reservedAddresses.\; \V{vm}[a] = \vinit
      \right)
    \end{array}
    \right)
\end{array}
\]
\label{fig:lversion_invariant}


\subsection{Enclaves Resources}
\begin{figure}[!ht]
  \centering
  \[
    \begin{array}{lcl}
      \enclaveLiveAuth(\V{live\_table})
      & \eqdef & \ownGhost{\gamma_{\X{Cur}}}{\authfull(\V{table} : \TIndex \rightharpoonup \SC{Ex}(\EId))} \\
      \enclaveLive{tidx}{I}
      & \eqdef & \ownGhost{\gamma_{\X{Cur}}}{\authfrag(\V{tidx} \mapsto I)} \\
      \enclavePrevAuth(\V{prev\_table})
      & \eqdef & \ownGhost{\gamma_{\X{Prev}}}{\authfull(\V{prev\_table} : \TIndex \rightharpoonup \SC{Ag}(\EId)))} \\
      \enclavePrev{tidx}
      & \eqdef & \exists I \ldotp \ownGhost{\gamma_{\X{Prev}}}{\authfrag(\V{tidx} \mapsto I)} \\
      \enclaveHistAuth(\V{all\_table})
      & \eqdef & \ownGhost{\gamma_{\X{All}}}{\authfull(\V{all\_table} : \TIndex \rightharpoonup \SC{Ag}(\EId)))} \\
      \enclaveHist{tidx}{I}
      & \eqdef & \ownGhost{\gamma_{\X{All}}}{\authfrag(\V{tidx} \mapsto I)}
    \end{array}
  \]
  \caption{\label{fig:enclave_resources} Definition of the enclaves resources. }
\end{figure}


\clearpage

\section{Adequacy}
\label{sec:appendix_adequacy}

Our specifications are written and proven in the \cerisier{} program logic.
To draw end-to-end conclusions about executions of our case studies in terms of only the operational semantics, we prove an \emph{adequacy theorem}.

The theorem is designed to derive invariant properties about individual memory locations, particularly an assertion flag stored at address $\aflag$.
Such properties should follow from an invariant $\knowInv{}{ \aflag \amapsto{\vinit} 0}$ but only if we can guarantee that the logical address cannot be revoked using instructions like $\instr{isunique}$ or $\instr{einit}$.
To this end, we single out a set of reserved addresses $\reservedAddresses$ which trusted code cannot revoke and which the adversary must not be given access to.
In return, the adequacy theorem can derive invariant properties about these addresses.
For simplicity, we assume here that $\reservedAddresses = \{ \aflag \}$, \ie~there is a single reserved address in memory that stores a flag keeping track of assertions.

The adequacy theorem states that if the machine starts in a suitable initial
state, and if the assertion flag $\aflag$ is initially $0$, then for any possible step of
execution, the assertion flag will remain $0$. In other words, none of the
asserts in the code will fail.

\begin{theorem}[End-to-end theorem for enclave attestation]
  \label{thm:adequacy}
  \newcommand{\bassert}{\V{b_{a}}}
  \newcommand{\eassert}{\V{e_{a}}}
  \newcommand{\badv}{\V{b_{adv}}}
  \newcommand{\eadv}{\V{e_{adv}}}
  \newcommand{\btcenclave}{\V{b_{tc\_enclave}}}
  \newcommand{\etcenclave}{\V{e_{tc\_enclave}}}

  Suppose disjoint memory fragments
  $\V{prog} : \crange{\V{b}}{\V{e}} \rightarrow \Word$,
  $\V{assert} : \crange{\bassert}{\eassert} \rightarrow \Word$,
  $\V{adv} : \crange{\badv}{\eadv} \rightarrow \Word$.

\begin{enumerate}
  \item the initial state of memory $\V{mem}$ satisfies:
        $
        \V{prog} \uplus \V{assert} \uplus \V{adv} \subseteq \V{mem}
        $
  \item $\crange{\V{b}}{\V{e}}$ contains the verifier's program;
  \item $\crange{\bassert}{\eassert}$ contains the \instr{assert} routine and its flag at address $\aflag$;
  \item the assertion flag is initially set to 0: $\V{mem}(\aflag) = 0$
  \item $\V{prog}$ contains a table linking $\V{assert}$:\\
    $\exists \V{data}, \V{table} \in \crange{b}{e},
            \V{mem}(\V{data}) =\ (\perm{ro}, \V{table}, \V{table} + 1, \V{table}) \land
            \V{mem}(\V{table}) =\ (\perm{e}, \bassert, \eassert, \bassert)$
  \item the adversary region contains only integer and capabilities pointing to its own region:\\
    $\forall a \in \dom(\V{adv}), \V{adv}(a) \in \ZZ \vee \X{inRegion}(\V{adv}(a),\badv,\eadv)$
  \item the initial state of registers $\V{reg}$ satisfies:\\
  $
    \V{reg}(\pc) = (\perm{rwx}, b, e, b), \quad
    \V{reg}(\reg{0}) = (\perm{rwx}, \badv, \eadv, \badv), \quad
    \V{reg}(r) \in \ZZ \; \text{, otherwise}
  $
  \item the initial state of the enclave table $\V{etbl}$ and the enclave counter $\V{ec}$ satisfies:\\
  $
          \forall \V{tidx} \in \dom(\V{etbl}), \V{tidx} < \V{ec} \quad \quad
          2 * \V{ec} \leq \OTypeMax
 $
  \item under the condition that $\aflag$ is the only reserved address ($\reservedAddresses = \{ \aflag \}$),
        the proof in the program logic that the initial configuration is safe given the invariants:
  \[
  \begin{array}{l}
  \exists \customEnclaves \ldotp \forall \V{reg} \ldotp
    \xxsmallh{a}~\knowInv{}{\X{assertInv}(\V{b_a}, \V{e_a}, \vinit, \aflag)},
    \xxsmallh{b}~\knowInv{}{ \aflag \amapsto{\vinit} 0},
    \xxsmallh{c}~\knowInv{}{ \systemInvariant(\customEnclaves) }
  \\[0.2em]
  \displaystyle
  \quad
  \vdash
    \left\{
    \begin{array}{l}
    \xxsmallh{d}~\pc \rmapsto (\perm{rwx}, b, e, b, \vinit) \\
    {} \ast \xxsmallh{e}~\reg{0} \rmapsto (\perm{rwx}, \badv, \eadv, \badv, \vinit) \\[0.3em]
    {} \ast \xxsmallh{f}~\Sep_{\substack{(r, w) \in \V{reg}, \\ r \notin \{\pc, \reg{0}\}}} r \rmapsto z \ast \pure{z \in \ZZ} \\
     {} \ast \xxsmallh{g}~\Sep_{\substack{(a, w) \in \V{prog}, \\ a \notin \{ \V{data},\V{table} \} }} a \amapsto{\vinit} w \\
     {} \ast \xxsmallh{h}~\V{data} \amapsto{\vinit} (\perm{ro}, \V{table}, \V{table} + 1, \V{table}, \vinit)\\
     {} \ast \xxsmallh{i}~\V{table} \amapsto{\vinit} (\perm{e}, \bassert, \eassert, \bassert, \vinit) \\
     {} \ast \xxsmallh{j}~\Sep_{\substack{( a, z) \in \V{adv} }} a \amapsto{\vinit} z \\[0.2em]
   \end{array}
    \right\} \rightsquigarrow \scaleobj{0.7}{\bullet}
  \end{array}
        \]
\end{enumerate}

\vspace{0.5em}

Then, for any $\V{reg'}$, $\V{mem'}$, $\V{etbl'}$ and $\V{ec'}$, if
$(\V{reg}, \V{mem}, \V{etbl}, \V{ec}) \longrightarrow^* (\V{reg'}, \V{mem'}, \V{etbl'}, \V{ec'})$, then
$\V{mem'}(\aflag) = 0$.

\end{theorem}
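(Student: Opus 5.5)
The plan is the standard Iris adequacy argument: build a model of the ghost state from the initial physical configuration, use the proof in hypothesis (9) to obtain a weakest precondition, and then invoke Iris' generic adequacy theorem for invariants. Iris adequacy says that if an invariant $\knowInv{}{\aflag \amapsto{\vinit} 0}$ is allocated at the start and the configuration satisfies a weakest precondition, then at every reachable state the invariant content holds up to the state interpretation. Concretely, we will:
\begin{enumerate}
\item Fix the initial state $(\V{reg},\V{mem},\V{etbl},\V{ec})$.
\item Allocate the ghost state.
\item Allocate the three invariants (a)--(c).
\item Apply hypothesis (9) to get $\contspecbase{\ldots}{}$.
\item Unfold this into the underlying Iris weakest precondition and conclude with Iris' invariant-preserving adequacy lemma, \texttt{wp\_invariance}.
\end{enumerate}

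Step (2) is the allocation of the ghost state that the state interpretation requires. For memory, we take the logical memory $\V{lm}$ that assigns every address of $\V{mem}$ the version $\vinit$. We take the view map $\V{vm}$ constantly $\vinit$, and we obtain logical words by tagging each capability with $\vinit$. For registers we do the same. We then check $\V{lr} \phyLogReg{vm} \V{phr}$ and $\V{lm} \phyLogMem{vm} \V{phm}$ from the appendix definitions. $\versionMono$ and the reserved-address condition are immediate because all versions equal $\vinit$. $\reachableCurrent$ and $\iscurrent$ also hold, since every capability is tagged with the current version $\vinit$ of every address in range; hypothesis (6) and the shape of (7) ensure that capabilities only point to regions present in $\V{mem}$. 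Next, we allocate the enclave authoritative resources $\enclaveHistAuth$, $\enclaveLiveAuth$ and $\enclavePrevAuth$ from $\V{etbl}$. For every $\V{tidx} < \V{ec}$ we allocate saved predicates $\sealPred(\otype)(P)$ for $\otype \in \crange{0}{2\V{ec}+1}$. This uses the bound in (8) to stay within $\OTypeMax$ and the fact in (8) that the domain of $\V{etbl}$ is below $\V{ec}$.

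Step (3) is where the main obstacle lies: establishing $\systemInvariant(\customEnclaves)$ initially for the $\customEnclaves$ supplied by (9). Parts (A) and (B) follow from the allocations in step (2). Part (C), $\customEnclavesPred$, must hold for every pre-existing table entry whose identity lies in $\dom(\customEnclaves)$. To discharge it, I would first try to choose the seal predicates of pre-existing indices as $\customEnclaves(I).\V{P_{enc}}$ and $\customEnclaves(I).\V{P_{sign}}$ whenever $\V{etbl}(\V{tidx}) = I \in \dom(\customEnclaves)$, and as $\safeVl$ otherwise. Since the saved predicates are allocated before the invariant, this choice is free. The persistent wand in (C) then follows from agreement of the $\enclaveHist{tidx}{I}$ fragment with the authoritative table. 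If this choice clashes with well-definedness, because $\customEnclaves$ is existentially chosen before the ghost names exist, I would instead allocate the predicates after destructing the existential.

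The remaining resources are straightforward. The points-to resources (d)--(j) are split off from the big-separating-conjunction fragments of the logical memory and registers. The assert invariant $\X{assertInv}$ is allocated from the points-to resources of $\crange{\V{b_a}}{\V{e_a}}$ minus $\aflag$, and (b) is allocated from $\aflag \amapsto{\vinit} 0$. For the final step we need that owning $\aflag \amapsto{\vinit} 0$ together with the state interpretation forces $\V{mem'}(\aflag) = 0$. This holds because $\aflag \in \reservedAddresses$, so $\V{vm}[\aflag] = \vinit$ is preserved, and $\reachableCurrent$ then links the logical word at version $\vinit$ to the physical word. We open the invariant inside the fancy update given by adequacy at any reachable configuration and conclude $\V{mem'}(\aflag) = \stripword(0) = 0$.
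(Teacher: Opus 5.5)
Your proposal is correct and follows essentially the same route as the paper, which derives the theorem from Iris adequacy along the lines of the Cerise adequacy proof. As you do, the extra work consists of initializing the logical memory (all addresses and capabilities at version $\vinit$, plus the enclave tables, seal predicates and $\systemInvariant(\customEnclaves)$), and then using the reserved-address clause of $\phyLogMem{vm}$ to transfer $\aflag \amapsto{\vinit} 0$ to the physical memory at every reachable state; one small correction is that rootedness of the initial logical words does not rest on hypotheses (6) and (7), since with the view map constantly $\vinit$ on the whole (total) memory every word tagged $\vinit$ is trivially current.
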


\begin{proof}
The proof follows from the \iris{} adequacy theorem
and the \cerise~\citep{cerise} adequacy theorem,
but requires non-trivial reasoning,
due to the logical memory added for revocation.
The details of the proof can be found in the \rocq{} development.
\end{proof}

Conditions (1-6) are conditions over the initial memory layout,
assuming that the main (verifier's) program,
the adversary program,
and the assertion routine are properly initialized.
Condition (7) states that initially, $\pc$ contains the verifier's program,
the register $\reg{0}$ contains the capability pointing to the adversary,
and all the other registers contain integers.
Condition (8) states that the initial enclave count is greater than
the highest index in the enclave table,
and there are still some available otypes.
Finally, Condition (9) gives a full specification, in the program logic
of the end-to-end system,
under the condition that the version of the assertion flag cannot be updated.
Each predicate in the program logic reflects
conditions on the initial state of the machine in the operational semantics:
\begin{itemize}
  \item \xxsmallh{a} is the invariant for the closure of the \instr{assert} routine.
        It reflects Condition (3).
  \item \xxsmallh{b} is the invariant stating that the assertion flag is always 0.
        It reflects Condition (4).
  \item \xxsmallh{c} is the system invariant  described in \Cref{sec:ftlr}.
        It essentially reflects Condition (8).
  \item \xxsmallh{d}, \xxsmallh{e} and \xxsmallh{f} are the register points-to predicates.
        It reflects the initial state of the register file described in Condition (7).
  \item \xxsmallh{g} contains the points-to predicates of the program's memory region,
        excluding the data address and linking table.
        $\V{prog}$ is a map from addresses to instructions of the program,
        made concrete when instantiating the theorem.
        \xxsmallh{h} is the points-to predicate of the data address,
        containing a capability pointing to the linking table.
        \xxsmallh{i} contains the memory region of the linking table,
        containing the sentry capability to the \instr{assert} routine.
        All together, they reflect Conditions (2) and (5).
  \item Finally, \xxsmallh{j} is the memory region of the adversary.
        It reflects Condition (6).
\end{itemize}

When using the theorem in the case studies of \Cref{sec:case_studies},
we always assume Conditions (1-8),
which are assumptions about the initial state of the machine;
but Condition (9) is proved.

\end{document}
